\newcommand{\nd}{$\mathsf{nd}$}
\newcommand{\dts}{$\mathsf{dts}$}
\newcommand{\cvd}{$\mathsf{cvd}$}
\newcommand{\FPT}{$\mathsf{FPT}$ }
\newcommand{\CNR}{\textsc{Cops and Robber}\xspace}
\newcommand{\RBDS}{\textsc{Red-Blue Dominating Set}\xspace}
\newcommand{\CR}{\textsc{CnR}\xspace}
\newcommand{\CFR}{\textsc{Cops and Fast Robber}\xspace}
\newcommand{\LCR}{\textsc{Lazy CnR}\xspace}
\newcommand{\ACR}{\textsc{Fully Active CnR}\xspace}
\newcommand{\CAR}{\textsc{Cops and Attacking Robber}\xspace}
\newcommand{\SCR}{\textsc{Surrounding CnR}\xspace}
\newcommand{\NPoly}{\textsf{NP} $\subseteq$ \textsf{coNP}$/$\textsf{poly}\xspace}
\newenvironment{proofsketch}{%
  \proof}{\endproof}
\newenvironment{proofofclaim}{%
  \proof}{\endproof}
\newcommand{\R}{\small \mathcal{R}}
\newcommand{\C}{\small \mathcal{C}}
\newtheorem{theorem}{Theorem}
\crefname{theorem}{theorem}{theorems}
\newtheorem{lemma}[theorem]{Lemma}
\crefname{lemma}{lemma}{lemmas}
\newtheorem{proposition}[theorem]{Proposition}
\crefname{proposition}{proposition}{propositions}
\crefname{result}{result}{results}
\newtheorem{corollary}[theorem]{Corollary}
\crefname{corollary}{corollary}{corollaries}
\crefname{fact}{fact}{facts}
\newtheorem{observation}[theorem]{Observation}
\crefname{observation}{observation}{observations}
\newtheorem{question}[theorem]{Question}
\crefname{question}{question}{questions}
\newtheorem{claim}[theorem]{Claim}
\crefname{claim}{claim}{claims}
\newtheorem{note}[theorem]{Note}
\crefname{note}{note}{notes}
\crefname{conj}{conjecture}{conjectures}
\crefname{definition}{definition}{definitions}
\crefname{remark}{remark}{remarks}
\newtheorem{theorem1}{Theorem}
\newtheorem{RR}[theorem1]{Reduction Rule}
\tikzstyle{noeud}=[circle,inner sep=2, minimum size =3 pt, line width = 1pt, draw=black, fill=white]
\newcommand{\Pb}[4]{%
\begin{center}
  \begin{tabular}{|l|}%
  \hline
    \begin{minipage}[c]{0.95\textwidth}
      \smallskip%
      \par\noindent%
      #1%
      \par\noindent%
      \textbf{\textsf{Input}}: #2%
      \par\noindent%
      \textbf{\textsf{#3}}: #4 
      \smallskip%
      \par\noindent%
    \end{minipage}
  \\\hline
  \end{tabular}%
\end{center}
}%
\title{Parameterized Analysis of the Cops and Robber Problem}
\author{
{Harmender Gahlawat} and {Meirav Zehavi}\\
\mbox{}\\
{\small  Ben-Gurion University of the Negev, Beersheba, Israel}\\}
\begin{document}

\maketitle

\begin{abstract}
\textit{Pursuit-evasion games} have been intensively studied for several decades due to their numerous applications in artificial intelligence, robot motion planning, database theory, distributed computing, and algorithmic theory. \textsc{Cops and Robber} (\CR) is one of the most well-known pursuit-evasion games played on graphs, where multiple \textit{cops} pursue a single  \textit{robber}. The aim is to compute the \textit{cop number} of a graph, $k$, which is the minimum number of cops that ensures the \textit{capture} of the robber. 

From the viewpoint of parameterized complexity, \CR is W[2]-hard parameterized by $k$~[Fomin et al., TCS, 2010]. 
Thus, we study structural parameters of the input graph. We begin with the \textit{vertex cover number} ($\mathsf{vcn}$). First, we establish that $k \leq \frac{\mathsf{vcn}}{3}+1$. Second, we prove that \CR parameterized by $\mathsf{vcn}$ is \FPT by designing an exponential kernel. We complement this result by showing that it is unlikely for \CR parameterized by $\mathsf{vcn}$ to admit a polynomial compression. We extend our exponential kernels to the parameters \textit{cluster vertex deletion number} and \textit{deletion to stars number}, and design a linear vertex kernel for \textit{neighborhood diversity}.  Additionally, we extend all of our results to several well-studied variations of \CR. 
\end{abstract}

\section{Introduction}\label{S:Intro}
In \textit{pursuit-evasion}, a set of agents, called \textit{pursuers}, plan to catch one or multiple \textit{evaders}. Classically, pursuit-evasion games were played on geometric setups, where pursuers and evaders move on the plane following some rules~\cite{diffrentialPursuitBook,diffPursuitBook,bookPursuit-EvasionGames}. Parsons~\cite{parsons1,parsons2} formulated pursuit-evasion on graphs to model the search for a person trapped in caves, giving rise to the field of graph searching. Since then, pursuit-evasion has been studied extensively, having applications in artificial intelligence~\cite{app1}, robot motion planning~\cite{appRobotics,appRobotics2}, constraint satisfaction and database theory~\cite{appDB1,appDB2,appDB3}, distributed computing~\cite{appDC2,appDC} and network decontamination~\cite{nisseRecent}, and significant implications in graph theory and algorithms~\cite{seymour,appVertexSeparation,cop-decs,appLogic}. 
 

\CNR (\CR) is one of the most intensively studied pursuit-evasion games on graphs, where a set of cops pursue a single robber. Players move in discrete time steps alternately, starting with the cops. In each move, a player can move to an adjacent vertex, and the cops win by \textit{capturing} the robber (i.e., if a cop and the robber occupy the same vertex). The goal is to compute the \textit{cop number} of a graph $G$, denoted $\mathsf{c}(G)$, which is the minimum number of cops required to win in $G$. We define the game formally in Section~2. \CR is well studied in the artificial intelligence literature under the name \textsc{Moving Target Pursuit} (\textsf{MTP})~\cite{MTS}, where we consider sub-optimal but faster strategies from an applicative point of view. The results have found numerous applications in game design, police chasing, path planning, and robot motion planning~\cite{MTP5,MTP3,MTP2,MTP4,MTP1}. 

Determining the parameterized complexity of games is a well-studied research topic~\cite{paraGamesScott,paraConnectionGames,paraPositionalGames}. 
Most pursuit-evasion games are, in fact, AW[*]-hard~\cite{paraPursuit-Evasion}. In particular, \CR is W[2]-hard parameterized by $\mathsf{c}(G)$~\cite{fomin}. Thus, we consider structural parameterizations, focusing on \textit{kernelization}, also known as polynomial-time preprocessing with a parametric guarantee. Due to the profound impact of preprocessing, kernelization was termed ``the lost continent of polynomial time''~\cite{kernelApplication}. We begin with the most studied structural parameter in parameterized complexity: the \textit{vertex cover number} ($\mathsf{vcn}$) of the input graph. We bound $\mathsf{c}(G)$ in terms of $\mathsf{vcn}$, as well as achieve both positive and negative results concerning the kernelization complexity of \CR parameterized by $\mathsf{vcn}$. We generalize our kernelization results to the smaller parameters \textit{cluster vertex deletion number} (\cvd) and \textit{deletion to stars number} (\dts), as well as to the parameter \textit{neighborhood diversity} (\nd). Furthermore, we extend all our results to several well-studied variants of \CR.


The choice of $\mathsf{vcn}$ as a parameter to study pursuit-evasion games is natural due to various scenarios where $\mathsf{vcn}$ is significantly smaller than the graph size. For example, this includes scenarios where we model the existence of one or few (possibly interconnected) central hubs---for illustration, suppose an intruder is hiding in a system of buildings where we have only few corridors but a large number of rooms, or suppose we have few virtual servers with many stations (e.g., of private users) that can communicate only with the servers. Furthermore, $\mathsf{vcn}$ is one of the most efficiently computable parameters from both approximation~\cite{bookApprox} and parameterized~\cite{bookParameterized} points of view, making it fit from an applicative perspective even when a vertex cover is not given along with the input. Moreover, $\mathsf{vcn}$ is the best choice for proving negative results---indeed, our negative result on the kernelization complexity of $\CR$~for $\mathsf{vcn}$ implies the same for many other well-known smaller parameters such as treewidth, treedepth and feedback vertex set~\cite{bookKernelization}. One shortcoming of $\mathsf{vcn}$ as a parameter is that it is very high for some simple (and easy to resolve) dense graphs like complete graphs. However, we generalize our kernel to \cvd, which is small for these dense graphs, and for \dts. Furthermore, we design a \textit{linear} kernel for the well-studied parameter \nd. We further discuss the utility of our kernels in the Conclusion.


\subsection{Brief Survey}

\CR was independently introduced by Quilliot~\cite{qui} and by  Nowakowski and Winkler~\cite{nowakowski} with exactly one cop\footnote{In fact, a specific instance of \CR on a specific graph was given as a puzzle in Problem~395 of the book Amusements in Mathematics~\cite{bookAmusements} already in 1917.}. Aigner and Fromme~\cite{aigner} generalized the game to multiple cops and defined the \textit{cop number} of a graph. 
The notion of cop number and some fundamental techniques introduced by Aigner and Fromme~\cite{aigner} yielded a plethora of results on this topic. For more details, we refer the reader to the book~\cite{bonatobook}.

The computational complexity of finding the cop number of a graph has been a challenging subject of research. On the positive side, Berarducci and Intrigila~\cite{berarducci} gave a backtracking algorithm that decides whether $G$ is $k$-copwin in $\mathcal{O}(n^{2k+1})$ time.  
On the negative side, Fomin et al.~\cite{fomin} proved that determining whether $G$ is $k$-copwin is NP-hard, and  W[2]-hard parameterized by $k$. Moreover, Mamino~\cite{mamino} showed that the game is PSPACE-hard, and later, Kinnersley~\cite{kinnersley} proved that determining the cop number of a graph is, in fact, EXPTIME-complete. Recently, Brandt et al.~\cite{capture-bound} provided fine-grained lower bounds, proving that the time complexity of any algorithm for \CR is $\Omega(n^{k-o(1)})$ conditioned on \textit{Strong Exponential Time Hypothesis} ($\mathsf{SETH}$), and $2^{\Omega (\sqrt{n})}$ conditioned on \textit{Exponential Time Hypothesis} ($\mathsf{ETH}$).

Since \CR admits an XP-time algorithm, it is sensible to bound the cop number for various graph classes or by some structural parameters. Nowadays, we know that the cop number is 3 for the class of planar graphs~\cite{aigner} and toroidal graphs~\cite{lehner}, 9 for unit-disk graphs~\cite{udg}, 13 for string graphs~\cite{ourString}, and is bounded for bounded genus graphs~\cite{bowler} and  minor-free graphs~\cite{andreae}. Moreover, it is known that the cop number of a graph $G$ is at most $\frac{\mathsf{tw}(G)}{2}+1$~\cite{joret}, where $\mathsf{tw}(G)$ denotes the treewidth of $G$, and at most $\mathsf{cw}(G)$~\cite{fomin}, where $\mathsf{cw}(G)$ denotes the clique-width of $G$.



\subsection{Our Contribution}
We conduct a comprehensive analysis of \CR parameterized by $\mathsf{vcn}$. We start by bounding the cop number of a graph by its vertex cover number:
\begin{restatable}{theorem}{VCBound}\label{th:VCbound}
For a graph $G$, $\mathsf{c}(G) \leq \frac{\mathsf{vcn}}{3}+1$.
\end{restatable}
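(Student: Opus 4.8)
The plan is to design an explicit cop strategy and count cops against a fixed minimum vertex cover. Let $S$ be a minimum vertex cover with $|S|=\mathsf{vcn}$, and let $I=V(G)\setminus S$, which is independent. The key structural fact I would exploit is that, since $I$ is independent, a robber standing on an $I$-vertex can only relocate to another $I$-vertex by transiting through $S$. The central tool is a \emph{guarding principle}: a cop parked permanently at a vertex $x$ forbids the robber from ever entering $N(x)$, because stepping onto any $y\in N(x)$ lets the guard move $x\to y$ and capture on the following turn. Hence a set $D$ of stationary ``guard'' cops confines the robber to the induced subgraph $H=G-N[D]$. A first, weak bound is $\mathsf{c}(G)\le \gamma(G)\le \mathsf{vcn}$, obtained by parking cops on a dominating set (so $H$ is empty); the entire difficulty of the theorem is in gaining the factor $3$.

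Building on this, I would reduce the theorem to the following statement: one can choose a guard set $D$ with $|D|\le \mathsf{vcn}/3$ such that the trapping region $H=G-N[D]$ is \emph{cop-win} (dismantlable), after which a single extra pursuer captures the confined robber by running the cop-win strategy of $H$. This yields $\mathsf{c}(G)\le |D|+1\le \tfrac{\mathsf{vcn}}{3}+1$, with the ``$+1$'' being exactly the pursuer. As a sanity check on $C_6$ (with $\mathsf{vcn}=3$): parking one guard on an $S$-vertex deletes its closed neighborhood and leaves $H$ equal to a path on three vertices, which is cop-win, so two cops suffice, matching $\tfrac{\mathsf{vcn}}{3}+1=2$. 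When $G$ is already cop-win I would simply take $D=\emptyset$, recovering $\mathsf{c}(G)\le 1$.

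The heart of the argument, and the step I expect to be the main obstacle, is producing such a $D$ of size at most $\mathsf{vcn}/3$ in \emph{every} graph. My intended route is to group $S$ into triples and let each guard ``neutralise'' one triple: removing a guard's closed neighborhood should, on average, delete three cover vertices and destroy the short cycles they span, leaving a dismantlable remainder. Two difficulties must be overcome simultaneously. First, the ratio: in degenerate or bipartite instances a guard placed inside $S$ deletes only itself from the cover, so to keep the count at $\mathsf{vcn}/3$ I would instead place guards on $I$-vertices dominating several cover vertices, and pre-simplify $I$ by retracting dominated (twin) vertices so that each retained guard genuinely accounts for three cover vertices. Second, the trap must remain catchable: I must ensure $H$ is cop-win and has no stray isolated components that a single pursuer cannot reach, which I would argue via a corner-elimination (dismantlability) analysis of the remaining vertex-cover/independent-set structure. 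Establishing that a guard set of size at most $\mathsf{vcn}/3$ can always be chosen to reduce $G$ to such a cop-win trap is the technically hardest part and carries the real weight of the theorem.
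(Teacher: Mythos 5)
Your overall architecture---park guards to confine the robber, then send one extra chaser---shares its skeleton with the paper's proof (which also charges each cop three cover vertices and finishes with a constant number of chasers), but the load-bearing claim of your plan, namely that one can always choose a \emph{stationary} guard set $D$ with $|D|\le \mathsf{vcn}/3$ such that $G-N[D]$ is cop-win, is not only left unproven (as you concede) but is in fact false, so the plan cannot be completed as stated. Take any diameter-$2$ graph $H$ with large cop number (such graphs exist; the paper uses them in Lemma~\ref{L:Best}) and subdivide every edge once to obtain $G$; the original vertices form a vertex cover $U$ of size $t=|V(H)|$ and the subdivision vertices form $I$. Every vertex of $I$ has degree exactly $2$ and $U$ is independent in $G$, so a parked cop neutralises at most \emph{two} cover vertices; moreover no two subdivision vertices have comparable neighbourhoods, so your twin-retraction preprocessing is vacuous here. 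Consequently, after placing any $t/3$ stationary guards, at least $t/3$ cover vertices survive, and the robber's region contains the full subdivision of the subgraph of $H$ induced by the survivors. That region has girth at least $6$, and a graph of girth at least $5$ is cop-win exactly when it is a forest (its only corners are vertices of degree at most one, and removing those never touches the $2$-core); so your claim would force $H$ to contain an induced forest on at least $t/3$ vertices. This fails, for instance, for iterated triangle blow-ups of a high-cop-number diameter-$2$ graph: blow-ups preserve diameter $2$, the cop number does not drop (the base graph is a retract, in the reflexive sense used in this literature), yet the maximum induced forest falls below any constant fraction of the order. So with stationary guards alone, neither the ratio ``three cover vertices per cop'' nor a cop-win remainder is achievable.

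The paper's proof escapes both obstacles in ways your proposal excludes. First, its guards are not all stationary: its third rule invokes the Aigner--Fromme theorem (Proposition~\ref{P:aigner}) that a single cop can \emph{dynamically} guard an isometric path, applied to isometric paths containing at least three cover vertices---precisely the device needed in the subdivided graphs above, where a path $u\!-\!m_{uv}\!-\!v\!-\!m_{vw}\!-\!w$ between cover vertices at distance $2$ in $H$ is isometric and carries three cover vertices even though no single vertex dominates three of them. Second, the paper never demands that the leftover region be cop-win: it proves only that, once its rules stop applying, the robber's component has cop number at most $2$ (Lemma~\ref{L:VCbound}, using the structural fact that any two surviving cover vertices are adjacent or share an $I$-neighbour), and it rescues the arithmetic with a case split---either the rules fired $t/3$ times and one extra cop suffices, or they fired at most $t/3-1$ times and two extra cops still give $t/3+1$ in total. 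That slack (``at most $2$, not $1$'') is essential, since the residual component can genuinely have cop number $2$; your framework, which insists on a dismantlable remainder within the same guard budget, leaves no room for it.
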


The proof is based on the application of three reduction rules. Each of our rules controls its own cop that, in particular, guards at least three vertices that belong to the vertex cover. Once our rules are no longer applicable, we exhibit that the remaining unguarded part of the graph is of a special form. In particular, we exploit this special form to prove that, now, the usage of only two additional cops suffices.
We complement Theorem~\ref{th:VCbound} with an argument (Lemma~\ref{L:Best}) that it might be difficult to improve this bound further using techniques similar to ours.

Second, we prove that \CR parameterized by $\mathsf{vcn}$ is $\mathsf{FPT}$ by designing a kernelization algorithm:

\begin{restatable}{theorem}{VCKernel}\label{th:Kernel}
\CR parameterized by $\mathsf{vcn}$ admits a kernel with at most $\mathsf{vcn}+ \frac{2^\mathsf{vcn}}{\sqrt{\mathsf{vcn}}}$ vertices.
\end{restatable}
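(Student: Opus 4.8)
The plan is to build a kernel whose size is governed entirely by the interaction between the robber's possible ``escape structure'' and the vertex cover. Let $S$ be a minimum vertex cover of $G$, so $|S| = \mathsf{vcn}$, and let $I = V(G) \setminus S$ be the (independent) remainder. The key observation I would exploit is that every vertex $v \in I$ is completely described, from the game's perspective, by its neighborhood $N(v) \subseteq S$. Thus the vertices of $I$ partition into at most $2^{\mathsf{vcn}}$ classes, one per possible subset of $S$ (fewer, since only realized neighborhoods occur). The heart of the kernelization is a reduction rule asserting that within each such class, keeping many \emph{twins} is useless to the robber: if two vertices in $I$ have the same neighborhood in $S$, they are interchangeable as hiding spots, and a cop strategy (or robber strategy) translates verbatim between them.

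The main steps I would carry out are as follows. First I would formalize the notion of \emph{false twins} in $I$ and prove a safeness lemma: for a fixed number of cops $k$, the robber wins on $G$ if and only if the robber wins on the graph obtained by deleting one vertex from a class that contains ``enough'' twins. The crucial question is how many twins suffice; here I expect the bound to depend on $k$, since a cop sitting on one twin does not block the robber from fleeing to another twin with the identical neighborhood. Because we already have Theorem~\ref{th:VCbound} giving $k \leq \frac{\mathsf{vcn}}{3}+1$, I can treat $k$ as a function of $\mathsf{vcn}$ and fold it into the final count. Intuitively, it suffices to retain roughly $k+1$ representatives per neighborhood type, so that the cops can never exhaust the robber's options within a type without committing enough cops to dominate the type's neighborhood in $S$.

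Next I would assemble the size bound. After exhaustively applying the twin-reduction rule, $I$ retains at most $(k+1)$ vertices per realized neighborhood type, and there are at most $2^{\mathsf{vcn}}$ types. A naive count gives $\mathsf{vcn} + (k+1)\cdot 2^{\mathsf{vcn}}$, which is too large. To reach the stated bound $\mathsf{vcn} + \frac{2^{\mathsf{vcn}}}{\sqrt{\mathsf{vcn}}}$, I would sharpen the analysis in two ways: first, only neighborhoods of moderate size matter, so I would argue that types with very small or very large neighborhood in $S$ either never help the robber or can be handled by a cop guarding them directly (invoking a guarding argument analogous to the one behind Theorem~\ref{th:VCbound}); second, I would deploy the central-binomial estimate $\binom{\mathsf{vcn}}{\mathsf{vcn}/2} = \Theta\!\left(\frac{2^{\mathsf{vcn}}}{\sqrt{\mathsf{vcn}}}\right)$, which is precisely where the $\sqrt{\mathsf{vcn}}$ in the denominator originates. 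This strongly suggests that the correct reduction keeps essentially one representative per \emph{useful} type, and that the useful types are confined to a single (or $O(1)$) layer of the subset lattice, namely those of size close to $\mathsf{vcn}/2$.

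The main obstacle I anticipate is proving the safeness of discarding whole neighborhood types rather than merely thinning each type down to $k+1$ survivors, because that is what separates the crude $2^{\mathsf{vcn}}$-times-$k$ bound from the desired $2^{\mathsf{vcn}}/\sqrt{\mathsf{vcn}}$ bound. Establishing that a type with an ``extremal'' neighborhood confers no genuine advantage to the robber requires a careful game-theoretic domination argument: I would show that if a robber ever benefits from fleeing to such a vertex, a single dedicated cop can pre-position to cover all vertices of that type using the guarding technique, thereby neutralizing it without increasing the cop budget. The delicate point is to do this simultaneously for all extremal types while reusing cops across types; I expect to need a counting argument ensuring the total number of ``surviving'' types is $O(2^{\mathsf{vcn}}/\sqrt{\mathsf{vcn}})$, after which the retained representatives plus the $\mathsf{vcn}$ cover vertices yield the claimed kernel size. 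Verifying that the reduction runs in polynomial time is routine, since both identifying twins and checking realized neighborhoods can be done by scanning adjacency lists.
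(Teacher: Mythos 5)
There is a genuine gap, and it sits exactly where you anticipate it: your plan cannot get from $\mathsf{vcn} + (k+1)\cdot 2^{\mathsf{vcn}}$ down to $\mathsf{vcn} + \frac{2^{\mathsf{vcn}}}{\sqrt{\mathsf{vcn}}}$. Your proposed repair --- declaring types with ``very small or very large'' neighborhoods useless and confining the survivors to the middle layer of the subset lattice --- is unjustified and false as stated: a type whose neighborhood is incomparable to all other types can be essential to the answer regardless of its cardinality, so nothing restricts the relevant types to sets of size about $\mathsf{vcn}/2$, and no guarding argument lets a bounded cop budget neutralize whole types across the game. The paper reaches the bound with a different, stronger rule: delete $u \in I$ whenever there exists $v \in I$ with $N(u) \subseteq N(v)$ (subset, not equality), keeping a \emph{single} representative. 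Safeness for $k \geq 2$ is by a retract/image argument: the cops play the winning strategy of $G - u$ against the robber's image (which maps $u \mapsto v$ and fixes everything else); once the image is caught, if the robber actually sits on $u$ then a cop stands on $v$, and since $N(u) \subseteq N(v)$ every move out of $u$ lands in that cop's closed neighborhood --- the robber is pinned, and a second cop walks to $u$. This also dissolves your stated worry that a cop on one twin cannot block flight to another twin: the cop never needs to block flight between twins during play (the image argument absorbs that); it only needs to pin the robber at the very end. After exhausting the rule, the neighborhoods of the surviving vertices of $I$ form an antichain in the subset lattice of the cover, and Sperner's theorem bounds \emph{any} antichain (not just one layer) by $\binom{\mathsf{vcn}}{\lfloor \mathsf{vcn}/2\rfloor} \leq \frac{2^{\mathsf{vcn}}}{\sqrt{\mathsf{vcn}}}$; that, and not a restriction to middle-sized neighborhoods, is the true origin of the $\sqrt{\mathsf{vcn}}$.

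A second omission: the one-representative rule is provably unsafe for $k=1$ (in $C_4$ the two diagonal vertices satisfy $N(u) \subseteq N(v)$, yet deleting one drops the cop number from $2$ to $1$), so the paper first disposes of $k = 1$ by running the $\mathcal{O}(n^3)$ XP algorithm and of $k \geq \frac{\mathsf{vcn}}{3}+1$ by Theorem~\ref{th:VCbound}, and only then deletes dominated vertices. Since any argument that reaches the stated bound must thin each type down to essentially one survivor, you would face the same small-$k$ failure mode; your proposal invokes Theorem~\ref{th:VCbound} only to bound $k$ inside the size count and never addresses it.
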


Our kernel is also based on the application of reduction rules. However, these rules are very different than those used for the proof of Theorem 1. While our main rule is quite standard in kernelization (involving the removal of, in particular, false twins), the proof of its correctness is (arguably) not.
Theorem~\ref{th:Kernel}, along with Theorem~\ref{th:VCbound} and an XP-algorithm (Proposition~\ref{P:XP}), gives the following immediate corollary:

\begin{restatable}{corollary}{VCFPT}\label{C:paraAlgo}
\CR is \FPT parameterized by $\mathsf{vcn}$, and is solvable in $ \big{(}\mathsf{vcn}+\frac{2^\mathsf{vcn}}{\sqrt{\mathsf{vcn}}}\big{)}^{\!\frac{\mathsf{vcn}}{3}+2}\!\cdot n^{\mathcal{O}(1)}$ time.
\end{restatable}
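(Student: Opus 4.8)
The plan is to combine the three preceding results in the standard way that a kernelization together with a decision procedure yields fixed-parameter tractability. Recall the general principle: a decidable parameterized problem that admits a kernel is automatically \FPT, since one may first shrink the instance to a size bounded by a function of the parameter and then solve it by any (possibly brute-force) decision algorithm, whose running time then depends only on the parameter together with a polynomial in the original input size. Here the decidability is witnessed by the XP-algorithm of Proposition~\ref{P:XP}, and the kernel by Theorem~\ref{th:Kernel}; hence membership in \FPT is immediate, and the only real work is to read off the explicit running time.

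Concretely, I would proceed as follows. First, apply the kernelization algorithm of Theorem~\ref{th:Kernel} in time $n^{\mathcal{O}(1)}$ to obtain an equivalent instance whose graph $G'$ has at most $N := \mathsf{vcn} + \frac{2^{\mathsf{vcn}}}{\sqrt{\mathsf{vcn}}}$ vertices and whose vertex cover number does not exceed that of $G$. Second, invoke Theorem~\ref{th:VCbound} on $G'$: its cop number is at most $\frac{\mathsf{vcn}}{3}+1$, so to determine $\mathsf{c}(G')=\mathsf{c}(G)$ it suffices to test, for $k=1,\dots,\frac{\mathsf{vcn}}{3}+1$, whether $G'$ is $k$-copwin. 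Third, run the XP-algorithm of Proposition~\ref{P:XP} for each such $k$ on the $N$-vertex graph $G'$; since that algorithm decides the $k$-cop game in $N^{k+\mathcal{O}(1)}$ time, the dominant call is the one with $k=\frac{\mathsf{vcn}}{3}+1$, costing $N^{\frac{\mathsf{vcn}}{3}+2}$. Multiplying by the polynomial kernelization time and absorbing the $\mathcal{O}(\mathsf{vcn})$ iterations over $k$ yields the claimed bound $\big(\mathsf{vcn}+\frac{2^{\mathsf{vcn}}}{\sqrt{\mathsf{vcn}}}\big)^{\frac{\mathsf{vcn}}{3}+2}\cdot n^{\mathcal{O}(1)}$.

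There is no deep obstacle here, as the statement is an immediate corollary; the points that require care are bookkeeping rather than ideas. The first is to confirm that the reduction rules underlying Theorem~\ref{th:Kernel} neither increase the vertex cover number nor alter the cop number, so that the bound of Theorem~\ref{th:VCbound} legitimately applies to the reduced graph $G'$ and solving on $G'$ answers the original question. The second is to verify that the exponent in Proposition~\ref{P:XP} is indeed $k+\mathcal{O}(1)$ (rather than the $2k+1$ of the naive backtracking algorithm), since this is exactly what makes the final exponent $\frac{\mathsf{vcn}}{3}+2 = (\frac{\mathsf{vcn}}{3}+1)+1$ come out as stated; with a weaker XP bound one would still obtain \FPT, but with a larger exponent.
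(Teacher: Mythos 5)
Your proposal is correct and takes essentially the same route as the paper: kernelize via Theorem~\ref{th:Kernel}, use Theorem~\ref{th:VCbound} to cap the number of cops at $\frac{\mathsf{vcn}}{3}+1$, and run the XP algorithm of Proposition~\ref{P:XP} on the kernel (the paper invokes it once with the input $k$, which rule RR\ref{R:KV1} has already bounded by $\frac{\mathsf{vcn}}{3}$, rather than iterating over all $k$ as you do, but that difference is absorbed into the polynomial factor). One caveat on your bookkeeping: the kernel does \emph{not} preserve the cop number exactly---the paper points out that deleting a vertex under RR\ref{R:KV3} can drop the cop number from $2$ to $1$ (e.g., for $C_4$), so your parenthetical claim $\mathsf{c}(G')=\mathsf{c}(G)$ can fail; this is harmless only because the kernelization dispatches $k=1$ separately via RR\ref{R:KV2}, and for the remaining values $k\geq 2$ Lemma~\ref{L:VCkernel} guarantees that the answer to ``is $G$ $k$-copwin?'' is preserved, which is all your algorithm actually needs.
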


We complement our kernel by showing that it is unlikely for \CR to admit polynomial compression, by providing a \textit{polynomial parameter transformation} from \RBDS.   In particular, our reduction makes non-trivial use of a known construction of a special graph having high girth and high minimum degree.

\begin{restatable}{theorem}{VCNPoly}\label{th:PolyCompression}
\CR parameterized by $\mathsf{vcn}$ does not admit polynomial compression, unless \NPoly.
\end{restatable}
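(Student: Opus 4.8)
The plan is to rule out polynomial compression by exhibiting a \emph{polynomial parameter transformation} (PPT) from a problem that is itself incompressible, and then invoke the standard fact that a PPT into \CR parameterized by $\mathsf{vcn}$, composed with a hypothetical polynomial compression of the latter, would yield a polynomial compression of the source. As the source I would take \RBDS parameterized by the number of blue vertices $|B|$ (equivalently, \textsc{Set Cover} parameterized by the size of the universe), which is known to admit no polynomial kernel unless \NPoly. Writing $t=|B|$, the target of the transformation is a graph $G$ and an integer $k'$ with $\mathsf{c}(G)\le k'$ if and only if the red-blue instance has a dominating set $S\subseteq R$ of size at most $k$, subject to the two conditions $k'=k+1$ and $\mathsf{vcn}(G)=\mathrm{poly}(t)$; these are exactly the conditions that make the map a PPT.

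The construction turns ``choosing a red vertex'' into ``placing a cop that threatens the blue vertices it dominates,'' and forces the cops to win by \emph{domination} rather than by \emph{pursuit}. Concretely, I would take the guaranteed high-girth high-minimum-degree graph $D$ (girth at least $5$, minimum degree $\delta$ much larger than $k'$) on a vertex set $Z$ consisting of $m=\Theta(t)$ private copies of each blue vertex, so that $|Z|=\Theta(t^2)$ is polynomial in $t$ while $\delta=\Theta(t)>k'$ remains attainable; such a graph on $\mathrm{poly}(t)$ vertices is exactly what the known construction supplies, and $Z$ serves as the robber's pursuit arena. For each red vertex $s\in R$ I add a vertex $a_s$ adjacent precisely to the copies in $Z$ of the blue vertices dominated by $s$, so a cop on $a_s$ threatens exactly those copies. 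Finally I add one guard vertex $x$ adjacent to all $a_s$ (and to no vertex of $Z$), and set $k'=k+1$. Since every edge has an endpoint in $Z\cup\{x\}$, the set $Z\cup\{x\}$ is a vertex cover of size $|Z|+1=\mathrm{poly}(t)$, giving the required bound on $\mathsf{vcn}(G)$, and $G$ is clearly built in polynomial time.

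For the forward direction, given a size-$k$ solution $S$ I place $k$ cops on $\{a_s:s\in S\}$ and one cop on $x$; because $S$ dominates $B$, every vertex of $Z$ is adjacent to some set-cop, while the $x$-cop threatens every $a_s$, so the robber has no safe starting vertex and is captured at once. The backward direction is the heart of the proof and the step I expect to be the \textbf{main obstacle}: assuming no red-blue dominating set of size $k$ exists, I must show the robber evades $k+1$ cops forever against arbitrary play. The intended robber strategy is to live on $Z$, always retreating to an unthreatened region and relying on the high girth and high minimum degree of $D$ to dodge the cops that physically pursue it on $D$; whenever the cops commit enough tokens to dominate all of $Z$ (which, lacking a size-$k$ cover, forces them to use all $k+1$ tokens on set-vertices and thus to abandon $x$), the robber instead flees to some $a_s$, which is then unguarded and offers refuge for at least one move because the $a_s$ are mutually non-adjacent and a cop can only reach $a_s$ by travelling through $Z$ and losing a tempo.

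Two delicate points must be handled to make this rigorous, and they are the real work. First, I must adapt the classical lower bound ``girth $\ge 5$ implies cop number $\ge$ minimum degree'' to the presence of the off-arena threat cops on the $a_s$; the danger is that a single set-cop could block several of the robber's escape routes out of its current vertex, so I need the incidence between $Z$ and the $a_s$ (not only $D$ itself) to avoid short cycles, which I would enforce by subdividing threat edges or by arranging copies so that no vertex of $D$ has two neighbours threatened by the same set-cop, thereby restoring the per-cop counting. Second, I must formalize the dichotomy between the $Z$-evasion phase and the $a_s$-refuge phase against mixed cop strategies, maintaining the invariant that the robber always occupies a safe vertex with a safe move available. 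Once the backward direction is established, the map $(H,k)\mapsto(G,k')$ is a PPT from \RBDS parameterized by $|B|$ to \CR parameterized by $\mathsf{vcn}$, so a polynomial compression of the latter would compress the former, which is impossible unless \NPoly; this is precisely Theorem~\ref{th:PolyCompression}.
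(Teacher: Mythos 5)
Your overall plan (a polynomial parameter transformation from \RBDS, an arena of high girth and high minimum degree for evasion, cops winning by domination in the forward direction) is the same as the paper's, but your construction is missing the one structural idea that makes the backward direction work, and the two patches you propose do not repair it. The trouble, as you yourself identify, is that a cop on a set-vertex $a_s$ threatens \emph{all} copies of \emph{all} blue vertices dominated by $s$, so a single such cop can block arbitrarily many of the robber's escape routes at once; per-cop counting based on girth then collapses. Your first fix (subdividing the threat edges) destroys the forward direction: after subdivision, a cop on $a_s$ no longer dominates any copy, so $k+1$ cops occupying $\{a_s : s \in S\}\cup\{x\}$ do not capture the robber immediately, and nothing in your argument shows they can win at all. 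Your second fix (arranging copies so that no arena vertex has two neighbours threatened by the same set-cop) is impossible in general: call two blue vertices conflicting if some red vertex dominates both; your condition forces every arena vertex's neighbourhood to contain at most one copy from each conflict class, so on instances where every pair of blue vertices is co-dominated (e.g., red vertices $r_{ij}$ dominating exactly $\{b_i,b_j\}$, a legitimate family whose cover number is about $|B|/2$) the arena would be forced to have maximum degree one, contradicting minimum degree greater than $k+1$. The ``flee to an unguarded $a_s$'' phase is also not a proof against adaptive cop play: the cops never need to dominate all of $Z$ in order to corner the robber, so your dichotomy between the evasion phase and the refuge phase is not exhaustive, and a refuge ``for at least one move'' does not by itself yield an infinite evasion strategy.

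The paper resolves exactly this difficulty by giving the arena a block structure instead of demanding girth with respect to the set-cops. In the graph $H(p,q,r)$ of Fomin et al., each target vertex $v_i$ owns two blocks $U_i$ and $W_i$ of $q$ copies, every arena vertex has at least $r-1=\ell+1$ neighbours in \emph{every} opposite block, and a cop on a red vertex threatens whole blocks (all of $U_i\cup W_i$ for each $v_i$ it dominates). Since no RBDS of size $\ell$ exists, any placement of at most $\ell$ cops on $P$ leaves some pair $U_j, W_j$ completely unthreatened, and $q > \ell\, p(r+1)$ guarantees the robber can enter safely; whenever attacked, the robber moves to one of its $\ell+1$ neighbours in an entirely free block, and girth at least $6$ of the arena (not of the whole graph) ensures that each of the at most $\ell$ cops standing inside $H$ blocks at most one of these neighbours. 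In other words, set-cops are neutralized wholesale by the existence of a free block --- this is precisely where the non-existence of a small RBDS enters --- while girth and degree are invoked only against cops inside the arena. Without this decoupling, which your single arena of copies does not provide, the backward direction of your reduction fails, so the proposal has a genuine gap.
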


Next, we present a linear kernel for \CR parameterized by neighbourhood diversity:
\begin{restatable}{theorem}{ND}\label{T:nd}
   \CR parameterized by $\mathsf{nd}$ admits a kernel with at most $\mathsf{nd}$ vertices. 
\end{restatable}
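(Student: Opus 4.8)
The plan is to compute the neighborhood diversity partition of $G$ in polynomial time (this partition is polynomial-time computable) into at most $\mathsf{nd}$ classes $C_1,\dots,C_d$, where each class is either a clique or an independent set and the adjacency between any two classes is either complete or empty, and then to collapse each class to a single representative vertex. This yields an induced subgraph $G'$ containing exactly one vertex per class, hence at most $\mathsf{nd}$ vertices. The whole argument therefore reduces to a single reduction rule --- as long as some class contains two vertices $u,v$ (necessarily \emph{twins}, i.e.\ $N(u)\setminus\{v\}=N(v)\setminus\{u\}$), delete one of them --- together with a proof that this rule preserves the cop number, so that $\mathsf{c}(G)=\mathsf{c}(G')$. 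Since twins come in two flavours, I would analyse \emph{true twins} (a clique class, $N[u]=N[v]$) and \emph{false twins} (an independent class, $N(u)=N(v)$) separately.

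The true-twin case I would dispatch first, as it is standard. If $u,v$ are true twins then each dominates the other, so $v$ is a corner; deleting a corner never changes the cop number, giving $\mathsf{c}(G)=\mathsf{c}(G-v)$ immediately. Equivalently, the map fixing $G-v$ pointwise and sending $v\mapsto u$ is a retraction onto $G-v$, so $\mathsf{c}(G-v)\le\mathsf{c}(G)$, while the corner property supplies the reverse inequality. Applied exhaustively, this shrinks every clique class to a single vertex without affecting $\mathsf{c}(G)$.

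The heart of the proof --- and the step I expect to be the main obstacle --- is the false-twin case, where $u,v$ lie in a common independent class. One inequality is again free: collapsing $v$ onto $u$ is a homomorphism fixing $G-v$ pointwise (every neighbour of $v$ is a neighbour of $u$), so $\mathsf{c}(G-v)\le\mathsf{c}(G)$. The delicate direction is lifting a winning strategy of $k=\mathsf{c}(G-v)$ cops on $G-v$ back to $G$, because the duplicate copy hands the robber an extra hiding vertex inside an independent set, and one must rule out that the robber exploits it to gain an additional step of escape (indeed, the truly dangerous sub-case is collapsing a class all the way down to one vertex, which is exactly where the difficulty concentrates). The structural leverage I would use is that $u$ and $v$ have identical neighbourhoods: any cop standing on a vertex of an adjacent class is simultaneously adjacent to \emph{both} $u$ and $v$, so a single cop in a neighbouring class guards the entire class at once, and the robber can profit from the duplicate only by sitting on $v$ itself. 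I would therefore have the cops play their $G-v$ strategy against the \emph{shadow} of the robber (its image under $v\mapsto u$); since the robber cannot move directly between $u$ and $v$ (they are non-adjacent) without traversing a common neighbour that the guarding cop controls, the shadow play drives the real robber into the same configuration as the $G-v$ endgame, forcing capture with the same $k$ cops. Making this shadowing argument fully rigorous --- precisely accounting for the off-by-one in timing created by the extra vertex --- is the technically demanding part of the proof.

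Once both twin cases are established, exhaustive application of the rule terminates with at most one vertex per class, i.e.\ at most $\mathsf{nd}$ vertices, while every deletion preserves $\mathsf{c}(G)$; combined with the polynomial-time computability of the neighborhood diversity partition, this gives a valid kernelization and hence the claimed kernel with at most $\mathsf{nd}$ vertices. Finally, I would observe that the rule transfers verbatim to the studied variants of \CR, since twins remain twins under their modified movement and capture conventions, and both the retraction argument and the guarding argument rely only on adjacency relations.
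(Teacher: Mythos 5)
Your overall route is essentially the paper's (shrink each type class to a single representative via twin deletion, arguing with a shadow/image of the robber), but there is a genuine gap in the false-twin case: the deletion rule does \emph{not} preserve the cop number, and your claim that the shadow play forces ``capture with the same $k$ cops'' fails when $k=1$. When the cops capture the shadow, a cop stands on $u$ while the robber sits on the false twin $v$; since $u$ and $v$ are non-adjacent and $N(v)=N(u)\subseteq N[u]$, that cop can immobilize the robber forever but can never capture it --- a second cop must walk over to $v$ to finish. The paper's own counterexample makes this concrete: in $C_4$ the two diagonal vertices are false twins, $\mathsf{c}(C_4)=2$, yet deleting one of them leaves $P_3$ with cop number $1$. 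So the equality $\mathsf{c}(G)=\mathsf{c}(G')$ you assert is simply false, and exhaustive twin deletion is unsound on instances with $k=1$. (Your true-twin/corner case is fine, since there $u$ and $v$ are adjacent and the cop on $u$ captures the robber on $v$ in one move, for any $k\geq 1$.)

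The repair --- and this is exactly how the paper proceeds --- is to prove only the weaker statement that for $k\geq 2$, $G$ is $k$-copwin if and only if $G-v$ is $k$-copwin, and to guard the twin-deletion rule with two preliminary reductions: if $k\geq \mathsf{nd}$, answer positively (one vertex per class is a dominating set of the connected graph $G$, so $\mathsf{c}(G)\leq \mathsf{nd}$; this also bounds the parameter $k$ in the output instance), and if $k=1$, decide the instance outright with the $\mathcal{O}(n^{3})$ XP algorithm. Only then, with $2\leq k<\mathsf{nd}$, may one delete twins exhaustively down to one vertex per class. Your closing remark that the rule transfers ``verbatim'' to all studied variants also overreaches: the paper shows it fails for the surrounding variant, where a complete graph has $\mathsf{nd}=1$ but surrounding cop number $n$, although that lies outside the statement under review.
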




On the positive side, we extend our exponential kernel to two smaller structural parameters, \cvd~and  \dts:
\begin{restatable}{theorem}{VClique}\label{T:vertexClique}
\CR parameterized by \cvd~admits a kernel with at most  $2^{2^\mathsf{cvd} + \sqrt{\mathsf{cvd}}}$ vertices. Moroever, \CR parameterized by \dts~admits a kernel with at most  $2^{2^\mathsf{dts} + \mathsf{dts}^{1.5}}$ vertices.
\end{restatable}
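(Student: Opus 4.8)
The plan is to lift the reduction behind \Cref{th:Kernel} from the independent set of a vertex cover to the \emph{components} of $G-S$, where $S$ is a cluster vertex deletion set (resp.\ a deletion-to-stars set) of size $\mathsf{cvd}$ (resp.\ $\mathsf{dts}$), computable in \FPT time. After deleting $S$, the graph $G-S$ is a disjoint union of cliques (resp.\ stars), and I would classify every vertex $v\notin S$ by its \emph{trace} $N(v)\cap S$, of which there are at most $2^{\mathsf{cvd}}$ (resp.\ $2^{\mathsf{dts}}$). Two obstacles to a bounded kernel remain: the \emph{size} of an individual component and the \emph{number} of components. I would attack them separately, reusing the twin/antichain machinery of \Cref{th:Kernel} for the former and introducing a new ``interchangeable component'' rule for the latter.

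Bounding the size of a single clique $C$ is the cleanest step, and here cliques behave better than independent sets. For $u,v\in C$ we have $N[u]=C\cup(N(u)\cap S)$ and $N[v]=C\cup(N(v)\cap S)$, so $N(u)\cap S\subseteq N(v)\cap S$ already implies $N[u]\subseteq N[v]$; thus $u$ is a corner, and retracting it onto $v$ preserves the cop number (a standard fact for dominated vertices). Iterating this, the traces that survive inside each clique form an \emph{antichain} of $2^{S}$, so by Sperner's theorem each clique retains at most $\binom{\mathsf{cvd}}{\lfloor \mathsf{cvd}/2\rfloor}\le 2^{\mathsf{cvd}}$ vertices---the same antichain phenomenon responsible for the $2^{\mathsf{vcn}}/\sqrt{\mathsf{vcn}}$ term in \Cref{th:Kernel}. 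For a star with centre $c$, the analogous computation shows that any leaf $\ell$ with $N(\ell)\cap S\subseteq N(c)\cap S$ is dominated by $c$ and can be removed, after which the remaining leaves fall into at most $2^{\mathsf{dts}}$ trace classes of pairwise \emph{false twins}; capping each class via the (already-established) correctness of the main rule of \Cref{th:Kernel} leaves every star with $O(2^{\mathsf{dts}})$ vertices, the extra centre accounting for the heavier $\mathsf{dts}^{1.5}$ correction later.

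The genuinely delicate part, and the step I expect to be the main obstacle, is bounding the \emph{number} of components. Crucially, one cannot simply delete components on the grounds that the cop number is small: a disjoint union of $t$ triangles has $\mathsf{cvd}=0$ yet cop number $t$, so the cop number is not bounded by the parameter and the kernel must still be able to certify large answers. I would therefore assign each surviving component a \emph{profile} (the set of traces it carries, together with the centre's trace for a star) and keep, per profile, only a number of components bounded by a function of the parameter, showing via a strategy-transfer argument that additional identical components are redundant hiding places---either because the cops can shadow a robber who flees into a surplus copy, or because the residual ``many identical components'' structure is resolvable in polynomial time and can be replaced by a trivial equivalent instance with $k$ adjusted accordingly. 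This interchangeability lemma, rather than the deletion itself, is the crux: deleting a whole component alters the global game in a way that a single corner retraction does not, so the argument must be made uniformly over all profiles and must be compatible with the within-component retractions of the previous step. Once it is in place, multiplying the number of admissible profiles by the bound on component size yields kernels of the claimed sizes $2^{2^{\mathsf{cvd}}+\sqrt{\mathsf{cvd}}}$ and $2^{2^{\mathsf{dts}}+\mathsf{dts}^{1.5}}$.
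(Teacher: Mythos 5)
Your plan coincides with the paper's proof in its overall architecture, and the first half matches it exactly: shrinking each clique (resp.\ star) by deleting a vertex whose trace on the deletion set is contained in the trace of a clique-mate (resp.\ of another leaf, or of the centre), and then bounding the survivors by Sperner's theorem, is precisely Lemma~\ref{l:cliquesize} together with RR\ref{R:KC2} (and RR\ref{R:KC4} for stars), with the antichain bound supplied by Proposition~\ref{P:comb}. The step you single out as the crux---bounding the number of components---is also resolved in the paper by exactly the mechanism you gesture at: a component $C_i$ is deleted whenever another component $C_j$ \emph{pointwise dominates} it (for every $u\in V(C_i)$ there is $v\in V(C_j)$ with $N_U(u)\subseteq N_U(v)$), and safety for $k\geq 2$ is proved by a robber-image/shadowing argument in which the cop occupying the robber's image confines the robber to $C_i$ and a second cop finishes inside the copwin component (Lemma~\ref{L:cliqueType} and RR\ref{R:KC3}; RR\ref{R:KC5} with Lemma~\ref{L:dtsnSafe} for stars). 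Your ``interchangeable component'' rule for identical profiles is a special case of this domination rule. However, you only assert the interchangeability lemma and hedge about how it would be proved; that lemma \emph{is} the content of the theorem, so as written the proposal leaves its acknowledged crux as a conjecture rather than closing it.

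There is also a concrete error in the premise that drives your extra complications. The disjoint union of $t$ triangles is a disconnected graph, and the paper assumes connectedness throughout (the cop number of a disconnected graph is just the sum over components). For connected $G$, Lemma~\ref{L:boundGeneral} yields $\mathsf{c}(G)\leq\lceil\mathsf{cvd}/2\rceil+1$ (Theorem~\ref{th:ABound}), so the cop number \emph{is} bounded by the parameter, and the paper's first reduction rule RR\ref{R:KC1} simply answers yes when $k\geq t/2+1$; together with running the XP algorithm when $k=1$, this pins $k$ into the range where the shadowing lemma applies. Consequently there is no need to ``certify large answers,'' to keep a parameter-dependent number of copies per profile, to adjust $k$, or to argue polynomial-time resolvability of residual structure---one surviving component per dominance class suffices. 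Dropping the $k$-capping rule is not cosmetic: without it, your fallback alternatives (keeping $k$-dependent many copies, or rescaling $k$ across components) have no justification inside the connected-game framework, whereas the simple two-cop shadow argument does all the work once $2\leq k$ is guaranteed.
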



Several variants of \CR have been studied due to their copious applications. We extend our results, parameterized by \textsf{vcn}, to some of the most well-studied ones. We define these variants (and used notations) in Section~\ref{S:preliminaries}. We first bound the cop number of these variants by  $\mathsf{vcn}$:

\begin{theorem} \label{T:variationBound}
For a graph $G$: (1) $\mathsf{c}_{lazy} \leq \frac{\mathsf{vcn}}{2} +1$; (2) $\mathsf{c}_{attack} \leq \frac{\mathsf{vcn}}{2} +1$; (3) $\mathsf{c}_{active}(G) \leq \mathsf{vcn}$; (4) $\mathsf{c}_{surround}(G) \leq \mathsf{vcn}$; (5) $\mathsf{c}_s(G) \leq \mathsf{vcn}$ (for any value of $s$); (6) for a strongly connected orientation $\overrightarrow{G}$ of $G$, $\mathsf{c}(\overrightarrow{G}) \leq \mathsf{vcn}$.
\end{theorem}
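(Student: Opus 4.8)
The plan is to treat all six variants through a single structural lens. I would fix a minimum vertex cover $C$ of $G$ with $|C|=\mathsf{vcn}$ and set $I=V(G)\setminus C$, so that $I$ is independent and every neighbour of an $I$-vertex lies in $C$; in particular, a robber standing on $I$ can only move into $C$, and a fast robber on $I$ cannot traverse any edge whose far endpoint is occupied. I may assume $G$ is connected (otherwise I argue on the robber's component, and for item (6) strong connectivity is given). The common backbone is to place one cop on each vertex of $C$, which uses $\mathsf{vcn}$ cops and forces the robber onto $I$ at the start of the game, since $C$ is fully occupied.

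For \SCR, \CFR, and the oriented variant (items (4), (5), (6)) this backbone already ends the game, and I would verify each win condition directly by a one-move argument once the robber is confined to $I$. For \SCR, a robber on $r\in I$ has $N(r)\subseteq C$, so every neighbour of $r$ is occupied by a distinct cop and $r$ is surrounded immediately, giving $\mathsf{c}_{surround}(G)\le\mathsf{vcn}$. For \CFR with any speed $s$, a robber on $r\in I$ cannot leave $r$, since every walk out of $r$ must first step onto an occupied $C$-vertex, and then a cop on some $c\in N(r)$ steps onto $r$, so $\mathsf{c}_s(G)\le\mathsf{vcn}$. For a strongly connected orientation $\overrightarrow{G}$, strong connectivity guarantees that $r$ has an in-neighbour, which necessarily lies in $C$ and hence carries a cop; that cop moves along the arc into $r$, so $\mathsf{c}(\overrightarrow{G})\le\mathsf{vcn}$.

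The \ACR variant (item (3)) needs the extra idea that cops may no longer stand still, so I would replace ``sit on $C$'' by a two-round breathing routine: from the configuration with cop $i$ on $c_i$, let every cop step to an arbitrary neighbour of its own vertex (which exists by connectivity), and on the following round let every cop step back to $c_i$. Because the robber is also forced to move and sits on $I$, after the cops' first step it must enter $C$; whatever vertex $c\in C$ it reaches is re-occupied by the cop assigned to $c$ on the cops' next move, capturing it, and if instead every $C$-vertex is still occupied after the cops' step, the robber is forced onto a cop. This yields $\mathsf{c}_{active}(G)\le\mathsf{vcn}$, and the only care needed is the bookkeeping that each cop can always return to its own vertex and that collisions among cops sharing a neighbour are harmless.

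For \LCR and \CAR (items (1) and (2)) the backbone gives only the weaker bound $\mathsf{vcn}$, and to reach $\frac{\mathsf{vcn}}{2}+1$ I would port the reduction-rule machinery behind Theorem~\ref{th:VCbound}. There, each rule dedicates one cop that guards three vertex-cover vertices while a constant number of cops clears the leftover; here I would re-prove analogous rules in which a single guarding cop controls only two vertex-cover vertices, together with one additional chasing cop, so that $\frac{\mathsf{vcn}}{2}$ guards plus one chaser suffice. I expect the main obstacle to lie exactly here: re-establishing the guarding invariants under the altered move rules — that a guard remains effective when only one cop may move per round (lazy), and that a guard cannot be destroyed by the robber's attack (attacking) — which is what forces the guarding capacity down from three to two and absorbs the bulk of the technical work.
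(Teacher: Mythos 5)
Your arguments for items (3)--(6) are correct and essentially coincide with what the paper does: the paper disposes of these four bounds in Observation~\ref{O:VCVariantBound}, stated as straightforward, by exactly the ``occupy the whole vertex cover'' strategy you describe. (For \ACR your two-round ``breathing'' routine is sound but more elaborate than needed: since the cops move first and the robber is visible with all of its neighbours occupied, the cop on a neighbour of the robber's vertex can simply step onto it on the very first cop move, while the remaining cops move arbitrarily.)

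The genuine gap is in items (1) and (2), which is where the theorem has actual content: you do not prove the bounds $\mathsf{c}_{lazy}\leq \frac{\mathsf{vcn}}{2}+1$ and $\mathsf{c}_{attack}\leq \frac{\mathsf{vcn}}{2}+1$, you only announce an intention to ``port'' the machinery of Theorem~\ref{th:VCbound} and explicitly defer what you yourself call the bulk of the technical work. Moreover, the porting plan as described would fail: the three-vertices-per-cop rules of Theorem~\ref{th:VCbound} rest on Aigner--Fromme path guarding (Proposition~\ref{P:aigner}, used in RR\ref{R:B3}), and the paper points out at the start of its Lemma~\ref{L:variationBound} that precisely this tool is unavailable here --- in \CAR a lone cop guarding a path can be attacked, and in \LCR several guards cannot all respond since only one cop moves per round. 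The paper's actual proof therefore abandons path guards entirely and uses only \emph{stationary} closed-neighbourhood guards: place a cop on any $v\in I$ with at least two neighbours, or on any $v\in U$ with a neighbour in $U$; each such cop removes at least two cover vertices and never needs to move except to capture (which is what makes the strategy legal for lazy cops and attack-immune, since a robber adjacent to a stationary cop is captured before it can strike). After exhausting these rules the residual graph is a disjoint union of stars centred in $U$, which one extra cop finishes. For \CAR there is a further issue your proposal never touches: cops travelling to their stations \emph{can} be attacked, and the paper handles this by moving all unstationed cops as a single group (so that an attack costs the robber the game), plus a separate endgame argument for the last unstationed cop. Without these pieces, items (1) and (2) remain unproven.
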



We also extend our exponential kernel to these variants:
\begin{restatable}{theorem}{LAtKernel}\label{T:LAtKernel}
\CAR and \LCR parameterized by $\mathsf{vcn}$ admit a kernel with at most $\mathsf{vcn}+\frac{2^\mathsf{vcn}}{\sqrt{\mathsf{vcn}}}$ vertices. Moreover, \CR on strongly connected directed graphs admits a kernel with at most $3^\mathsf{vcn}+\mathsf{vcn}$ vertices.
\end{restatable}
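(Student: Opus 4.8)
The plan is to show that the kernelization scheme behind Theorem~\ref{th:Kernel} is robust against the rule changes introduced by these variants, so that essentially the same reduction applies in each case. Throughout, fix a minimum vertex cover $C$ with $|C|=\mathsf{vcn}$ and let $I = V(G)\setminus C$ be the resulting independent set. In all three settings, an independent-set vertex influences the game only through its incidences to $C$; accordingly, I classify the vertices of $I$ by their \emph{type}, that is, by how they attach to $C$, exactly as in the proof of Theorem~\ref{th:Kernel}.

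For \CAR and \LCR, I would reuse the reduction rule and the counting of Theorem~\ref{th:Kernel} unchanged: the set of retained vertices is determined purely by the cover-attachment structure, not by the rules of play, so the identical argument leaves at most $\frac{2^\mathsf{vcn}}{\sqrt{\mathsf{vcn}}}$ vertices of $I$ and yields the stated bound. The real work is to re-establish correctness of the rule for the new dynamics. The guiding principle is that every primitive the game depends on---who may move where, who captures whom, and, for the attacking robber, who may attack whom---is a statement about edges incident to $C$ and is therefore invariant under the type relation. I would formalize this with a shadow-strategy argument: from a winning cop strategy on the reduced instance $G'$, the cops play in $G$ against a representative of the robber's current type in $G'$, maintaining at every round that the robber's position in $G$ and its shadow in $G'$ share a type; the converse direction lifts a robber escape strategy symmetrically.

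For \LCR this transfer is routine, since the lazy restriction only shrinks the cops' move set and leaves the position-to-type correspondence between the two boards intact. The attacking robber is the delicate case, and is where I expect the main obstacle to lie: an attack deletes a cop and thereby alters the set of active cops in mid-play, so I must verify both that the robber can attack in $G$ exactly when its shadow can attack the matching cop in $G'$ (which holds because attacking is an adjacency condition, and adjacency is type-invariant) and that the shadow correspondence survives an elimination without forcing a ``phantom'' cop in one board that cannot be matched in the other. Getting this bookkeeping right across eliminations, rather than the counting, is the crux of the two undirected extensions.

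For \CR on a strongly connected orientation $\overrightarrow{G}$ the counting must be redone and the finer bound of Theorem~\ref{th:Kernel} is replaced by a coarser one, because reachability is now asymmetric: a robber at $u \in I$ profits from arcs $u \to c$ (to escape into $C$) but is endangered by arcs $c \to u$ (along which a cop can reach it), so the inter-vertex comparisons that power the undirected reduction are no longer available. I would therefore define the type of $u$ by the pattern recording, for each $c \in C$, which of the three states (no arc, an arc $c \to u$, or an arc $u \to c$) occurs; since $\overrightarrow{G}$ is an orientation there are at most three states per cover vertex, hence at most $3^{\mathsf{vcn}}$ types. Keeping one representative of each type and deleting the remaining directed false twins of $I$ leaves at most $3^{\mathsf{vcn}}+\mathsf{vcn}$ vertices. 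Correctness again follows from a shadow-strategy argument, now tracking directed moves of both players; the extra point to check is that the reduction preserves strong connectivity, which holds because each retained representative realizes the same in/out-arc pattern to $C$ as the vertices it replaces, so every directed route through a deleted vertex can be rerouted through its representative.
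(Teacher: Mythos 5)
Your overall route coincides with the paper's: for \CAR and \LCR you keep the domination rule RR\ref{R:KV3} (delete $u\in I$ when $N(u)\subseteq N(v)$) together with the Sperner-type count of Proposition~\ref{P:comb}, and argue safeness by an image/shadow strategy, which is exactly the paper's Lemma~\ref{L:VCkernelVariants}; for the directed case you delete vertices of $I$ with identical (the paper: dominated) in/out attachment to the cover and count three states per cover vertex to get $3^{\mathsf{vcn}}+\mathsf{vcn}$, which is the paper's Lemma~\ref{L:oriented} and RR\ref{R:D3}. So the comparison is not about method but about completeness, and there is one concrete gap.

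Your shadow-strategy transfer stops too early: capturing the robber's shadow does not capture the robber. When the robber sits on the deleted vertex $u$, the cop who captured the shadow occupies the surviving vertex $v$, and all this buys is a pin --- since $N(u)\subseteq N(v)$ (respectively $N^{+}(u)\subseteq N^{+}(v)$ and $N^{-}(u)\subseteq N^{-}(v)$), the robber cannot leave $u$ without being caught --- after which a \emph{second} cop must walk to $u$. Consequently the twin rules are safe only for $k\geq 2$, and they genuinely fail for $k=1$ (in $C_4$, deleting one of two non-adjacent twins drops both the cop number and the lazy cop number from $2$ to $1$); this is why the paper first applies the guard rules RR\ref{R:E2}/RR\ref{R:D2}, which dispose of $k=1$ instances by the XP algorithm, and RR\ref{R:E1}/RR\ref{R:D1} to cap $k$. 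Your proposal never restricts $k$, so as stated the reduction is unsound on $k=1$ instances. Relatedly, for \CAR the delicate point is precisely this endgame rather than the mid-play elimination bookkeeping you flag: the finishing cop approaching the pinned robber can itself be attacked, and the paper's argument resolves this by noting that an attacking robber must step onto a vertex of $N(u)\subseteq N(v)$, where the pinning cop at $v$ captures it on the next move; your ``phantom cop'' concern, by contrast, disappears immediately because adjacency (hence attackability) is preserved under the image map. Once you add the $k=1$ guard and this pinning endgame, your outline matches the paper's proof.
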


Then, we present a slightly more general kernelization that works for most variants of the game. In particular, we define a new variant of the game (in Section~\ref{S:preliminaries}), \textsc{Generalized CnR},that generalizes various well studied variants of \CR. We have the following result that proves that \textsc{Generalized CnR} parameterized by $\mathsf{vcn}$ admits an exponential kernel.

\begin{restatable}{theorem}{GenKernel}\label{T:general}
\textsc{Generalized CnR} parameterized by $\mathsf{vcn}$ admits a kernel with at most $\mathsf{vcn}+\mathsf{vcn}\cdot 2^{\mathsf{vcn}}$ vertices.
\end{restatable}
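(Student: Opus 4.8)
The plan is to prove that \textsc{Generalized CnR} parameterized by $\mathsf{vcn}$ admits a kernel with at most $\mathsf{vcn}+\mathsf{vcn}\cdot 2^{\mathsf{vcn}}$ vertices by generalizing the twin-based reduction rule used for Theorem~\ref{th:Kernel}. First I would fix a vertex cover $S$ of size $\mathsf{vcn}$, computable in \FPT time, and observe that every vertex outside $S$ lies in the independent set $I = V(G)\setminus S$ and is therefore characterized entirely by its neighborhood $N(v)\subseteq S$. Since there are only $2^{\mathsf{vcn}}$ possible subsets of $S$, the vertices of $I$ partition into at most $2^{\mathsf{vcn}}$ classes of \emph{false twins}, i.e., vertices with identical neighborhoods in $S$. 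The reduction rule is then: if any twin class contains more than $\mathsf{vcn}$ vertices, delete an arbitrary vertex from that class. Applying this exhaustively leaves at most $\mathsf{vcn}$ vertices per class plus the $\mathsf{vcn}$ vertices of $S$ itself, giving the claimed bound $\mathsf{vcn}+\mathsf{vcn}\cdot 2^{\mathsf{vcn}}$.

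The core of the argument is proving the reduction rule safe, namely that deleting one redundant twin does not change the outcome of the game for \emph{any} variant captured by \textsc{Generalized CnR}. The key structural idea is that if a twin class $T$ has more than $\mathsf{vcn}$ members, then even after the robber and all cops occupy vertices, there remains at least one vertex $w \in T$ unoccupied and not equal to any currently occupied vertex; since $w$ is a false twin of every other vertex in $T$, it is interchangeable with them. I would formalize this via a strategy-transfer argument: given a winning cop strategy on the reduced graph $G'$, I would lift it to a winning strategy on $G$ (and vice versa) by maintaining a correspondence in which, whenever a player would move to or occupy the deleted vertex, the strategy instead uses an equivalent surviving twin. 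Because the deleted vertex and its surviving twin have identical closed-neighborhood behavior with respect to $S$ (and $I$ is independent, so twins are mutually non-adjacent in the same way), every legal move and every capture/escape condition is preserved under this correspondence.

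The reason this must be phrased for \textsc{Generalized CnR} rather than reproven for each variant is the main obstacle, and it is where the bulk of the work lies. I would rely on the abstract definition of \textsc{Generalized CnR} given in Section~\ref{S:preliminaries}, which should encode the permissible moves, the capture/winning condition, and any auxiliary rules (lazy movement, attacking, active constraints, surrounding, directed edges, speed $s$) through local, neighborhood-dependent predicates. The crucial property I would isolate and invoke is that all of these predicates are invariant under swapping a vertex with a false twin: any condition that references only adjacency relations and occupancy is preserved, because false twins are indistinguishable by adjacency to $S$ and by mutual non-adjacency within the independent set $I$. Establishing this \emph{twin-invariance} of the generalized winning condition is the heart of the proof; once it is in hand, the strategy-transfer correspondence goes through uniformly, and the counting bound follows immediately.

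The hard part will be verifying that the surrounding and speed-$s$ variants genuinely respect twin-invariance, since these conditions are more global than simple adjacency: surrounding requires occupying the entire neighborhood of the robber, and speed-$s$ permits traversing paths of length up to $s$. For surrounding, I would argue that because we retain at least $\mathsf{vcn}$ twins and any robber vertex has at most $\mathsf{vcn}$ neighbors in $S$, there are always enough surviving twins to realize any surrounding configuration that used the deleted vertex. For speed-$s$, I would note that any fast path through the deleted twin can be rerouted through a surviving twin with the same endpoints in $S$, since twins offer identical one-step connectivity; the independence of $I$ ensures such a reroute never shortens or lengthens the relevant distances in a way that affects reachability. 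Bundling these case checks under the single twin-invariance lemma keeps the argument modular and completes the kernelization.
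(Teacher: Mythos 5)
Your overall route is the same as the paper's: partition the independent set $I = V(G)\setminus S$ into at most $2^{\mathsf{vcn}}$ false-twin classes, truncate each class to boundedly many representatives, and prove safeness by a strategy-transfer argument in which any occurrence of a deleted twin is replaced by a surviving twin that is currently not occupied by a cop (together with the walk-to-path rerouting for speed-$s$ agents, which the paper also does). However, there is a concrete gap in your safeness argument: you keep only $\mathsf{vcn}$ vertices per class and assert that ``even after the robber and all cops occupy vertices, there remains at least one vertex $w\in T$ unoccupied.'' This holds only if the number of cops satisfies $k<\mathsf{vcn}$ --- nothing in the input bounds $k$, and if $k\geq \mathsf{vcn}$ the cops can occupy every surviving member of the class, at which point the free-vertex/interchangeability step that both directions of your strategy transfer rely on collapses. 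The paper avoids exactly this by (i) keeping $k+1$ vertices per class in its reduction rule (Lemma~\ref{L:Gen} requires $|\mathcal{C}_S|>k+1$), and (ii) first applying a rule that answers positively whenever $k\geq \mathsf{vcn}$, justified by the observation that $\mathsf{vcn}+1$ cops always have a winning strategy in \textsc{Generalized CnR}; only after that rule does $k+1\leq \mathsf{vcn}$ hold, which simultaneously guarantees a cop-free surviving twin and yields the size bound $\mathsf{vcn}+\mathsf{vcn}\cdot 2^{\mathsf{vcn}}$ (Lemma~\ref{L:generalSize}). Your proof needs this preprocessing step and the accompanying ``enough cops always win'' argument; without it, the reduction rule as you state it is not known to be safe.

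Two smaller points. First, the paper's \textsc{Generalized CnR} does not include the surrounding, attacking, or directed variants (\SCR is handled by a separate lemma toward Theorem~\ref{Th:active}, and the paper explicitly warns that twin-removal can fail for surrounding in other parameterizations), so your case analysis for surrounding is out of scope for this particular statement, and as written it would inherit the same $k$ versus $\mathsf{vcn}$ issue. Second, your appeal to twin-invariance of ``local, neighborhood-dependent predicates'' glosses over the cops' reach parameters $\lambda_i$: reach is a distance condition, not an adjacency condition, so one must argue explicitly that replacing an agent's position by the position of a twin never increases its threat, i.e., that $N(v)=N(x)$ implies $d(y,x)\leq d(y,v)$ for every $y\neq v$; this is precisely the claim $F'\subseteq F$ in the paper's proof, and it (or a redefinition of ``safe'' in terms of $\lambda_i$, as the paper does) must be stated for the transfer to cover reach.
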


Then, we show that the same kernelization algorithm also provides us the following result:
\begin{theorem}\label{Th:active}
\ACR, \CFR, and \SCR parameterized by $\mathsf{vcn}$ admit a kernel with at most $\mathsf{vcn}+\mathsf{vcn}\cdot 2^\mathsf{vcn}$ vertices.
\end{theorem}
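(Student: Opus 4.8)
The plan is to re-use, essentially verbatim, the kernelization machinery developed for \textsc{Generalized CnR} in \Cref{T:general}, and to argue that the three variants \ACR, \CFR, and \SCR each fall under (or can be simulated by) the \textsc{Generalized CnR} framework, so that the bound $\mathsf{vcn}+\mathsf{vcn}\cdot 2^{\mathsf{vcn}}$ transfers directly. First I would recall the generalized model defined in \Cref{S:preliminaries} and isolate precisely which features of the game the kernelization of \Cref{T:general} relied upon: presumably the reduction rule groups together vertices of $V \setminus S$ (where $S$ is a vertex cover of size $\mathsf{vcn}$) according to their neighborhood in $S$, keeping only a bounded number of representatives from each of the at most $2^{\mathsf{vcn}}$ neighborhood classes, and the correctness argument only uses that (i) the robber and cops move along edges, (ii) the ``role'' a vertex outside $S$ can play in either player's strategy is determined by its neighborhood in $S$, and (iii) capture/escape conditions are preserved under the identification of twins within a class.

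The main body of the argument is a sequence of three reductions showing that each variant respects these properties. For \ACR (the fully active variant, in which the robber must move on every turn), the movement rules are identical to standard \CR except for the activeness constraint; since this constraint depends only on the existence of an available neighboring vertex and not on the identity of twins, I would verify that the grouping rule still preserves the outcome, so that \textsc{Generalized CnR} with the activeness flag captures it. For \CFR (\textsc{Cops and Fast Robber}, where the robber moves at unbounded speed along cop-free paths), the robber's reachable set in one turn is a connected cop-free region; I would check that two false twins outside $S$ are interchangeable as far as these reachable regions are concerned, again placing it inside the generalized framework. For \SCR (\textsc{Surrounding CnR}, where the winning condition is that the robber is surrounded rather than occupied), the only change is the capture predicate, and I would confirm that the surrounding condition is invariant under twin identification because it too is mediated entirely through the neighborhood in $S$.

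Concretely, the key steps in order are: (1) state the generalized reduction rule and its invariant as proved for \Cref{T:general}; (2) observe that each of the three variants can be expressed as \textsc{Generalized CnR} with an appropriate choice of the game's parameters (movement rule, speed, and capture condition); (3) for each variant, confirm that the reduction rule remains safe, i.e., that a winning strategy for a given number of cops survives the deletion of redundant twins and, conversely, that a strategy on the reduced instance lifts back to the original; and (4) conclude that the resulting instance has at most $\mathsf{vcn}+\mathsf{vcn}\cdot 2^{\mathsf{vcn}}$ vertices, exactly as in \Cref{T:general}.

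The hard part will be step (3) for \CFR, the fast-robber variant: because the robber can traverse arbitrarily long cop-free paths in a single move, identifying twins outside $S$ could in principle alter the long-range connectivity of the cop-free subgraph, so I must argue carefully that retaining enough representatives per neighborhood class (enough to reproduce every relevant escape route through $S$) preserves the robber's reachability structure, and hence the game value. I expect the activeness and surrounding variants to be comparatively routine once the \CFR case is handled, since their deviations from standard \CR are purely local constraints that the neighborhood-based grouping already respects. Assuming the generalized kernelization of \Cref{T:general} is stated with sufficient generality, this reduction-based approach yields the claimed bound for all three variants simultaneously.
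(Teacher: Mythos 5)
Your plan works for \ACR and \CFR, and there it coincides with the paper's proof: both variants are, by definition, instances of \textsc{Generalized CnR}, so the kernel of \Cref{T:general} applies verbatim. Note that your worry about \CFR is misplaced: the proof of Lemma~\ref{L:Gen} already handles an arbitrary robber speed $s_R$ via the walk-to-path rerouting argument (a path through a deleted twin becomes a walk through the free vertex, which contains a path of no greater length through safe vertices), so \CFR requires no additional work beyond invoking \Cref{T:general}.

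The genuine gap is \SCR, which you classify as ``comparatively routine'' and propose to absorb into the generalized framework ``with an appropriate choice of the capture condition.'' But \textsc{Generalized CnR}, as defined in the paper, has no configurable capture predicate: its parameters are speeds, activeness flags, and reach values $\lambda_i$, all of which yield occupation- or distance-based capture by a \emph{single} cop. Surrounding---requiring a cop on \emph{every} vertex of $N(v)$ simultaneously, while cops and robber may share a vertex and the robber is merely forbidden from ending its turn on a cop---cannot be expressed this way, so step (2) of your plan fails for \SCR and the bound does not transfer ``directly.'' The paper instead proves a dedicated safeness lemma for RR\ref{R:G} in the surrounding game (Lemma~\ref{L:ASCR}), and its content is not routine: one must (i) redirect the forcing mechanism, i.e., when the $H$-strategy dictates a cop step onto the image's free vertex $y$ to evict the robber, the cop in $G$ must instead step onto the robber's actual twin vertex $x$ (legal since $N(x)=N(y)$), otherwise the real robber is never forced to move; and (ii) argue that surrounding the image implies surrounding the robber, which uses that the image and the robber can only differ on twin vertices with identical neighborhood $S$, together with the observation that the robber can never be surrounded on a vertex of $S$ itself (each such vertex has at least $k+1$ neighbors from the class). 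Your proposal, as written, would compile the \ACR/\CFR cases but leaves the \SCR case unproven.
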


Finally, we complement our exponential kernels for these variants by arguing about their incompressibility:

\begin{theorem}\label{th:variantPolyCOmpression}
\LCR, \CAR, \CFR, \ACR, and \CR on strongly connected directed and oriented graphs parameterized by $\mathsf{vcn}$ do not admit a polynomial compression, unless \NPoly.
\end{theorem}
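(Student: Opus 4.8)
The plan is to establish incompressibility for every listed variant through \emph{polynomial parameter transformations}, so that each reduction inherits the lower bound already proved for \CR in \Cref{th:PolyCompression}. Concretely, it suffices to transform (in polynomial time) an instance of \RBDS into an instance of the variant whose vertex cover number is polynomially bounded in the \RBDS parameter and whose answer---that the variant cop number is at most $k$---coincides with the \RBDS answer. Since the lower bound of \Cref{th:PolyCompression} is itself obtained through such a transformation from \RBDS, the cleanest route is to reuse its output graph $G$ and argue, variant by variant, that the variant cop number of $G$ (after an elementary augmentation when needed) equals $\mathsf{c}(G)$.

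The reason to expect this to work is the pair of structural features that the construction of \Cref{th:PolyCompression} is built to guarantee: high girth and high minimum degree. High minimum degree forces the cops to genuinely dominate the robber's neighbourhood, which is what makes the game value track the \RBDS optimum; high girth makes the relevant parts of $G$ locally tree-like, which is precisely the property that neutralizes several of the rule changes. For \CFR, the absence of short cycles means that a robber moving at speed $s$ reaches no vertex it could not already threaten in the slow game along the (unique) short paths, so the fast and slow robbers share the same winning condition on $G$. For \CAR, the same local sparsity keeps the cops out of the robber's reach until the moment of capture, so the option to attack an adjacent cop never becomes available before the robber loses. For \LCR, I would show that a cop strategy witnessing $\mathsf{c}(G)\le k$ can be serialized to one cop per round without loss, again because each cop controls a local, tree-like region in which no simultaneity of moves is required.

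For \ACR I would first augment $G$ by attaching a pendant vertex to each vertex that already lies in the vertex cover; leaves create no cycles, so the girth is preserved, and attaching them to cover vertices leaves $\mathsf{vcn}$ unchanged. These leaves give the cops a place to ``mark time,'' letting a fully active strategy simulate the standard one, and I would then verify that forced movement grants the robber no extra escape in the tree-like regions. For the directed and oriented cases I would fix a strongly connected orientation $\overrightarrow{G}$ of the (augmented) construction, e.g.\ via an ear decomposition, chosen so that every undirected move used by a winning player in $G$ remains realizable in $\overrightarrow{G}$, and argue that strong connectivity together with the high-girth structure preserves the game value, giving $\mathsf{c}(\overrightarrow{G})=\mathsf{c}(G)$. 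In every case the augmentation adds only $O(|V(G)|)$ vertices and keeps $\mathsf{vcn}$ polynomially bounded, so each transformation is a valid polynomial parameter transformation and the claimed incompressibility follows, unless \NPoly.

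The main obstacle is the per-variant equivalence of cop numbers: the global correctness of the base reduction must survive each local change of the rules, and the proof that the high-girth/high-min-degree skeleton renders the rule change inert differs from variant to variant. I expect \CFR and the directed/oriented cases to be the delicate points. For \CFR one must rule out that arbitrary-speed moves let the robber exploit the global topology of $G$ rather than just its local tree-like structure, so the high-girth argument has to be made quantitative relative to $s$ and to the diameters of the gadgets. For the orientation, one must choose $\overrightarrow{G}$ so that it neither traps the robber (making the cops' task too easy and lowering the cop number) nor hands it a one-way escape (raising it); showing that a single orientation simultaneously preserves both the ``yes'' and the ``no'' sides of the reduction is where most of the work lies.
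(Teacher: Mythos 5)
Your high-level skeleton---reuse the \RBDS construction of Theorem~\ref{th:PolyCompression} and handle the variants one by one---is the same as the paper's, but the per-variant arguments you propose contain genuine gaps, and in places rest on claims that are false in general. The paper never proves (and does not need) that the variant cop numbers \emph{equal} $\mathsf{c}(G'')$; it only needs that yes/no instances are preserved, and this follows from two much simpler observations that your girth-based reasoning misses. First, for yes-instances, the cops start on a dominating set of $G''$ (coming from the RBDS), and then they capture the robber on their very first move, before the robber ever gets a turn; this one-round win is valid verbatim for lazy cops (only one cop needs to move), an attacking robber (it never gets a turn in which to attack), a fast robber (it never gets to move), and active cops. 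Second, for no-instances, $\mathsf{c}(G)\leq \mathsf{c}_{lazy}(G)$, $\mathsf{c}(G)\leq \mathsf{c}_{attack}(G)$, and $\mathsf{c}(G)\leq \mathsf{c}_s(G)$ hold on \emph{every} graph by monotonicity, so no girth analysis is required. By contrast, your claim for \LCR~that any winning strategy ``can be serialized to one cop per round without loss'' is false as a general principle (there are planar graphs with $\mathsf{c}=3$ but $\mathsf{c}_{lazy}\geq 4$, as cited in the paper), and your claim for \CFR~that high girth makes the fast and slow robber ``share the same winning condition'' is unsubstantiated and unnecessary. Only \ACR~genuinely needs new work, because there the \emph{robber} is also forced to move, so monotonicity fails; the paper resolves this with an explicit evasion strategy (when safe but forced to move from a vertex in $U_i$, the robber steps to $W_i$, which is safe by symmetry). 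Your pendant-vertex augmentation aims at the wrong side of this issue (cops marking time is irrelevant, since in yes-instances they win in one move) and in fact damages the yes-instance argument: a robber may start on a pendant vertex, which the RBDS-induced dominating set does not dominate.

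The largest gap is the directed/oriented case. Your plan---pick a strongly connected orientation ``so that every undirected move used by a winning player remains realizable''---cannot work as stated: an orientation keeps only one direction of each edge, and the robber's evasion strategy in $G''$ requires bouncing back and forth between $U$ and $W$, so no single orientation of $G''$ preserves the no-instance side. The paper handles directed graphs by replacing every edge with a digon (a directed 2-cycle), which trivially preserves all moves but is not an orientation; for oriented graphs it builds a genuinely new construction with \emph{three} blocks $U$, $W$, $X$ and edges oriented cyclically $U\rightarrow W\rightarrow X\rightarrow U$, all edges from $P$ oriented into the blocks, and an auxiliary vertex $z$ (with arcs $v\rightarrow z$ for $v\in U_p\cup W_p\cup X_p$ and $z\rightarrow y$) to restore strong connectivity; the robber then evades by moving cyclically $U\rightarrow W\rightarrow X\rightarrow U$. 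This three-block cyclic gadget is the key missing idea in your proposal, and without it (or something equivalent) the oriented-graph part of Theorem~\ref{th:variantPolyCOmpression} does not go through.
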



\subsection{Additional Related Works}
For a graph with girth at least 5, the cop number is lower bounded by the minimum degree of the graph~\cite{aigner}. As implied by the lower bound for the \textit{Zarankiewicz problem}~\cite{bollobas}, an extremal graph with girth 5 has $\Omega(n^{3/2})$ edges.  
In a graph with $\Omega(n^{3/2})$ edges, if there is a vertex whose degree is smaller than $c\sqrt{n}$, for an appropriate constant $c$, then we can remove it and still get a smaller graph with $\Omega(n^{3/2})$ edges. 
Hence, eventually, every vertex has degree $\Omega(\sqrt{n})$. 
Therefore, the cop number of such a graph is $\Omega(\sqrt{n})$. 
Meyniel~\cite{mey} conjectured this to be tight, that is, $\mathcal{O}(\sqrt{n})$ cops are sufficient to capture the robber in any connected graph. 
This is probably the deepest conjecture in this field (see~\cite{baird}). Since then, several attempts have been made to bound the cop number of general graphs~\cite{chini,mey-peng,mey-scott}. Although these results establish that the $\mathsf{c}(G) = o(n)$, even the question  whether $c(G) = \mathcal{O}(n^{1-\epsilon})$, for $\epsilon >0$, remains open. 

Many graph classes have unbounded cop number. The graph classes for which the cop number is $\Omega(\sqrt{n})$ are called \textit{Meyniel extremal}. These include bipartite graphs~\cite{baird}, subcubic graphs~\cite{mey-boundedDegree}, and polarity graphs~\cite{designs}. Meyniel's conjecture was also considered for random graphs~\cite{mey-random2,mey-random1,mey-random3}.

Lastly, we remark that variations of \CR vary mainly depending on the capabilities of the cops and the robber. Some of these variations were shown to have correspondence with several width measures of graphs like treewidth~\cite{seymour}, pathwidth~\cite{parsons1}, tree-depth~\cite{depth}, hypertree-width~\cite{adler}, cycle-rank~\cite{depth}, and directed tree-width~\cite{dtwidth}. Moreover, Abraham et al.~\cite{cop-decs} defined the concept of a  \textit{cop-decomposition}, which is based on the cop strategy in the \CR game on minor-free graphs provided by Andreae~\cite{andreae}, and showed that it has significant algorithmic applications.

\section{Preliminaries}\label{S:preliminaries}

For $\ell \in \mathbb{N}$, let $[\ell]$ = $\{1,\ldots, \ell \}$. Whenever we mention $\frac{a}{b}$, we mean $\lceil \frac{a}{b} \rceil$. 

 \subsection{Graph Theory}
For a graph $G$, we denote its vertex set by $V(G)$ and edge set by $E(G)$. We denote the size of $V(G)$ by $n$ and size of $E(G)$ by $m$. In this paper, we consider finite, connected\footnote{The cop number of a disconnected graph is the sum of the cop numbers of its components; hence, we assume connectedness.}, and simple graphs.
Let $v$ be a vertex of a graph $G$. Then, by $N(v)$ we denote the \textit{open neighbourhood} of $v$, that is, $N(v)= \{u ~|~ uv \in E(G)\}$. 
By $N[v]$ we denote the \textit{close neighbourhood} of $v$, that is, $N[v] = N(v) \cup \{v\}$. For $X \subseteq V(G)$, we define $N_X(v) = N(v) \cap X$ and $N_X[v] = N[v] \cap X$. We say that $v$ \textit{dominates} $u$ if $u\in N[v]$. The \textit{girth} of a graph $G$ is the length of a shortest cycle contained in $G$.  
A {\em $u,v$-path} is a path with endpoints $u$ and $v$. A path is \textit{isometric} if it is a shortest path between its endpoints. For $u,v\in V(G)$, let $d(u,v)$ denote the length of a shortest $u,v$-path.

Let $G$ be a graph and $U\subseteq V(G)$.  Then, $G[U]$ denotes the subgraph of $G$ induced by $U$. A set $U \subseteq V(G)$ is a \textit{vertex cover} if $G[V(G) \setminus U]$ is an independent set. The minimum cardinality of a vertex cover of $G$ is its \textit{vertex cover number} ($\mathsf{vcn}$).  Moreover,  $U$ is a \textit{cluster vertex deletion set} if $G[V(G) \setminus U]$ is a disjoint union of cliques. The minimum size of a cluster vertex deletion set of a graph is its \textit{cluster vertex deletion number} ($\mathsf{cvd}$). Additionally, $U$ is a \textit{deletion to stars set} if $G[V(G) \setminus U]$ is a disjoint union of star graphs. The minimum size of a deletion to stars set of a graph is its \textit{deletion to stars number} ($\mathsf{dts}$). Two vertices $u,v \in V(G)$ have the \textit{same type} if and only if $N(v)\setminus \{u\} = N(u) \setminus \{v\}$. A graph $G$ has \textit{neighborhood diversity} at most $w$ if there exists a partition of $V(G)$ into at most $w$ sets, such that all the vertices in each set have the same type.

\subsection{\CNR}
\CR is a two-player perfect information pursuit-evasion game played on a graph. 
One player is referred as \textit{cop player} and controls a set of \textit{cops}, and the other player is referred as \textit{robber player} and controls a single \textit{robber}. 
The game starts with the cop player placing each cop on some vertex of the graph, and multiple cops may simultaneously occupy the same vertex. Then, the robber player places the robber on a vertex. 
Afterwards, the cop player and the robber player make alternate moves, starting with the cop player. 
In the cop player move, the cop player, for each cop, either moves it to an adjacent vertex (along an edge) or keeps it on the same vertex. In the robber player move, the robber player does the same for the robber. For simplicity, we will say that the cops (resp., robber) move in a cop (resp., robber) move instead of saying that the cop (resp., robber) player moves the cops (resp., robber). Throughout, we denote the robber by $\R$.

A situation where one of the cops, say, $\C$, occupies the same vertex as $\R$ is a \textit{capture}. (We also say that the $\C$ captures $\R$ and that $\R$ is captured by $\C$.) The cops win if they have a strategy to capture $\R$, and $\R$ wins if it has a strategy to evade a capture indefinitely. A graph $G$ is \textit{$k$-copwin} if $k$ cops have a winning strategy in $G$.
The \textit{cop number} of $G$, denoted $\mathsf{c}(G)$, is the minimum $k$ such that $G$ is $k$-copwin. For brevity, $G$ is said to be \textit{copwin} if it is $1$-copwin (i.e. $\textsf{c}(G) = 1$). 
Accordingly, we have the following decision version of the problem.




\Pb{\CNR}{ A graph $G$, and an integer $k \in \mathbb{N}$}{Question}{Is $G$ $k$-copwin?}

We say that some cops \textit{guard} a subgraph $H$ of $G$ if $\R$ cannot enter $H$ without getting captured by one of these cops in the next cop move. We shall use the following result:
\begin{proposition}[\cite{aigner}] \label{P:aigner}
Let $P$ be an isometric path in $G$. Then one cop can guard $P$ after a finite number of rounds/cop moves.
\end{proposition}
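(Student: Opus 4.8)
The plan is to have the single cop play a \emph{shadow strategy}: it stays on $P$ at a vertex that ``tracks'' the robber, so that the robber can never step onto $P$ without being caught on the very next cop move. Write $P = v_0 v_1 \cdots v_k$, and recall that isometry means $d(v_i, v_j) = |i-j|$ for all $i,j$. For a vertex $u \in V(G)$, I would define its \emph{projection} onto $P$ to be $\pi(u) = v_{j(u)}$, where $j(u) = \min\{\, d(u, v_0),\, k\,\}$; that is, I track the robber purely by its distance to the fixed endpoint $v_0$, clamped to lie in $\{0,\dots,k\}$. The cop's goal is to occupy $\pi(r)$ whenever $r$ is the robber's current location.

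The strategy proceeds in two phases. In the first phase the cop walks along a shortest path to $v_0$; this takes finitely many moves, during which the robber moves arbitrarily. In the second phase the cop establishes and then maintains the invariant that, immediately after each cop move, it stands on $v_{j(r)}$, where $r$ is the robber's position. The key elementary fact is that $u \mapsto j(u)$ is $1$-Lipschitz: if $u, u'$ are equal or adjacent then $|d(u,v_0) - d(u',v_0)| \le 1$, and clamping with $\min(\cdot, k)$ preserves this, so $|j(u) - j(u')| \le 1$. Hence, when the robber moves, its projection index changes by at most one, and the cop can restore the invariant with a single step along $P$. To establish the invariant in the first place, the cop moves its index one step toward the target index $j(r)$ each round; since $j(r)$ is always confined to the bounded interval $\{0,\dots,k\}$, the robber cannot keep the projection running away from the cop forever, so the cop reaches $v_{j(r)}$ after finitely many rounds.

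It remains to check that this invariant indeed guards $P$, and this is the only step where isometry is essential. Suppose the invariant holds and the robber, at some $r \notin V(P)$, moves onto a path vertex $v_\ell$; then $r$ is adjacent to $v_\ell$, so $d(r, v_\ell) = 1$. Using $d(v_0, v_\ell) = \ell$ (isometry) and the triangle inequality twice gives $\ell - 1 \le d(r, v_0) \le \ell + 1$, whence $|j(r) - \ell| \le 1$ after accounting for the clamp at $k$. Thus the cop, sitting at $v_{j(r)}$, is equal or adjacent to $v_\ell$ in $G$ and captures the robber on its next move, so the robber can never safely enter $P$. I expect the main obstacle to be exactly this interplay between the projection and isometry: the ``obvious'' choice of projecting to the \emph{nearest} vertex of $P$ does not work, since on an isometric path the nearest-vertex projection of two adjacent vertices can differ by two (a robber adjacent to a short chord of $P$ can force such a jump), which would let the robber slip onto $P$. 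Choosing the endpoint-distance projection sidesteps this: it makes the Lipschitz property trivial and isolates the use of isometry in the single guarding inequality above.
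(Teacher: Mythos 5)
Your proposal is correct, and it is essentially the classical Aigner--Fromme argument that the paper itself cites for Proposition~1 without reproducing a proof: the endpoint-distance projection $j(u)=\min\{d(u,v_0),k\}$, its $1$-Lipschitz property, the catch-up phase justified by the boundedness of the interval $\{0,\dots,k\}$, and the triangle-inequality step showing a robber entering $v_\ell$ lands within distance one of the cop's position are exactly the standard ingredients. The only cosmetic looseness is the informal ``cannot run away forever'' claim, which is easily made precise (the index gap is non-increasing each round and must strictly decrease at least once every $k+1$ rounds while positive), so nothing substantive is missing.
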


Currently, the best known algorithm to decide whether $G$ is $k$-copwin is by Petr et al.~\cite{AlgoCR}:

\begin{proposition}[\cite{AlgoCR}]\label{P:XP}
\CR is solvable in $\mathcal{O}(kn^{k+2})$ time.
\end{proposition}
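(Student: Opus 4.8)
The final statement is Proposition~\ref{P:XP}, which is cited from Petr et al.~\cite{AlgoCR}. Let me think about how I would prove this XP-time algorithm for \CR.

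\textbf{The Plan.}
The plan is to model the \CR game as a finite reachability game on a suitably defined \emph{game state graph} (also called the configuration graph or arena), and then solve this reachability game by a standard backward-induction fixed-point computation. The key insight is that although the game can in principle go on forever, the set of possible \emph{configurations} is finite, so the cops have a winning strategy if and only if they can force capture within a number of moves bounded by the number of configurations.

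\textbf{Step 1: Defining the configuration space.}
First I would define a configuration to capture all the information relevant to optimal play. Since the cops are indistinguishable, a placement of $k$ cops is a multiset of $k$ vertices, i.e.\ an element of the set of size-$k$ multisets over $V(G)$; there are $\binom{n+k-1}{k} = \mathcal{O}(n^k)$ such placements. A full game configuration must also record the robber's position and whose turn it is, giving states of the form $(C, r, \tau)$ where $C$ is a cop placement, $r \in V(G) \cup \{\bot\}$ is the robber's position ($\bot$ meaning the robber has not yet chosen a vertex), and $\tau \in \{\textsf{cop}, \textsf{rob}\}$ indicates which player moves next. The total number of configurations is $\mathcal{O}(n^{k+1})$.

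\textbf{Step 2: The fixed-point / backward-induction computation.}
The core is to compute the set of configurations from which the cops can force a win. I would work backwards from the \emph{capture} configurations (those in which some cop shares a vertex with the robber), labelling them as cop-win. Then I would iteratively propagate this label: a cop-turn configuration is cop-win if \emph{some} legal cop move leads to a cop-win configuration, and a robber-turn configuration is cop-win if \emph{every} legal robber move leads to a cop-win configuration. This is the standard attractor computation for reachability games. Because the arena is finite, the monotone iteration reaches a fixed point after at most $\mathcal{O}(n^{k+1})$ rounds, and the final answer is obtained by checking the initial placement phase: $G$ is $k$-copwin if and only if there exists a size-$k$ cop placement $C$ such that for every robber response $r$, the resulting configuration is labelled cop-win.

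\textbf{Step 3: Bounding the running time.}
For the complexity bound the main work is to argue each configuration is processed efficiently. The hard part (and the place where the sharper $\mathcal{O}(kn^{k+2})$ bound, as opposed to a looser polynomial, comes from) is in bounding the total work of the propagation step. A configuration's out-degree in the arena is governed by the number of legal moves: for a cop-turn state each of the $k$ cops can move to one of at most $n$ neighbours (or stay), and the robber likewise has at most $n$ choices. The most careful accounting is needed to show that the total number of edges examined over the whole attractor computation is $\mathcal{O}(kn^{k+2})$ rather than something larger; this amounts to observing that each of the $\mathcal{O}(n^{k+1})$ states is relaxed a bounded number of times and that the per-move transition cost contributes the extra factor of $kn$. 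I would handle this via a worklist implementation of the attractor algorithm, where each directed edge of the arena is traversed $\mathcal{O}(1)$ times, and then bound the number of edges. I expect this final complexity accounting to be the main obstacle, since the looser correctness argument is routine but squeezing out the precise exponent requires care in how cop placements (as multisets) and their neighbourhoods are enumerated and in charging transition costs tightly.
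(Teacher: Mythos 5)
Your overall architecture (finite configuration graph, attractor/backward-induction computation, initial-placement check) is the same standard technique that the paper sketches in its ``XP Algorithm for Variants'' subsection and attributes to Petr et al.~\cite{AlgoCR}; note that the paper itself does not prove Proposition~\ref{P:XP} but cites it. However, your complexity accounting in Step 3 has a genuine gap. In \CR all $k$ cops move \emph{simultaneously} on a cop turn, so the out-degree of a cop-turn configuration is not $\mathcal{O}(kn)$ as you assert: the statement ``each of the $k$ cops can move to one of at most $n$ neighbours (or stay)'' means the set of legal cop moves is the \emph{product} of the per-cop choices, so a single cop-turn state can have on the order of $n^k$ successors, not $kn$. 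With that out-degree your arena has $\mathcal{O}(n^{k+1}\cdot n^{k}) = \mathcal{O}(n^{2k+1})$ edges, and the worklist attractor computation then only gives the older Berarducci--Intrigila bound of $\mathcal{O}(n^{2k+1})$~\cite{berarducci} mentioned in the paper's survey, not $\mathcal{O}(kn^{k+2})$.

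The missing idea needed to reach $\mathcal{O}(kn^{k+2})$ is to \emph{serialize} the cop move: replace each cop-turn state by $k$ intermediate states $(C,r,i)$ with $i\in[k]$, in which only cop $i$ moves (to a neighbour or stays) and control then passes to the state with index $i+1$, the robber moving only after cop $k$ has moved. One must also argue that this serialization preserves the game value: the robber is frozen during the whole sequence, so the reachable end placements coincide with the simultaneous moves, and if a capture arises mid-sequence the cops could equally achieve capture at the end of the corresponding simultaneous move by keeping the remaining cops in place; hence no tempo or information is gained or lost. This yields $\mathcal{O}(kn^{k+1})$ states, each of out-degree at most $n+1$, hence $\mathcal{O}(kn^{k+2})$ arena edges, and the counter-based attractor computation runs in time linear in the number of edges, giving the claimed bound. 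Without this serialization step (or an equivalent device), your argument as written establishes only an $\mathcal{O}(n^{2k+1})$-type bound.
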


If a cop $\C$ occupies a vertex $v$, then $\C$ \textit{attacks} $N[v]$. A vertex $u$ is \textit{safe} if it is not being attacked by any cop. If $\R$ is on a vertex that is not safe, then $\R$ is \textit{under attack}.


\subsection{Variations of \CR}
Several variations of \CR have been studied in the literature, differing mainly in the rules of movements of agents, the definition of the capture, and the capabilities of the agents. We provide below the definitions of the games considered in this paper. We list below some of the primary properties of the gameplay in which these variations differ:

\begin{enumerate}
    \item \textit{Speed of agents}: If an agent has speed $s$, where $s\in \mathbb{N}$, then the agent can move along at most $s$ edges in its turn. We note that a robber with speed $s$ cannot move over a cop, that is, the robber can move along a path of length at most $s$ not containing any cop, in its turn.
    
    \item \textit{Lazy/active/flexible cops}:
    Let $C$ be the set of cops and let $A\cup F \cup L$ be a partition of the set of cops such that $A$ is the set of \textit{active} cops, $F$ be the set of \textit{flexible} cops, and $L$ be the set of \textit{lazy} cops. Then, in each cop move, at most one cop from $L$ can make a move, each cop from $A$ must make a move, and each cop from $F$ can either make a move or stay on the same vertex. Unless mentioned otherwise, all cops are assumed to be flexible. 

    \item \textit{Reach of cops:}
    If a cop $\C_i$ has \textit{reach} $\lambda_i$, then $\R$ cannot access a vertex that is at a distance at most $\lambda_i$ from the vertex occupied by $\C_i$. Here, think of the cop $\C_i$ as having a gun with range $\lambda_i$. Hence, if $\C_i$ can reach a vertex that is at most distance $\lambda_i$ from the robber's vertex at the end of a cop move, then $\C_i$ can shoot $\R$, and the cops win. Similarly, on a robber move, even if $\R$ has speed $s$, then it can move only along a path of length at most $s$ that does not contain any vertex that is at a distance at most $\lambda_i$ from $\C_i$. In \CR, for each cop $\C_i$, $\lambda_i = 0$.   
    
    \item \textit{Visible/invisible robber}: If the robber is \textit{visible}, then the cops know the position of the robber. If the robber is \textit{invisible}, then the cops do not know the position of the robber. Moreover, we say that cops have \textit{$d$-visibility} if cops can see the position of the robber only if it is at most $d$ edges away from at least one of the cops.

\end{enumerate}

Next, we define the variants of \CR for which we will extend our results.

\medskip
\noindent\textbf{\LCR:} \LCR~\cite{offner} is one the the most well-studied variants of \CR games~\cite{bal,sim}. In this variant, the cops are lazy, that is, at most one cop can move during a cops' turn. This restricts the ability of the cops with respect to the classical version. The minimum number of lazy cops that can ensure a capture in a graph $G$ is known as the \textit{lazy cop number} and is denoted by $\mathsf{c}_{lazy}(G)$. Clearly, $\mathsf{c}(G) \leq \mathsf{c}_{lazy}(G)$, as $\mathsf{c}_{lazy}(G)$ cops can capture the robber in the classical version (using the winning strategy of the \LCR game). We remark that this game is also studied with the name \textit{one-cop-moves} game~\cite{lazyplanar,lazywang}.



\medskip
\noindent\textbf{\CAR:}
In \CAR~\cite{bonatocar}, the robber is able to \textit{strike back} against the cops. If on a robber's turn, there is a cop in its neighborhood, then the robber can attack the cop and \textit{eliminate} it from the game. However, if more than one cop occupy a vertex and the robber attacks them, then only one of the cops gets eliminated, and the robber gets captured by one of the other cops on that vertex. The cop number for capturing an attacking robber on a graph $G$ is denoted by $\mathsf{c}_{attack}(G)$, and is referred to as the \textit{attacking cop number} of $G$. Clearly, $\mathsf{c}(G) \leq \mathsf{c}_{attack}(G) \leq 2 \cdot \mathsf{c}(G)$, as, on the one hand, $\mathsf{c}_{attack}(G)$ cops can capture the robber in the classical version. On the other hand, if we play the attacking version with $2\cdot \mathsf{c}(G)$ cops using the strategy of the classical variant with the only difference that there are always at least two cops on a vertex, then the cops have a winning strategy. 

\medskip
\noindent\textbf{\ACR:}
In the game of \ACR~\cite{active}, each cop as well as the robber are active, that is,  in a cop/robber move, each cop/robber has to move to an adjacent vertex. The \textit{active cop number} of a graph $G$, denoted by $\mathsf{c}_{active}(G)$, is the minimum number of cops that can ensure capture in this game. It is easy to see that $\mathsf{c}_{active}(G) \leq 2\cdot \mathsf{c}(G)$, as if we keep one extra cop adjacent to each cop in the winning strategy for \CR, then whenever some cop has to skip a move, it can simply do so by switching with the extra cop adjacent to it. 

\medskip
\noindent\textbf{\SCR:}
In the game of \SCR~\cite{surrounding1,surrounding2}, the definition of capture is different. In this game, a cop and the robber can occupy the same vertex of the graph during the game, but the robber cannot end its turn by remaining at a vertex occupied by some cop. The cops win by \textit{surrounding} the robber, that is, if the robber occupies a vertex $v$, then there is a cop at each vertex $u\in N(v)$. The \textit{surrounding cop number} for a graph $G$ is denoted as $\mathsf{c}_{surround}(G)$. It is easy to see that $\mathsf{c}_{surround}(G) \geq \delta(G)$, where $\delta(G)$ is the \textit{minimum degree} of the graph.

\medskip
\noindent\textbf{\CFR:}
In the game of \CFR~\cite{fomin}, the robber can move faster than the cops. If $\R$ has speed $s$, then it can move along a path with at most $s$ edges not containing any cop. The minimum number of cops that can ensure capture of 
a fast robber with speed $s$ in a graph $G$ is denoted by $\mathsf{c}_s(G)$. For $s \geq 2$, deciding whether $\mathsf{c}_s(G)\leq k$ is NP-hard as well as W[2]-hard even when input graph $G$ is restricted to be a split graph~\cite{fomin}. The game of \CFR is well-studied~\cite{fast2,balister,fast3}.

\medskip
\noindent\textbf{\CR on Directed Graphs:}
The game of \CR is also well-studied for oriented/directed graphs~\cite{loh,hosseini,mohar}. The game is played on a directed graph $\overrightarrow{G}$, and the players can only move along the orientation of the arcs. 

Finally, we define a variant of \CR that generalizes many well-studied variants of \CR:

\medskip
\noindent\textbf{\textsc{Generalized} \CR:}
Consider the following generalized version of \CR. Here the input is $(G,\C_1,\ldots,\C_k, \R)$ where  each cop $\C_i$ has \textit{speed} $s_i$ (possibly different for each cop) and $\R$ has speed $s_R$. Moreover, each cop can be either forced to be \textit{active} (all active cops have to move in each turn), \textit{lazy} (at most one lazy cop moves in each turn), or \textit{flexible} (a flexible cop can either move or stay on the same vertex in its move). Moreover, the robber can also be forced to be either lazy or flexible. Furthermore, each cop $\C_i$ can have \textit{reach} $\lambda_i$ (possibly different for each cop). This game generalizes several well-studied variants of \CR along with \CFR, \ACR, and \textsc{Cops and Robber From a Distance}~\cite{bonatoDistance}. It also generalizes the game of~\cite{MTP3,app1}. 

Finally, we note that we assume the notion of ``being active'' to be defined only when the agent has speed $s=1$.  But, this notion can be defined in multiple ways if the agent has speed $s>1$: the player might have to move at least $s'\leq s$ edges, the player may have to move to a vertex at a distance at least $s' \leq s$ from the current vertex, the player may or may not be allowed to repeat edges, and so on. We remark that our kernelization result for \textsc{Generalized CnR} can be made to work, with some changes, considering any of these notions discussed.  

\subsection{An XP Algorithm for Variants}
For graph searching games, there is a standard technique to get an XP-time algorithm with running time $n^{\mathcal{O}(k)}$ (where $n$ is the size of input graph and the question is whether $k$ cops have a winning strategy). This technique involves generating a \textit{game graph} where each vertex represents a possible placement of all the agents on the vertices of $G$. Since $k$ cops and a single robber can have $n^{k+1}$ possible placements on $G$, the game graph has $n^{k+1}$ vertices. The following step is to mark all of the \textit{winning states} (that is, where the robber is captured). Afterwards, we use an algorithm to keep adding states to the set of winning states in the following manner. On a cop move, from a given state $S$, if there exists a movement of cops that can change the game state $S$ to a winning state, we add $S$ to the winning states. On a robber move, for a game state $S$, if all the possible moves of the robber lead to a winning state, we add $S$ to the winning state.  Finally, if there exists a position of $k$ cops such that, for any position of the robber, these states are in winning states, we declare that $k$ cops have a winning strategy in $G$. It is easy to see that this algorithm can be implemented in $n^{\mathcal{O}(k)}$ time. 

Petr, Portier, and Versteegan~\cite{AlgoCR} gave an implementation of this algorithm, for \CR, that runs in $\mathcal{O}(kn^{k+2})$ time. It is not difficult to see that this algorithm can be made to work for all the variants we discussed by changing the rules to navigate between game states. For \CAR, the only extra consideration is that if $\R$ attacks a cop (among $k$ cops) and does not get captured in the next cop move, then we have a game state, say, $S'$, with $k-1$ cops and one robber, where the placement of these agents is a subset of a placement of $k+1$ agents in one of the original game states, and hence we prune $S'$. Thus, we have the following proposition.

\begin{proposition}\label{P:generalXP}
For any variant of \CR considered in this paper, an instance $(G,k)$ can be solved in $\mathcal{O}(kn^{k+2})$ time.
\end{proposition}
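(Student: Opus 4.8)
The plan is to show that the configuration-graph dynamic program underlying Proposition~\ref{P:XP} is essentially variant-agnostic. For every game considered here, a game state is a placement of the $k$ cops and the single robber on $V(G)$ together with a bit recording whose turn it is, so there are at most $\mathcal{O}(n^{k+1})$ states regardless of the variant. The backward (attractor) labelling that marks a state as cop-winning --- a cop-turn state is winning if some cop move reaches an already-winning state, and a robber-turn state is winning if \emph{every} robber move reaches a winning state --- is structurally identical across variants. Consequently, only two ingredients depend on the specific game: the set of terminal (capture) states, and the local transition relation describing the legal moves. First I would isolate these two ingredients in the description of Petr et al.'s algorithm and observe that its $\mathcal{O}(kn^{k+2})$ running time is governed solely by the number of states and the cost of enumerating, for each state, the successors reachable in one turn.

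Second, I would verify variant by variant that redefining terminals and transitions preserves the bound. For the lazy, active, and flexible restrictions, and for orientations of $G$, the modification only \emph{restricts} which cop/robber moves are legal (at most one cop moves; every cop must move; moves follow arc directions), so the transition relation is a subset of the classical one and is computable within the same budget. For the reach parameters $\lambda_i$ and for \SCR, only the terminal set and the forbidden-move predicate change: a state is terminal when the robber is within distance $\lambda_i$ of some cop (respectively, when the robber is surrounded), and a robber move is illegal if it ends on a forbidden vertex. Both predicates are testable in polynomial time per state by precomputing pairwise distances, so the asymptotics are unchanged.

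The two places I expect to require genuine care are the speed-$s$ variants (\CFR and speed-$s$ agents) and \CAR. For speed $s$, a move is no longer to a neighbour but to any vertex reachable along a path of length at most $s$, and for the robber this path must avoid cop-occupied vertices. The key observation is that the number of \emph{target} vertices is still at most $n$, and the set reachable from a given state can be computed by a single breadth-first search (in the graph with the cop vertices deleted, in the robber's case); hence enumerating the successors of a state costs no more, up to the stated factors, than in the classical game, and the state count is untouched. For \CAR, the subtlety is that a successful attack eliminates a cop, producing configurations with fewer than $k$ cops; here I would argue, as sketched just before the statement, that each such reduced configuration is a sub-placement of an original configuration and that the number of configurations with at most $k$ cops is still $\sum_{j\le k}\mathcal{O}(n^{j+1}) = \mathcal{O}(n^{k+1})$, so incorporating the attack transition and pruning the resulting dominated states leaves the bound intact.

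Finally, I would collect these observations to conclude that in each case the algorithm runs within $\mathcal{O}(kn^{k+2})$ time, establishing the proposition. The main obstacle is the bookkeeping for \CAR, since it is the only variant in which the set of agents shrinks during play, and hence the only one where one must explicitly argue that the configuration space does not grow beyond $\mathcal{O}(n^{k+1})$; the speed-$s$ reachability computations are routine by comparison.
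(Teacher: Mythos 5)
Your proposal is correct and takes essentially the same route as the paper: both adapt the game-graph backward-induction algorithm of Petr et al., observing that only the terminal (capture) states and the move/transition rules are variant-specific, and both handle \CAR by noting that configurations with fewer cops are sub-placements of the original $\mathcal{O}(n^{k+1})$ states and can be pruned. Your write-up is simply a more detailed elaboration of the paper's one-paragraph justification.
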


\subsection{Parameterized complexity}
In the framework of parameterized complexity, each problem instance is associated with a non-negative integer, called a \textit{parameter}. A parametrized problem $\Pi$ is \textit{fixed-parameter tractable} ($\mathsf{FPT}$) if there is an algorithm that, given an instance $(I,k)$ of $\Pi$, solves it in time $f(k)\cdot |I|^{\mathcal{O}(1)}$ for some computable function $f(\cdot)$. Central to parameterizedcomplexity is the W-hierarchy of complexity classes:
$
\mathsf{FPT} \subseteq \mathsf{W[1]}  \subseteq \mathsf{W[2]} \subseteq \ldots \subseteq \mathsf{XP}.
$

Two instances $I$ and $I'$ (possibly of different problems) are \textit{equivalent} when $I$ is a Yes-instance if and only if $I'$ is a Yes-instance. A \textit{compression} of a parameterized problem $\Pi_1$ into a (possibly non-parameterized) problem $\Pi_2$ is a polynomial-time algorithm that maps each instance $(I,k)$ of $\Pi_1$ to an equivalent instance $I'$ of $\Pi_2$ such that size of $I'$ is bounded by $g(k)$ for some computable function $g(\cdot)$. If $g(\cdot)$ is polynomial, then the problem is said to admit a \textit{polynomial compression}.
A \textit{kernelization algorithm} is a compression where $\Pi_1 = \Pi_2$. Here, the output instance is called a \textit{kernel}.  
Let $\Pi_1$ and $\Pi_2$ be two parameterized problems. A \textit{polynomial parameter transformation} from $\Pi_1$ to $\Pi_2$ is a polynomial time algorithm that, given an instance $(I,k)$ of $\Pi_1$, generates an equivalent instance $(I',k')$ of $\Pi_2$
such that $k' \leq p(k)$, for some polynomial $p(\cdot)$. It is well-known that if $\Pi_1$ does not admit a polynomial compression, then $\Pi_2$ does not admit a polynomial compression~\cite{bookParameterized}. We refer to the books~\cite{bookParameterized,bookKernelization} for  details on parameterized complexity.

\section{Bounding the Cop Number}\label{S:VCBound}
In the following lemma, we give a general upper bound for the cop number, which we use to derive bounds for several graph parameters. 
\begin{lemma}\label{L:boundGeneral}
Let $G$ be a graph and let $U \subseteq V(G)$ be a set of vertices such that for each connected component $H$ of $G[V(G) \setminus U]$, $\mathsf{c}(H) \leq \ell$. Then, $\mathsf{c}(G) \leq \lceil\frac{|U|}{2}\rceil +\ell$.
\end{lemma}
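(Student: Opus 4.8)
The plan is to split the $\lceil |U|/2\rceil + \ell$ cops into two teams: a team of $\lceil |U|/2\rceil$ \emph{guards} whose sole job is to seal off $U$, and a team of $\ell$ \emph{catchers} that finish the robber inside a single component of $G[V(G)\setminus U]$. The crucial leverage is Proposition~\ref{P:aigner}: a single cop can permanently guard an isometric (i.e., shortest) path, and in particular every vertex lying on it.

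First I would neutralize $U$. I fix an arbitrary pairing of the vertices of $U$ into $\lfloor |U|/2\rfloor$ pairs, leaving one unpaired vertex if $|U|$ is odd. For each pair $\{u,u'\}$ I take a shortest $u,u'$-path $P$; being shortest, $P$ is isometric, so by Proposition~\ref{P:aigner} one guard can, after finitely many rounds, guard $P$ and hence both endpoints $u,u'$. A leftover single vertex is a trivial isometric path, guarded by one cop sitting on it. This uses exactly $\lceil |U|/2\rceil$ guards and, crucially, assigns \emph{two} vertices of $U$ to each guard, which is precisely where the factor $\tfrac12$ originates. All guards (and the catchers, placed anywhere) choose their initial positions before the robber appears; since each guard establishes its guard independently in finitely many rounds, after finitely many rounds every vertex of $U$ is guarded simultaneously, regardless of how the robber has moved meanwhile.

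Once all of $U$ is guarded, the robber can neither occupy nor pass through any vertex of $U$ without being captured. Since every $G$-path between two distinct components of $G[V(G)\setminus U]$ must traverse $U$, the robber is trapped in the single component $H$ it occupies at the moment guarding is completed; moreover, from any $v\in H$ its only safe moves are to neighbors lying in $H$, so the restriction of the game to $H$ is faithful to stand-alone \CR on $H$. Because $\mathsf{c}(H)\le\ell$, there is a winning strategy for $\ell$ cops on $H$ with some prescribed initial placement in $V(H)$. Using that $G$ is connected and that the robber can no longer leave $H$, the catchers walk to that placement in finitely many rounds (the guards meanwhile maintaining their guards) and then execute the winning $H$-strategy; since that strategy defeats every robber play on $H$, the robber is captured, giving $\mathsf{c}(G)\le \lceil |U|/2\rceil + \ell$.

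I expect the main obstacle to be making the reduction to $H$ fully rigorous: one must check that maintaining the guards never interferes with the catchers, that the finite-time setups of the two phases compose cleanly (guards first, then catchers relocate, all while the robber is powerless to escape), and above all that confinement renders the $G$-game played inside $H$ identical to ordinary \CR on $H$, so that $\mathsf{c}(H)\le\ell$ can be invoked verbatim. The remaining content is careful timing bookkeeping; the one genuinely clever step is guarding two $U$-vertices with a single cop via one shortest path, which is what yields the $\lceil |U|/2\rceil$ term rather than $|U|$.
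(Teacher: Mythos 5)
Your proof is correct and matches the paper's argument essentially verbatim: the paper also pairs the vertices of $U$, guards an isometric path between each pair (with a possible single-vertex path for a leftover vertex) using Proposition~\ref{P:aigner} with $\lceil |U|/2\rceil$ cops, and then lets $\ell$ additional cops win inside the single component of $G[V(G)\setminus U]$ to which the robber is confined. The only difference is presentational — you spell out the confinement-to-$H$ and timing details that the paper leaves implicit.
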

\begin{proof}
We note that this proof uses techniques used to bound $\mathsf{c}(G)$ in terms of $tw(G)$ by Joret et al.~\cite{joret}. Denote $U = \{u_1,\ldots, u_q\}$. Consider isometric paths $P_1, \ldots, P_{\lceil \frac{q}{2}\rceil}$ such that the endpoints of $P_{i}$ are $u_{2i-1}$ and $u_{2i}$. Note that these isometric paths always exist as we assume that the graph is connected. Here, $P_{\lceil \frac{q}{2}\rceil}$ might be a single vertex path containing only the vertex $u_q$. 
Now, we guard each path $P_i$ using a single cop (due to Proposition~\ref{P:aigner}). These $\lceil \frac{q}{2}\rceil$ cops restrict the robber to one  connected component $H$ of $G[V(G)\setminus U]$. Since each of these components is $\ell$-copwin, $ \frac{q}{2} +\ell$ cops have a clear winning strategy in $G$. 
\end{proof}

We know that the classes of star graphs, complete graphs, chordal graphs, and trees are copwin~\cite{nowakowski}.  These bounds, along with Lemma~\ref{L:boundGeneral}, implies the  following theorem.
\begin{theorem}\label{th:ABound}
Let $G$ be a graph and $t =\min\{\mathsf{cvd}, \mathsf{dts}\}$. Then, $\mathsf{c}(G) \leq \frac{t}{2}+1$.
\end{theorem}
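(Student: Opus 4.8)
The plan is to obtain the bound as an immediate consequence of \cref{L:boundGeneral}, applied to whichever of the two modulator sets realises the minimum $t=\min\{\mathsf{cvd},\mathsf{dts}\}$. The one fact I need to feed into the lemma is the value of $\ell$, i.e.\ an upper bound on the cop number of each connected component of $G$ after deleting the modulator; in both cases this value will turn out to be $1$, since the relevant component classes are copwin by \cite{nowakowski}.

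Concretely, I would split into two cases according to which parameter attains the minimum. First, suppose $t=\mathsf{cvd}$ and let $U$ be a minimum cluster vertex deletion set, so that $|U|=\mathsf{cvd}=t$. By definition of such a set, every connected component $H$ of $G[V(G)\setminus U]$ is a clique, hence a complete graph, and therefore copwin; that is, $\mathsf{c}(H)\le 1$. Applying \cref{L:boundGeneral} with $\ell=1$ yields $\mathsf{c}(G)\le \lceil \tfrac{|U|}{2}\rceil + 1 = \lceil \tfrac{t}{2}\rceil + 1$. Second, suppose $t=\mathsf{dts}$ and let $U$ be a minimum deletion to stars set, so $|U|=\mathsf{dts}=t$. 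Now every connected component $H$ of $G[V(G)\setminus U]$ is a star graph, which is again copwin, so $\mathsf{c}(H)\le 1$, and \cref{L:boundGeneral} with $\ell=1$ again gives $\mathsf{c}(G)\le \lceil \tfrac{t}{2}\rceil + 1$.

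Combining the two cases, and recalling the convention that $\tfrac{a}{b}$ denotes $\lceil \tfrac{a}{b}\rceil$, we obtain $\mathsf{c}(G)\le \tfrac{t}{2}+1$ in either case, which is exactly the claimed bound.

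As for the main obstacle: there is essentially no hard step here, since all the real work is encapsulated in \cref{L:boundGeneral} together with the known copwin status of cliques and stars. The only points that require a moment's care are (i) to select, among the two candidate modulators, the one whose size equals $t$, so that the resulting bound is stated in terms of $t$ rather than the larger of the two parameters; and (ii) to verify that the component class in each case is indeed one of those already known to be copwin (complete graphs for $\mathsf{cvd}$, star graphs for $\mathsf{dts}$), so that we may legitimately take $\ell=1$ in the lemma.
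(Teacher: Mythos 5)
Your proposal is correct and follows essentially the same route as the paper: the paper's proof likewise invokes \cref{L:boundGeneral} with $\ell=1$, using the fact (from \cite{nowakowski}) that complete graphs and star graphs are copwin, so that deleting a minimum cluster vertex deletion set or deletion to stars set leaves only copwin components. Your explicit case split on which parameter attains the minimum is just a slightly more spelled-out version of the same argument.
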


\subsection{Bounding Cop Number by $\mathsf{vcn}$:}
Let $U$ be a vertex cover of size $t$ in $G$ and $I$ be the independent set $V(G) \setminus U$. Lemma~\ref{L:boundGeneral} implies that $\mathsf{c}(G) \leq \lceil \frac{t}{2}\rceil +1$. In this section, we improve this bound. First, we provide the following reduction rules. 

\begin{RR}[RR\ref{R:B1}]\label{R:B1}
If there is a vertex $v \in I$ such that $|N(v)| \geq 3$, then place a cop at $v$ and delete $N[v]$.
\end{RR}

\begin{RR}[RR\ref{R:B2}]\label{R:B2}
If there is a vertex $v \in U$ such that $|N[v] \cap U| \geq 3$, then place a cop at $v$ and delete $N[v]$.

\end{RR}

\begin{RR}[RR\ref{R:B3}]\label{R:B3}
If there is an isometric path $P$ such that $P$ contains at least three vertices from $U$, then guard $P$ using one cop and delete $V(P)$ (see Proposition~\ref{P:aigner}).
\end{RR}





We remark that RR\ref{R:B1} and RR\ref{R:B2} can be merged, but we prefer to keep them separate to ease the presentation. Moreover, we note the following.
\begin{note}
In the application of reduction rules RR\ref{R:B1}-RR\ref{R:B3}, whenever a set of vertices $X \subseteq V(G)$ is deleted by the application of rules RR\ref{R:B1}-RR\ref{R:B3}, it implies that each vertex $x \in X$ is being guarded by some cop, and hence, is not accessible to $\R$. We do not actually delete the vertices, and this deletion part is just for the sake of analysis. Hence, from the cop player's perspective, the graph remains connected.
\end{note}

Second, we have the following lemma concerning the structure of subgraphs accessible to $\R$ after an exhaustive application of rules RR\ref{R:B1}-RR\ref{R:B3}.
\begin{lemma}\label{O:shareVertex}
Let $H$ be a connected component of $G$ where rules RR\ref{R:B1}-RR\ref{R:B3} cannot be applied anymore. Then, for every two distinct vertices $x,y \in V(H) \cap U$, either $xy \in E(G)$ or there exists a vertex $w \in I$ such that $xw \in E(G)$ and $yw \in E(G)$.    
\end{lemma}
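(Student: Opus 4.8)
The plan is to analyze a shortest path between $x$ and $y$ and show that, once the reduction rules are no longer applicable, such a path can have length at most two. Fix two distinct vertices $x,y \in V(H)\cap U$ and suppose $xy \notin E(G)$; the goal is to exhibit a common neighbour $w \in I$. Since $H$ is connected, I would take a shortest $x,y$-path $P$ in $H$, which is automatically isometric. Because $xy \notin E(G)$, the length of $P$ is at least $2$, so it suffices to rule out length $3$ or more and to identify the unique internal vertex in the remaining case.

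The key step combines two facts. First, since RR\ref{R:B3} is no longer applicable, the isometric path $P$ contains at most two vertices of $U$; as $x$ and $y$ already account for two such vertices, every internal vertex of $P$ must lie in $I$. Second, $I$ is an independent set. If $P$ had length at least $3$, it would contain two consecutive internal vertices; these are adjacent on $P$ and both lie in $I$, contradicting independence. Hence $P$ has length exactly $2$, and its unique internal vertex $w$ lies in $I$ and is adjacent to both $x$ and $y$, which is precisely the desired conclusion.

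The main point to handle with care is the interface between the reduction rules and the notion of isometry. Since the deletions performed by RR\ref{R:B1}--RR\ref{R:B3} only mark vertices as guarded rather than truly removing them (as emphasized in the preceding note), I would make sure that ``shortest'' and ``isometric'' are measured within the component $H$ genuinely accessible to $\R$, so that the non-applicability of RR\ref{R:B3} can legitimately be invoked for $P$. Once this bookkeeping is settled, the argument needs only RR\ref{R:B3} together with the independence of $I$; the degree bounds arising from RR\ref{R:B1} and RR\ref{R:B2} are not required for this particular claim, although they are used elsewhere in the section. I do not anticipate any serious difficulty beyond pinning down this interpretation of isometric paths in the reduced graph.
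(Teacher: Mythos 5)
Your proof is correct and follows essentially the same route as the paper's: both take a shortest (isometric) $x,y$-path and combine the independence of $I$ with the non-applicability of RR\ref{R:B3} to force the path to have length two with its middle vertex in $I$. The only difference is organizational—you argue directly (non-applicability of RR\ref{R:B3} bounds the $U$-vertices on $P$), whereas the paper argues by contradiction (no common neighbour would make RR\ref{R:B3} applicable)—and your explicit care about measuring isometry within the robber's accessible component is a reasonable tightening of a point the paper leaves implicit.
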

\begin{proof}
For contradiction, let us assume that there exist two distinct vertices $x,y \in V(H) \cap U$ such that $xy \notin E(G)$ and there does not exist a vertex $w \in I$ such that $xw \in E(G)$ and $yw \in E(G)$. Since $x$ and $y$ are part of the connected component $H$, there exists an $x,y$-path. Let $P$ be an isometric $x,y$-path. 

Let $P = x, v_1, \ldots , v_\ell, y$. Since vertices in $I$ form an independent set and $\ell \geq 2$, the vertices $v_1, \ldots, v_\ell$ cannot be all from $I$. So, there exists at least one $v_i$, for $i \in [\ell]$, such that $v_i \in U$. Thus, $P$ contains at least three vertices from $U$, and $P$ is an isometric path. Therefore, we can apply RR\ref{R:B3}, and hence, we reach a contradiction.   
\end{proof}

Next, we argue that, after an exhaustive application of rules RR\ref{R:B1}-RR\ref{R:B3}, the cop number of each connected component accessible to $\R$ is bounded. We have the following lemma.

\begin{lemma}\label{L:VCbound}
Once we cannot apply rules RR\ref{R:B1}-RR\ref{R:B3} anymore, let the robber be in a connected component $H$. Then, $c(H) \leq 2$.
\end{lemma}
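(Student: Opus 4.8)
The plan is to first read off the exact structure of $H$ from the fact that none of RR\ref{R:B1}--RR\ref{R:B3} applies, and then convert that structure into a two-cop strategy organized around guarding a single isometric path. Setting $U_H = U \cap V(H)$ and $I_H = I \cap V(H)$, I would record three facts: since RR\ref{R:B1} is inapplicable, every vertex of $I_H$ has degree at most $2$; since RR\ref{R:B2} is inapplicable, $G[U_H]$ has maximum degree at most $1$, i.e.\ it is a matching; and by Lemma~\ref{O:shareVertex} any two vertices of $U_H$ are either adjacent or joined through a common neighbour in $I_H$. A further invocation of RR\ref{R:B1} shows that a degree-$2$ vertex of $I_H$ cannot be the connector of two distinct pairs (that would force degree $\ge 3$), so non-adjacent pairs of $U_H$ use \emph{private} length-$2$ connectors. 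This identifies $H$ as a matching on $U_H$ together with private length-$2$ connectors and pendant leaves in $I_H$: a subdivision-type graph whose only vertices of degree exceeding $2$ lie in $U_H$.

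With the structure in hand, I would deploy the two cops as follows. Take an isometric path $P$ of $H$; by RR\ref{R:B3} it meets $U_H$ in at most two vertices. Using Proposition~\ref{P:aigner}, let the first cop $C_1$ guard $P$ after finitely many rounds, so that the robber is permanently confined to a single connected component $H'$ of $H - V(P)$. Deleting $P$ removes the (at most two) cover vertices lying on it, and the key point to exploit is that each component of $H - V(P)$ inherits the special form above but with strictly fewer cover vertices. I would argue that such a remainder is of a copwin type (star/tree/clique-like, as guaranteed for these classes by the results cited before Theorem~\ref{th:ABound}), after which the free cop $C_2$ captures the robber inside $H'$.

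The hard part is not the guarding step, which is immediate from Proposition~\ref{P:aigner}, but the confinement bookkeeping: proving that after removing one well-chosen isometric path the portion of $H$ still reachable by the robber is genuinely copwin. Because the cover vertices are pairwise within distance $2$ through the independent set, the robber enjoys many degree-$\le 2$ detours between the high-degree vertices, so a single careless choice of $P$ may leave a remainder that is itself not copwin. I therefore expect to lean on the full structural description of the first paragraph, and possibly on a careful (diameter-realizing) choice of $P$ or on an inductive peeling of cover vertices from $H$, in order to certify that each surviving component is copwin; this case analysis is where the real difficulty of the lemma concentrates, and if one guarded path does not suffice I would fall back on having the two cops cooperate, with $C_1$ guarding while $C_2$ herds the robber onto the degree-$\le 2$ vertices where it can be cornered.
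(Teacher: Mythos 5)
Your structural reading of $H$ in the first paragraph (degree at most $2$ on $I_H$ by RR\ref{R:B1}, a matching on $U_H$ by RR\ref{R:B2}, the distance-$2$ property of Lemma~\ref{O:shareVertex}, and private connectors) is correct and matches the paper. The gap is in the second step: your claim that after $C_1$ guards one isometric path $P$, the component of $H - V(P)$ containing the robber is copwin, is false, and no choice of $P$ can make it true. Take $H$ to be the once-subdivided complete graph on $q \geq 5$ branch vertices, with $U_H$ the branch vertices and $I_H$ the subdivision vertices. None of RR\ref{R:B1}--RR\ref{R:B3} applies: every vertex of $I_H$ has degree $2$, $U_H$ is independent, and since any two branch vertices are at distance $2$, no isometric path can contain three of them. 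Consequently any isometric path $P$ meets $U_H$ in at most two vertices, and one checks that every connector between two surviving branch vertices also survives, so $H - V(P)$ contains an induced once-subdivided $K_{q-2}$ (plus some pendant vertices). That subgraph has girth $6$ and minimum degree $2$, hence cop number at least $2$ by the girth lower bound of Aigner and Fromme, and it is a retract of the robber's component (fold each pendant onto its neighbour), so the component itself has cop number at least $2$: $C_2$ alone cannot capture there. Your fallback suggestions do not close this hole: iterated path-guarding with only two cops loses confinement the moment $C_1$ abandons its path, and ``herding onto degree-$2$ vertices'' is exactly the part that needs a proof.

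The paper's proof avoids decomposition entirely and instead uses the distance-$2$ structure offensively, which is why it succeeds on the subdivided-$K_q$ example where your scheme cannot. Both cops are kept on or next to cover vertices. If the robber stands on $w \in I_H$, then $w$ has at most two neighbours $u, v \in U_H$ (RR\ref{R:B1}), and Lemma~\ref{O:shareVertex} guarantees each cop can reach, in a \emph{single} move, a vertex dominating $u$ (respectively $v$); the robber is thereby pinned at $w$, the cops then step onto $u$ and $v$, and capture follows. If the robber stands on a vertex of $U_H$, one cop attacks it through a common neighbour, and RR\ref{R:B2} (each cover vertex has at most one cover neighbour) forces the robber onto $I_H$ within two moves, reducing to the previous case. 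In short, what makes two cops suffice is short-range mobility of the cops --- every cover vertex can be dominated from one move away --- not any copwin structure of a residual territory; your own structural analysis shows the residual territory need not be copwin at all.
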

\begin{proof}
We present a winning strategy for two cops. If $H$ contains at most two vertices from $U$, then the cops have a winning strategy by placing a cop on each of these vertices. Hence, we assume there exist at least three vertices in $H$ from $U$. Let $x$ and $y$  be two distinct vertices of $H$ from $U$. Then, we place a cop on each of these vertices. Denote the cops by $\C_1$ and $\C_2$. We consider the two cases as follows.

\smallskip \noindent {\bf Case 1:} If $\R$ is on a vertex in $w \in I$, then due to reduction rule RR\ref{R:B1}, it can have at most two neighbors in $U$. Let them be $u$ and $v$. Now, due to Lemma~\ref{O:shareVertex}, the cops can move to vertices such that one of them, say $x'$, dominates the vertex $u$ and the other, say $y'$, dominates the vertex $v$. See Figure~\ref{fig:VC1} for reference. So, the cops move to the vertices $x'$ and $y'$. This restricts $\R$ to stay on its current vertex $w$ in $I$ (else it is captured in the next move of the cops). Now, in the next move of the cops, they move to the vertices $u$ and $v$. Again, this restricts $\R$ to stay on the vertex $w$ (else it is indeed captured). Finally, in the next move of the cops, the cops capture $\R$. 

\begin{figure}
    \centering
    \includegraphics[scale = 0.90]{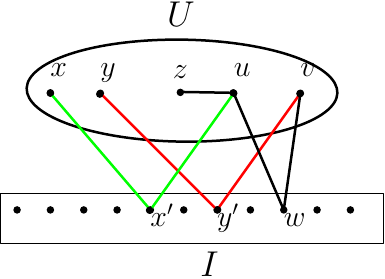}
    \caption{Illustration for the proof of Lemma~\ref{L:VCbound}.}
    \label{fig:VC1}
\end{figure}

\smallskip \noindent {\bf Case 2:} If $\R$ is on a vertex in $u \in U$, then $\C_1$ can move to a vertex in $I$, say $x'$, to attack $\R$ (due to Lemma~\ref{O:shareVertex}). This forces $\R$ to either move to a vertex $w \in I$ or to a vertex $z \in U$. Accordingly, we consider two sub-cases.
\begin{enumerate}
    \item If $\R$ moves to a vertex $ w \in I$, then note that $w$ can have at most two neighbors in $U$ (due to RR1), and one of them is $u$ (being attacked by $\C_1$). Let the other neighbor of $w$ be $v$. Now, $\C_2$ can move to a vertex such that it attacks $v$ (due to Lemma~\ref{O:shareVertex}). This game state is identical to case 1. Hence, the cops can capture the robber in two rounds.
    
    \item If $\R$ moves to a vertex $z \in U$, then $\C_1$ moves to $u$. This forces $\R$ to move to a vertex in $I$ since $u$ can have only one neighbor in $U$ (due to RR\ref{R:B2}), and that is occupied by $C_1$, with both cops being in $U$. This game state is again identical to case 1, and thus the cops win in at most two rounds.
\end{enumerate}

This completes our proof.
\end{proof}

Finally, we have the following theorem.

\VCBound*
\begin{proof}
The correctness of this theorem follows from  Lemma~\ref{L:VCbound} and the fact that using each cop in the reduction rules RR\ref{R:B1}, RR\ref{R:B2}, and RR\ref{R:B3}, we remove at least three vertices from $U$. If we can apply these rules $\frac{t}{3}$ times, $\R$ gets restricted to a vertex in $I$, and thus one additional cop can capture $\R$. Else, when we apply these rules at most $\frac{t}{3}-1$ times, we then need two additional cops (by Lemma~\ref{L:VCbound}), that is, we overall need at most $\frac{t}{3} +1$ cops to ensure capture. 
\end{proof}

We note here that a similar technique will fail if we try to ``remove'' four vertices in each reduction rule. More precisely, if we have the following reduction rules, then we might not get a graph with a bounded (by a constant independent of the input) cop number.

\begin{RR}[RR\ref{R:B4}]\label{R:B4}
If there is a vertex $v \in I$ such that $|N(v)| \geq 4$, then place a cop at $v$ and delete $N[v]$.
\end{RR}

\begin{RR}[RR\ref{R:B5}]\label{R:B5}
If there is a vertex $v \in U$ such that $|N[v] \cap U| \geq 4$, then place a cop at $v$ and delete $N[v]$.

\end{RR}

\begin{RR}[RR\ref{R:B6}]\label{R:B6}
If there is an isometric path $P$ such that $P$ contains at least four vertices from $U$, then guard $U$ using one cop and delete $V(P)$ (see Proposition~\ref{P:aigner}).
\end{RR}

We have the following claim.
\begin{lemma}\label{L:Best}
For every $k \in \mathbb{N}$, there exists a graph $G$ with a vertex cover $U$ and independent set $I = V(G) \setminus U$, such that we cannot apply the rules RR\ref{R:B4}-RR\ref{R:B6}, and $\mathsf{c}(G)>k$. 
\end{lemma}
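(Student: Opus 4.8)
The plan is to produce, for every $k$, a graph $G$ together with a vertex cover $U$ and independent set $I=V(G)\setminus U$ on which RR\ref{R:B4}--RR\ref{R:B6} are all inapplicable but $\mathsf{c}(G)>k$. The first thing to notice is what inapplicability forces: if RR\ref{R:B6} never applies then no isometric path carries four vertices of $U$, and since $I$ is independent (no two consecutive vertices of a shortest path lie in $I$) this bounds the $U$-to-$U$ distance by $4$, so $G$ is necessarily of small diameter. I therefore cannot hope to use a high-girth graph directly; instead I would take a dense, small-diameter base graph and subdivide it. Concretely, let $H$ be a $C_4$-free graph of diameter $2$ with minimum degree at least $2k+1$ (Erd\H{o}s--R\'enyi polarity graphs provide such $H$ for every $k$), and let $G$ be obtained from $H$ by subdividing every edge exactly once. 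I set $U=V(H)$ and let $I$ be the set of subdivision vertices, so that $I$ is independent and $U$ is a vertex cover.

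Checking that RR\ref{R:B4}--RR\ref{R:B6} cannot fire is the routine part. Each subdivision vertex has degree exactly $2<4$, so RR\ref{R:B4} is never applicable. Each vertex of $U$ has all its neighbours in $I$, hence $|N[v]\cap U|=1<4$ and RR\ref{R:B5} is never applicable. Finally, subdivision doubles distances between original vertices, so $\mathrm{diam}(H)\le 2$ gives $d_G(x,y)\le 4$ for all $x,y\in U$; as $I$ is independent, a path of length at most $4$ meets $U$ in at most three vertices, so no isometric path contains four vertices of $U$ and RR\ref{R:B6} is never applicable.

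The heart of the proof --- and the step I expect to be genuinely delicate --- is the lower bound $\mathsf{c}(G)>k$. The idea is an Aigner--Fromme-style argument in which the robber stays on the original vertices of $U$: each such vertex has degree at least $2k+1$ in $G$, offering that many length-two ``corridors'' (through a subdivision vertex) to neighbouring original vertices. I would maintain the invariant that at every decision point the robber sits on an original vertex and, reacting to the cops, migrates along a corridor to a fresh original vertex that is \emph{clear}, i.e.\ has no cop within distance $2$ in $G$. The key counting step is that, because $H$ is $C_4$-free, any two vertices of $H$ share at most one common neighbour; one checks that a single cop can therefore render at most two of the robber's candidate destinations unclear, so $k$ cops spoil at most $2k$ of the at-least-$(2k{+}1)$ corridors, always leaving a safe one. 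Hence the robber evades $k$ cops forever and $\mathsf{c}(G)>k$.

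The main obstacle is precisely that, unlike in $H$, the robber cannot cross a corridor atomically: the traversal costs two moves, and after the first of them the robber rests on a degree-$2$ subdivision vertex while the cops --- who, by the forced small diameter, are never far away --- get an intervening move. This is exactly the phenomenon that makes the ``remove four'' threshold fail where ``remove three'' succeeds. Resolving it requires showing that the chosen destination, clear at commit time, stays uncompromised through the two-move traversal: a cop one move away from threatening a clear destination would have to sit within distance $2$ of it already, so the distance-$2$ clearance (together with the $C_4$-free bound) guarantees that no cop can reach a subdivision vertex adjacent to the destination in the single intervening move, and the robber is never ambushed in transit. Formalising this timing argument, and the initial placement, is where the real work lies; the remaining details are bookkeeping.
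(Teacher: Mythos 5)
Your construction is the same one the paper uses: subdivide every edge of a diameter-$2$ graph once, put the original vertices in $U$ and the subdivision vertices in $I$, and verify that RR\ref{R:B4} fails (degree-$2$ subdivision vertices), RR\ref{R:B5} fails ($U$ becomes independent after subdivision), and RR\ref{R:B6} fails (subdivision doubles $U$-to-$U$ distances to at most $4$, and since $G$ is bipartite between $U$ and $I$, an isometric path with four $U$-vertices would need length at least $6$). Indeed, the diameter-$2$ graphs with unbounded cop number that the paper invokes, via Bonato and Burgess~\cite{designs}, are precisely the polarity graphs you chose. Where you genuinely diverge is in how the lower bound $\mathsf{c}(G)>k$ is established. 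The paper gets it for free from two black boxes: Bonato--Burgess gives $\mathsf{c}(H)\geq k$, and Joret et al.~\cite{joret} gives that equal-length subdivision does not decrease the cop number, so $\mathsf{c}(G)\geq k$. You instead prove the bound directly in the subdivided graph by an Aigner--Fromme-style evasion argument, using minimum degree at least $2k+1$ and $C_4$-freeness of $H$. Your key counting is correct: a cop on an original vertex $y$ spoils at most two candidate destinations (itself, if adjacent to the robber's vertex in $H$, plus at most one common neighbour by $C_4$-freeness), and a cop on a subdivision vertex $w_{ab}$ spoils at most $\{a,b\}$, since in the bipartite subdivision no original vertex is at distance exactly $2$ from a subdivision vertex. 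Your timing analysis also checks out: committing to a destination at distance at least $3$ from every cop leaves the cops at distance at least $2$ after their first intervening move (so they cannot ambush the corridor vertex, whose only neighbours are the robber's old and new endpoints) and at distance at least $1$ after their second, which is exactly the invariant needed at the next decision point. The one piece you defer --- the initial placement --- is routine but does need a different count, since a cop on an original vertex is within distance $2$ of up to $\Delta_H+1$ original vertices: with $q^2+q+1$ original vertices and $k\leq (q-1)/2$ cops spoiling at most $k(q+2)$ of them, a safe starting vertex always exists. The trade-off between the two routes: the paper's proof is a few lines but rests on two nontrivial cited theorems; yours is self-contained and makes the quantitative mechanism (min degree $2k+1$ plus $C_4$-freeness beats $k$ cops even after subdivision) explicit, at the cost of the delicate two-move traversal bookkeeping you correctly identify as the heart of the matter.
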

\begin{proof}

Bonato and Burgess~\cite{designs} proved that for every $k$, there exists a diameter-2 graph $H$ such that $c(H) \geq k$. Let $H$ be a diameter-2 graph such that $c(H) \geq k$.
Joret et al.~\cite{joret} showed that subdividing each edge of a graph an equal number of times does not reduce the cop number. So, we subdivide each edge of $H$ to get the graph $G$ such that $\mathsf{c}(G) \geq k$.  Now, we can put the original vertices in the vertex cover $U$, and the newly introduced vertices in the independent set $I$. We cannot apply any of the rules RR\ref{R:B4} (because each vertex in $I$ has degree exactly 2), RR\ref{R:B5} (because $U$ is an independent set), and RR\ref{R:B6} (since any isometric path in $G$ containing more than three vertices of $U$ will contradict the fact that $H$ is a diameter-2 graph). 
Hence, $G$ is a graph that satisfies the conditions of our lemma.
\end{proof}

\subsection{Bounding the Cop Number for Variants}
Here we extend the result of Theorem~\ref{th:VCbound} to several variations of the game. In particular, we prove the following result.

\begin{lemma}\label{L:variationBound}
Let $G$ be a graph with a vertex cover $U$ of size $t$. Then,
\begin{enumerate}
    \item $\mathsf{c}_{lazy} \leq \frac{t}{2} +1$.
    \item $\mathsf{c}_{attack} \leq \frac{t}{2} +1$.
\end{enumerate}
\end{lemma}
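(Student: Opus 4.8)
The plan is to mimic the strategy of Lemma~\ref{L:VCbound} and Theorem~\ref{th:VCbound}, but to adapt the reduction rules so that each cop-controlled gadget uses \emph{two} cops instead of one, thereby guarding only two vertices of $U$ per pair of cops rather than three. Concretely, for the lazy variant I would replace each of RR\ref{R:B1}--RR\ref{R:B3} by a version that removes at least two vertices of $U$ using a pair of lazy cops, and for the attacking variant I would do the same while ensuring that the cops guarding a path are never isolated enough to be eliminated by $\R$. The target bound $\frac{t}{2}+1$ comes exactly from this accounting: roughly $\frac{t}{2}$ cops to guard $U$ two-at-a-time, plus a bounded number of extra cops to finish off in the residual component.

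First I would address \textbf{lazy cops}. The key subtlety is that Proposition~\ref{P:aigner} guards an isometric path with one \emph{flexible} cop, but a single lazy cop cannot both track the robber's projection and stay put when another cop must move. The standard fix (cf.\ the lazy-cop literature) is to guard each isometric path with a \emph{pair} of lazy cops so that, in any cop turn, the one cop that is allowed to move can correct its position while its partner holds the guarding vertex; thus two lazy cops simulate one flexible cop on a path. Partitioning $U$ into $\lceil t/2\rceil$ isometric paths (as in Lemma~\ref{L:boundGeneral}) and guarding each with a pair of lazy cops would already give $2\cdot\lceil t/2\rceil$, which is too many, so instead I would pair up the \emph{paths}: guard a collection of $\lceil t/2\rceil$ path-endpoints using roughly $\frac{t}{2}$ lazy cops total by reusing the observation that only one cop moves per turn, so the robber can threaten only one guarded path at a time. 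The cleanest route is probably to invoke Lemma~\ref{L:boundGeneral} directly after showing that each component of $G[I]$ is a single vertex (an isolated vertex is trivially $1$-copwin, hence $1$-lazy-copwin), giving $\mathsf{c}_{lazy}(G)\le \lceil t/2\rceil + 1$ provided one lazy cop can guard an isometric path; if a single lazy cop does \emph{not} suffice, I fall back on the paired-cop simulation and re-budget.

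For \textbf{attacking cops}, the additional worry is that a guarding cop adjacent to $\R$ can be eliminated. Here I would use the $\mathsf{c}_{attack}(G)\le 2\,\mathsf{c}(G)$ principle locally: whenever the classical strategy of Theorem~\ref{th:VCbound} would place a cop on a vertex adjacent to $\R$, I keep a second cop on (or one step behind) that vertex, so that an attack eliminates one cop but the robber is then captured by the partner, exactly as in the definition of \CAR. Since the classical bound spends $\frac{t}{3}+1$ cops and guarding-via-isometric-paths can be doubled to be attack-safe, I would re-run the reduction-rule accounting with pairs, landing at $\frac{t}{2}+1$; the factor improving from the naive $2(\frac{t}{3}+1)$ to $\frac{t}{2}+1$ should come from the fact that cops merely \emph{guarding} isometric paths (not sitting on $\R$'s neighborhood) are never attacked and need not be doubled, so only the finishing cops require a partner.

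The main obstacle I anticipate is proving that a pair of cops genuinely guards an isometric path in each variant: for lazy cops one must verify that the ``at most one cop moves'' restriction never forces both guards out of position simultaneously, and for attacking cops one must verify that the robber cannot pick off a guard and slip through before the partner recovers. Getting the exact constant $\frac{t}{2}+1$ (rather than something like $\frac{2t}{3}$ or $\frac{3t}{4}$) will hinge on the careful bookkeeping of which cops must be doubled and which need not, so I would expect the bulk of the real work to be in a variant-specific analogue of Lemma~\ref{L:VCbound} establishing that the residual component accessible to $\R$ is $2$-copwin in the appropriate sense.
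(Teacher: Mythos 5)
Your approach has a genuine gap: both of its load-bearing mechanisms are things the paper explicitly observes to be impossible. For lazy cops, the constraint is global---at most one cop among \emph{all} cops moves per turn---so a pair of lazy cops cannot simulate one flexible cop: with several guarded paths, the robber threatens whichever path's guard did not receive the single move of that turn. This is why the paper notes that $\ell$ lazy cops cannot guard $\ell$ paths simultaneously (witnessed by planar graphs with $\mathsf{c}_{lazy}(G)\geq 4$), and it kills both your ``cleanest route'' (invoking Lemma~\ref{L:boundGeneral}, whose proof guards $\lceil t/2\rceil$ paths via Proposition~\ref{P:aigner}) and your pairing fallback, which in any case budgets to roughly $t+1$ cops, not $\frac{t}{2}+1$. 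For the attacking robber, your claim that cops guarding isometric paths are ``never attacked'' is false: a guarding cop must shadow the robber along its path and is forced adjacent to the robber exactly when the robber approaches the path, at which point it can be eliminated; the paper cites precisely this fact (one cop cannot guard an isometric path in \CAR) as the reason RR\ref{R:B1}--RR\ref{R:B3} do not transfer to these variants.

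The paper's proof abandons path guarding entirely. It uses two \emph{stationary} reduction rules with threshold two instead of three: place a cop on $v\in I$ if $v$ has at least two neighbors, or on $v\in U$ if $v$ has a neighbor in $U$, and delete $N[v]$; each placement removes at least two vertices of $U$, so at most $\frac{t}{2}$ cops are ever stationed, and the residual graph is a disjoint union of stars centered in $U$, where one extra cop finishes the capture. Stationary cops are exactly what both variants tolerate: a lazy cop that never moves costs nothing against the one-move-per-turn budget, and a stationed cop cannot be attacked, since the robber must enter its closed neighborhood one full round before attacking and is captured in the intervening cop move (for \CAR the unstationed cops additionally travel as a group, so no lone cop is ever exposed in transit). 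If you want to salvage your write-up, replace the path-guarding gadgets with these stationary placements; the constant $\frac{t}{2}$ then falls out of the accounting immediately, with no doubling or re-budgeting needed.
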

\begin{proof}
Let $I$ be the independent set $V(G) \setminus U$. First, we note here that in \CAR, one cop cannot ensure guarding of an isometric paths~\cite{bonatocar}, and in \LCR, multiple cops, say, $\ell$ cops, cannot ensure guarding $\ell$ paths simultaneously. (This is evident from the fact that there exists a planar graph $G$ with $\mathsf{c}_{lazy}(G) \geq 4$~\cite{lazyplanar}.)  Therefore, reduction rules RR\ref{R:B1}-RR\ref{R:B3} will not directly imply an upper bound on the respective cop numbers here.  So, we have the following reduction rules: 

\begin{RR}[RR\ref{R:BV1}]\label{R:BV1}
If there is a vertex $v \in I$ such that $N(v) > 1$, then place a cop at $v$ and delete $N[v]$.
\end{RR}

\begin{RR}[RR\ref{R:BV2}]\label{R:BV2}
If there is a vertex $v \in U$ such that $N_U[v] > 1$, then place a cop at $v$ and delete $N[v]$.
\end{RR}

Observe that after an exhaustive application of reduction rules RR\ref{R:BV1} and RR\ref{R:BV2}, we are left with a collection of stars, each of which has its center vertex in $U$.

In the case of \LCR, we can easily apply rules RR\ref{R:BV1} and RR\ref{R:BV2}, since cops do not move once placed according to an application of reduction rules RR\ref{R:BV1} and RR\ref{R:BV2}, except for when they move to capture $\R$. Finally, $\R$ is restricted to a star, and one extra lazy cop can move and capture $\R$. 

In the case of \CAR, all cops start at the same vertex. Whenever the cop player wants to station one of the cops at a vertex $v$ according to rules RR\ref{R:BV1} and RR\ref{R:BV2}, all of the cops that are not stationed yet move together to the vertex $v$ (to avoid getting attacked). Note that once a cop is stationed at a vertex $u$, the cop never moves and hence can never be attacked (because if $\R$ wants to attack a cop at vertex $v$, it has to reach a vertex in $N(v)$ in the previous round, and now the cop at $v$ can move and capture $\R$). Once we cannot apply rules RR\ref{R:BV1} and RR\ref{R:BV2} anymore, $\R$ is restricted to a star. At this point, if there are at least two unstationed cops, then these two cops can move to capture $\R$. Else, let $v$ be the last vertex where we stationed a cop. Since at this point we have stationed all but one cop ($\frac{t}{2}$ cops stationed), observe that for each vertex $x\in U$, there is a cop in $N[x]$, and therefore, $\R$ is restricted to one vertex, say, $u$, of $I$. Now, $\R$ can only attack a cop if it is at a vertex in $N(u)$ (and $N(u)\subseteq U$). Finally, the only unstationed cop, say, $\C$, moves to a vertex in $N(u)$ in a finite number of steps (at this point $\R$ cannot attack $\C$ without getting captured as $\C$ is on a vertex in $U$), and captures $\R$ in the next round.  

The bound on the cop numbers follow from the fact that in each reduction rule, we remove at least two vertices from $U$ and place only one cop.
\end{proof}

We have the following straightforward observation concerning the bounds on the cop number for the remaining variants.
\begin{observation}\label{O:VCVariantBound}
Let $t$ be the $\mathsf{vcn}$ of a graph $G$. Then, $\mathsf{c}_{active}(G) \leq t$, $\mathsf{c}_{surround}(G) \leq t$, $\mathsf{c}_s(G) \leq t$ (for any value of $s$), and for a strongly connected orientation $\overrightarrow{G}$ of $G$, $\mathsf{c}(\overrightarrow{G}) \leq t$.
\end{observation}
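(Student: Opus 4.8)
The plan is to prove all four bounds with a single, uniform strategy: place one cop on each of the $t$ vertices of the vertex cover $U$, and then argue that in every variant the robber is captured (or surrounded) essentially immediately. Since the cops occupy all of $U$, the robber cannot safely start on $U$, so it must start on some vertex $w$ of the independent set $I = V(G)\setminus U$. Because $I$ is independent, every neighbour of $w$ lies in $U$ and is therefore already occupied by a cop; this single observation drives all four cases. In each case at most $t = \mathsf{vcn}$ cops are used, which gives the claimed bounds.

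For \ACR, I would first use that $G$ is connected, so $w$ has at least one neighbour $u \in N(w) \subseteq U$. On the cops' first move the cop stationed at $u$ moves to $w$ and captures $\R$; the remaining cops are active and simply make arbitrary legal moves, which is harmless since the game already ends. For \SCR, the cops occupy all of $N(w)$ (indeed all of $U$) the moment the robber is placed, so $\R$ is already surrounded and the cops win. For \CFR with any speed $s$, every path leaving $w$ begins with an edge into $U$, i.e.\ into a cop-occupied vertex; since a speed-$s$ robber may only traverse cop-free paths, $\R$ has no legal move and must remain at $w$, where it is captured on the next cop move, independently of $s$.

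For the directed case, with cops on all of $U$ and $\R$ at $w \in I$, strong connectivity of $\overrightarrow{G}$ guarantees that $w$ has at least one in-neighbour, which necessarily lies in $U$; the cop there traverses the corresponding arc into $w$ and captures $\R$ (the robber's out-neighbours are likewise all in $U$, but it never gets to move). Note that strong connectivity is exactly what is needed here to supply the in-arc, whereas in the undirected variants connectivity already guarantees a neighbour of $w$.

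The only real care needed---and hence the main, though modest, obstacle---is to check, variant by variant, that the special powers of the agents cannot be exploited before the capture: that the non-capturing all-active cops may move arbitrarily without affecting the outcome, that the surrounding win-condition is genuinely met by occupying \emph{every} neighbour of $w$, that a fast robber is truly immobilised rather than merely slowed, and that the directed capture relies on an in-arc (furnished by strong connectivity) rather than an out-arc. Each of these is immediate once the move order (cops place, robber places, cops move) is spelled out, so the proof should reduce to stating the common placement and then dispatching the four cases in a sentence apiece.
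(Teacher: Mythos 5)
Your proof is correct and matches the approach the paper intends: the paper states this as a ``straightforward observation'' with no written proof, and the implicit argument is exactly yours---place a cop on every vertex of the cover, note that the robber must sit on a vertex of the independent set whose entire neighbourhood (in-neighbourhood included, in the directed case) is cop-occupied, and conclude capture or surrounding within one move. Your variant-by-variant checks (active cops making dummy moves, the fast robber being immobilised because every escape path starts at a cop, and strong connectivity supplying the in-arc) are precisely the details the paper leaves to the reader.
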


We remark that the cop number for an oriented graph $\overrightarrow{G}$ (with underlying graph $G$) that is not strongly connected can be arbitrarily  larger than the $\mathsf{vcn}$ of $G$. To see this, consider a vertex cover $U$ of size $t$ in $G$. Next, we add $\ell$ vertices $v_1, \ldots, v_\ell$ such that each vertex $v_i$, for $i \in [\ell]$,  has only outgoing edges to vertices in $U$.  Now, if we do not place a cop on some  $v_j$, for $j \in [\ell]$, then $\R$ can start at $v_j$ and cops can never capture $\R$. Hence, $\mathsf{c}(\overrightarrow{G}) \geq \ell$.

The proof of Theorem~\ref{T:variationBound} directly follows from Lemma~\ref{L:variationBound} and Observation~\ref{O:VCVariantBound}.

\section{Kernelization Algorithms}\label{S:kernel}
In this section, we provide kernelization algorithms for \CR and its variants.

\subsection{Exponential Kernel for \CR by $\mathsf{vcn}$:}
Let $G$ be a graph where a vertex cover $U$ of size $t$ is given. If no such vertex cover is given, then we can compute a vertex cover $U$ of size $t\leq 2\cdot \mathsf{vc}(G)$ using a polynomial-time approximation algorithm~\cite{bookApprox}. Then, the vertices in $V(G) \setminus U$ form an independent set $I$ of size $n-t$. Recall that the question is whether $G$ is $k$-copwin. 

Our kernelization algorithm is based on the exhaustive application of the following reduction rules.

\begin{RR}[RR\ref{R:KV1}]\label{R:KV1}
If $k \geq \frac{t}{3}+1$, then answer positively.
\end{RR}

\begin{RR}[RR\ref{R:KV2}]\label{R:KV2}
If $k = 1$, then apply an $\mathcal{O}(n^3)$ time algorithm (Proposition~\ref{P:XP}) to check whether $G$ is copwin.

\end{RR}

\begin{RR}[RR\ref{R:KV3}]\label{R:KV3}
If there are two distinct vertices $u,v \in I$ such that $N(u) \subseteq N(v)$, then delete $u$.
\end{RR}

The safeness of rule RR\ref{R:KV1} follows from Theorem~\ref{th:VCbound}. For the safeness of rule RR\ref{R:KV3}, we have the following lemma. We note that Lemma~\ref{L:VCkernel} can also be derived from \cite[Corollary 3.3]{berarducci}, but we give a self-contained proof for the sake of completeness.

\begin{lemma}\label{L:VCkernel}
Let $u$ and $v$ be two distinct vertices of $G$ such that $N(u) \subseteq N(v)$. Consider the subgraph $H$ of $G$ induced by $V(G)\setminus \{u\}$. Let $k\geq 2$. Then, $G$ is $k$-copwin if and only if $H$ is $k$-copwin.
\end{lemma}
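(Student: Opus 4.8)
The plan is to build the projection $\phi\colon V(G)\to V(H)$ that fixes every vertex other than $u$ and sends $u$ to $v$, and to transport winning strategies across it in both directions. First I would record the structural fact that $u$ and $v$ are non-adjacent: if $uv\in E(G)$ then $v\in N(u)\subseteq N(v)$, which is impossible in a loopless graph. Consequently $u\notin N[v]$, so $N[u]\not\subseteq N[v]$, and $u$ is \emph{not} a corner in the usual closed-neighbourhood sense; this already signals why the statement is restricted to $k\ge 2$. The property of $\phi$ that I would establish once and reuse throughout is that it preserves moves: for all $x,y\in V(G)$ with $xy\in E(G)$ or $x=y$, either $\phi(x)\phi(y)\in E(H)$ or $\phi(x)=\phi(y)$. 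The only case needing attention is when an endpoint equals $u$, where $xu\in E(G)$ forces $x\in N(u)\subseteq N(v)$ and hence $xv\in E(H)$ (note $x\ne v$, again since $u,v$ are non-adjacent). In the language of cops and robbers, $\phi$ is a retraction of $G$ onto $H$.

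For the direction $G$ $k$-copwin $\Rightarrow$ $H$ $k$-copwin (which in fact holds for every $k$, as $\phi$ is a retraction), I would let the cops in $H$ shadow a winning strategy of the cops in $G$. A robber trajectory in $H$ is also a legal robber trajectory in $G$, because $H$ is an induced subgraph and the robber never visits $u$; feeding it to the $G$-strategy yields cop moves in $G$, which I project through $\phi$ to obtain legal cop moves in $H$ by the move-preservation property. When the $G$-strategy captures this robber at a vertex $r\ne u$, its projection $\phi(r)=r$ coincides with a shadow cop, so the robber is captured in $H$ as well.

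The hard part will be the converse, $H$ $k$-copwin $\Rightarrow$ $G$ $k$-copwin, and this is where $k\ge 2$ is essential. Here I would have the cops in $G$ run the winning $H$-strategy as though the robber were located at the projected vertex $\phi(r)$, where $r$ denotes the robber's actual vertex. The move-preservation property guarantees that the sequence of vertices $\phi(r)$ is a legal robber walk in $H$, so the $H$-strategy forces a cop onto $\phi(r)$ within finitely many rounds while all cops stay inside $V(H)$. At the first moment a cop $\C_1$ occupies $\phi(r)$, either $r\ne u$, giving an immediate capture in $G$ since then $\phi(r)=r$, or $r=u$ and $\C_1$ sits on $v$. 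In the latter case I would switch to an endgame: because $N(u)\subseteq N(v)$, keeping $\C_1$ fixed at $v$ attacks all of $N(u)$, so the robber is frozen at $u$, as any move would place it in $N(v)$ and be captured on the next cop turn. Using a second cop $\C_2$, which is free to walk through the connected graph to some neighbour $w\in N(u)$ and then step onto $u$, the cops capture the now-immobile robber.

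The step I expect to be most delicate is this endgame, both in arguing that the freezing is genuinely forced and in seeing why a single cop cannot do the job: one cop holding $v$ cannot simultaneously approach the non-adjacent vertex $u$, which is exactly the obstruction making the $k=1$ statement false (e.g.\ $G=C_4$ with $u,v$ antipodal, whose deletion of $u$ yields the copwin path on three vertices). The remaining work is routine bookkeeping: confirming that capture of the projected robber is well defined as occurring at the end of a cop move, that the branch to the endgame is taken at the first such instant, and that the robber's freedom to move in and out of $u$ merely corresponds to the projected robber moving in and out of $v$, which the winning $H$-strategy still defeats.
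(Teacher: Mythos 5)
Your proposal is correct and follows essentially the same route as the paper: both directions use the projection that fixes $V(H)$ and sends $u$ to $v$ (the paper's ``image'' map $I(\cdot)$), with the converse handled by capturing the projected robber and then running the identical endgame in which one cop parks on $v$ to freeze the robber at $u$ while a second cop walks over to capture it. Your explicit remarks on the non-adjacency of $u$ and $v$ and the $C_4$ counterexample for $k=1$ match observations the paper makes alongside its proof, so there is no substantive difference to report.
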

\begin{proof}
First, we show that if $G$ is $k$-copwin, then $H$ is $k$-copwin. For the graph $H$, the $k$ cops borrow the winning strategy that they have for $G$, with the only difference that whenever a cop has to move to the vertex $u$ in $G$, it moves to $v$ (in $H$) instead. Since $N(u) \subseteq N(v)$, the cop can make the next move as it does in the winning cop strategy for $G$. Note that using this strategy, the cops can capture $\R$ if $\R$ is restricted to $V(H)$ in $G$. Therefore, using this strategy, $k$ cops will capture $\R$ in $H$ as well.

Second, we show that if $H$ is $k$-copwin, then $G$ is is $k$-copwin. Here, for each vertex $x \neq u$ of $G$, we define $I(x) = x$, and for $u$, we define $I(u)= v$. Observe that for each $x \in V(G)$, $I(x)$ is restricted to $H$ and if $xy \in E(G)$, then $I(x)I(y) \in E(H)$. Therefore, every valid move of a player from a vertex $x$ to $y$ in $G$ can be translated to a valid move from $I(x)$ to $I(y)$ in $H$. Now, the cops have the following strategy. If the robber is on a vertex $x$, the cops consider the \textit{image} of the robber on the vertex $I(x)$. Since the robber's image is restricted to $H$, the cops can use the winning strategy for $H$ to capture the image of the robber in $G$. Once the image is captured, if the robber is not on the vertex $ u$, then the robber is also captured. Otherwise, the robber is on the vertex $u$, and at least one cop is on $v$. See Figure~\ref{fig:VCLemma} for an illustration. So, one cop, say $\mathcal{C}_1$, stays on $v$ and this prevents the robber from ever leaving $u$. Indeed this follows because $N(u) \subseteq N(v)$, and so, if $\R$ ever leaves $u$, it will be captured by $\mathcal{C}_1$ in the next cop move. Finally, since $k>1$, some other cop, say $\mathcal{C}_2$, can use a finite number of moves to reach $u$ and capture the robber. 

This completes our proof.
\end{proof}

\begin{figure}
    \centering
    \includegraphics[scale = 0.7]{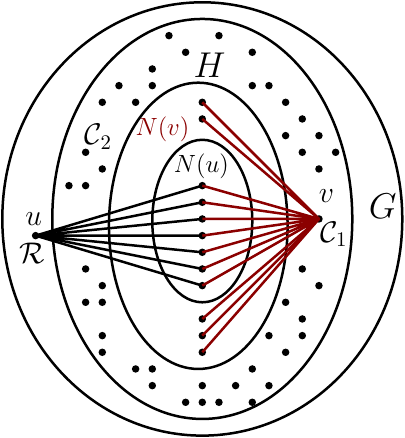}
    \caption{Illustration for Lemma~\ref{L:VCkernel}. Here, $\R$ is at vertex $u$ and $\C_1$ is at vertex $v$.}
    \label{fig:VCLemma}
\end{figure}

Note that the requirement for $k\geq 2$ in Lemma~\ref{L:VCkernel} is crucial. It might so happen that we can get an $H$ such that $c(H)=1$, but $\mathsf{c}(G)>1$. To see this, consider the example of $C_4$, where any two diagonal (i.e., non-adjacent) vertices satisfy the property in Rule RR9, and if we remove one of them, the cop number reduces from 2 to 1. However, this does not harm our algorithm because if we are given $k= 1$, then RR\ref{R:KV2} is applied (before RR\ref{R:KV3}).

Two sets $A$ and $B$ are \textit{incomparable} if neither $A\subseteq B$ nor $B\subseteq A$. We shall use the following proposition that follows from Sperner's Theorem and Stirling's approximation. 
\begin{proposition}\label{P:comb}
Let $X$ be a set of cardinality N. Moreover, let $Y$ be a set of subsets of $X$ such that for each $a,b \in Y$, $a$ and $b$ are incomparable. Then, $|Y| \leq \frac{2^N}{\sqrt{N}}$.
\end{proposition}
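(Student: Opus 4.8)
The plan is to derive the bound in two stages: first reduce $|Y|$ to the largest binomial coefficient in row $N$ via Sperner's theorem, and then bound that coefficient by $2^N/\sqrt{N}$ using a non-asymptotic form of Stirling's approximation.

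First I would observe that, since $Y$ is a family of pairwise incomparable subsets of $X$, it is exactly an \emph{antichain} in the Boolean lattice on the $N$-element set $X$. Sperner's theorem states that every antichain in this lattice has size at most $\binom{N}{\lfloor N/2 \rfloor}$, the largest entry of row $N$ of Pascal's triangle. Hence $|Y| \le \binom{N}{\lfloor N/2 \rfloor}$, and it remains only to establish the purely numerical inequality $\binom{N}{\lfloor N/2 \rfloor} \le 2^N/\sqrt{N}$ for every $N \ge 1$.

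For the second stage I would invoke the standard estimate $\binom{2m}{m} \le 4^m/\sqrt{\pi m}$ for the central binomial coefficient (a consequence of Stirling's approximation) and split on the parity of $N$. For even $N = 2m$, write $\binom{N}{N/2} = \binom{2m}{m} \le 4^m/\sqrt{\pi m} = 2^N/\sqrt{(\pi/2)N}$; since $\pi/2 > 1$, this is at most $2^N/\sqrt{N}$. For odd $N = 2m+1$, use the Pascal identity $\binom{2m+1}{m} = \tfrac{1}{2}\binom{2m+2}{m+1}$ together with the central-binomial bound to get $\binom{2m+1}{m} \le \tfrac{1}{2}\cdot 4^{m+1}/\sqrt{\pi(m+1)} = 2^N/\sqrt{\pi(m+1)}$, and then note that $\pi(m+1) \ge 2m+1 = N$ holds for all $m \ge 0$ because $\pi > 2$. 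In both cases $\binom{N}{\lfloor N/2 \rfloor} \le 2^N/\sqrt{N}$, which combined with Sperner's bound yields $|Y| \le 2^N/\sqrt{N}$.

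The only delicate point is selecting a Stirling-type estimate with a clean enough constant: the factor $\sqrt{\pi/2} \approx 1.25 > 1$ is precisely what provides the slack needed to beat $\sqrt{N}$, so one must use a version of the central-binomial bound whose leading constant is $\sqrt{\pi}$ rather than a weaker $\sqrt{c\cdot m}$ form that might fail to dominate $\sqrt{N}$. Because this margin is comfortably larger than $1$, no separate case analysis for small $N$ is required, and the inequality holds for all $N \ge 1$.
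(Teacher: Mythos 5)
Your proof is correct and follows exactly the route the paper indicates: it cites Sperner's theorem to bound the antichain by $\binom{N}{\lfloor N/2\rfloor}$ and Stirling's approximation to bound that coefficient by $2^N/\sqrt{N}$. Your write-up simply makes explicit the parity split and the constant $\sqrt{\pi/2}>1$ slack that the paper leaves implicit.
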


Once we cannot apply RR\ref{R:KV1}-RR\ref{R:KV3} anymore, we claim that the size of the reduced graph $G'$ is bounded by a function of $t$. Let $U' = U \cap V(G')$ and $I' = I \cap V(G')$. Clearly, $|U'| \leq t$. Now, each vertex $u \in I'$ is associated with a neighborhood $N(u)$ such that $N(u) \subseteq U'$. Moreover, for any two vertices $u,v \in I'$, the sets $N(u)$ and $N(v)$ are incomparable. Hence, due to Proposition~\ref{P:comb}, $|I'| \leq \frac{2^t}{\sqrt{t}}$, and therefore, $|V(G')| \leq t+\frac{2^t}{\sqrt{t}}$,  which proves the following theorem.
\VCKernel*

Now, we can apply the XP-time algorithm (Proposition~\ref{P:XP}) for \CR on our kernel. Since $k \leq \frac{t}{3}$, the running time we get is exponential only in $t$ and polynomial in $n$. Specifically, the running time of the algorithm is $ t\cdot \big{(}t+\frac{2^t}{\sqrt{t}}\big{)}^{\frac{t}{3}+2} \cdot n^{\mathcal{O}(1)}$. Moreover, if a vertex cover $U$ of size $t = \mathsf{vc}(G)$ is not given, then we can compute one in time $1.2738^t\cdot n^{\mathcal{O}(1)}$~\cite{fastestVC}. Thus, we have the following corollary.

\VCFPT*

\subsection{Exponential Kernel for \CR by \cvd:}\label{S:clique} 
To get a kernel for \CR parameterized by $\mathsf{cvd}$, we employ techniques similar to the ones we used to get a kernel for \CR parameterized by $\mathsf{vcn}$. Let $U$ be a cluster vertex deletion set of size $t$. Let $S = V(G)\setminus U$, and  $C_1, \ldots, C_\ell$ be the set of disjoint cliques that form the graph $G[S]$. Since $\mathsf{c}(G)\leq \frac{t}{2}+1$ (Theorem~\ref{th:ABound}), we have the following reduction rule.

\begin{RR}[RR\ref{R:KC1}]\label{R:KC1}
If $k \geq \frac{t}{2}+1$, then report Yes-instance.
\end{RR}

Next, we have the following lemma. 

\begin{lemma}\label{l:cliquesize}
Let $u$ and $v$ be vertices of some clique $C$ of $G[S]$. If $N_U(u) \subseteq N_U(v)$, then $\mathsf{c}(G) = \mathsf{c}(G\setminus  \{u\})$.
\end{lemma}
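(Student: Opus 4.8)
The plan is to reduce the statement to the classical phenomenon that a graph and the retract obtained by deleting a \emph{dominated} vertex have the same cop number, in close analogy with Lemma~\ref{L:VCkernel}, but exploiting \emph{closed} neighbourhood containment rather than the open one used there.

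First I would record the structural fact that the hypothesis in fact yields closed-neighbourhood containment. Since the cliques $C_1,\dots,C_\ell$ are precisely the connected components of $G[S]$, the only neighbours of $u$ (resp.\ $v$) inside $S$ are the remaining vertices of their common clique $C$. Hence $N[u]=N_U(u)\cup C$ and $N[v]=N_U(v)\cup C$, so the assumption $N_U(u)\subseteq N_U(v)$ immediately gives $N[u]\subseteq N[v]$; moreover $uv\in E(G)$ because $u,v\in C$. Writing $H=G\setminus\{u\}$, this lets me define the retraction $f\colon V(G)\to V(H)$ by $f(u)=v$ and $f(x)=x$ for $x\neq u$, and check that $f$ is a graph homomorphism: any edge incident to $u$ reaches a vertex of $N(u)\subseteq N[v]$, and is therefore mapped either to an edge of $H$ incident to $v$ or collapsed onto $v$.

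Next I would establish the two inequalities separately. For $\mathsf{c}(H)\le \mathsf{c}(G)$: the robber in $H$ never leaves $V(H)$, so its $f$-image is itself, and the cops can run their winning $G$-strategy while occupying the $f$-images of the prescribed positions; homomorphicity of $f$ keeps every cop move legal in $H$, and when the $G$-strategy captures this $H$-confined robber at a vertex $x\in V(H)$ the projected cop sits on $f(x)=x$, a genuine capture in $H$. For the reverse inequality $\mathsf{c}(G)\le \mathsf{c}(H)$: the cops play the winning $H$-strategy against the \emph{image} $f(\R)$ of the real robber. Because $f$ is a homomorphism, the sequence of images is a legal robber walk in $H$ (consecutive images are equal or adjacent), so the $H$-strategy eventually lands a cop on $f(\R)$. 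If at that moment $\R\neq u$ this is already a capture in $G$; if $\R=u$, then a cop $\C_1$ sits on $v$, and I invoke the endgame.

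The only genuinely new point, and the step I expect to be the main obstacle to phrase cleanly, is this endgame, which is exactly where closed-neighbourhood containment pays off in contrast to Lemma~\ref{L:VCkernel}. With $\C_1$ on $v$ and $\R$ on $u$ it is the robber's turn, and every vertex it can reach, namely a vertex of $N[u]\subseteq N[v]$, is within one step of $v$; hence whatever $\R$ does, $\C_1$ follows from $v$ onto $\R$'s new vertex on the next cop move and captures it. Thus a single cop suffices, which is precisely why no hypothesis $k\ge 2$ is required here. Combining the two inequalities yields $\mathsf{c}(G)=\mathsf{c}(G\setminus\{u\})$. In the write-up I would be careful to justify that feeding the image-walk into the $H$-strategy is legitimate even when $\R$ oscillates between $u$ and $v$ (the image then merely stays at $v$, a legal ``wait'' move), since this is the one spot where the simulation could appear suspicious.
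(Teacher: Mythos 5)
Your proposal is correct and follows essentially the same route as the paper: both first upgrade the hypothesis to $N[u]\subseteq N[v]$ using that $u,v$ lie in a common clique of $G[S]$, then reuse the image/retraction simulation from Lemma~\ref{L:VCkernel}, and finish with the same endgame in which a single cop on $v$ captures the robber on $u$ because every move (or stay) of the robber lands in $N[u]\subseteq N[v]$, which is exactly why no $k\geq 2$ assumption is needed. Your write-up is, if anything, slightly more explicit than the paper's (which largely defers to Lemma~\ref{L:VCkernel} and the retract literature), but the ideas coincide.
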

\begin{proof}
First we observe that, since $u$ and $v$ are part of the same clique $C$, $N[u] \subseteq N[v]$. Then, the proof of this lemma follows from the proof of Lemma~\ref{L:VCkernel}. We remark that this proof also follows from the idea of retracts used in the \CR literature~\cite{berarducci,nowakowski}. Additionally, we remark that, here, $\mathsf{c}(G)$ need not be necessarily greater than 1. To see this, consider the situation when $\R$ is at $u$ and a cop, say, $\C_1$, is at $v$. Now, $\R$ cannot move to a vertex in $U$ since $N_U(u) \subseteq N_U(v)$ and $\R$ cannot stay on a vertex in $C$ since $v$ is a part of $C$. Thus, $\R$ gets captured in the next move by $\C_1$.
\end{proof}

Hence, we can apply the following reduction rule, whose safeness was proved by Lemma~\ref{l:cliquesize}.


\begin{RR}[RR\ref{R:KC2}]\label{R:KC2}
Let $u$ and $v$ be vertices of some clique $C \in G[S]$ such that $N[u] \subseteq N[v]$. Then, delete $u$.
\end{RR}

Once we cannot apply reduction rule RR\ref{R:KC2} anymore, the size of each clique in $G[S]$ is at most $\frac{2^t}{\sqrt{t}}$ (due to Proposition~\ref{P:comb}).

Similarly to Lemma~\ref{l:cliquesize}, we have the following lemma.

\begin{lemma}\label{L:cliqueType}
Let $C_i$ and $C_j$ be two cliques in $G[S]$ such that for each vertex $u \in V(C_i)$, there exists a vertex $v \in V(C_j)$ such that $N_U(u) \subseteq N_U(v)$. Then, $k >1 $ cops have a winning strategy in $G$ if and only if they have a winning strategy in $G[V(G) \setminus V(C_i)]$.
\end{lemma}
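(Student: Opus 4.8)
The plan is to mimic the proof of Lemma~\ref{L:VCkernel} (and its clique-analogue Lemma~\ref{l:cliquesize}), lifting the argument from ``one vertex dominated by another'' to ``one entire clique dominated by another.'' The hypothesis says that every vertex $u \in V(C_i)$ has a partner $v \in V(C_j)$ with $N_U(u) \subseteq N_U(v)$; I would fix, for each such $u$, a choice of partner $\phi(u) \in V(C_j)$, and also extend $\phi$ to be the identity on all vertices outside $V(C_i)$. The point is that $\phi$ is a \emph{retraction}-like map from $G$ onto $G' := G[V(G)\setminus V(C_i)]$: for every edge $xy \in E(G)$, the image $\phi(x)\phi(y)$ is either an edge of $G'$ or a single vertex. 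For edges with both endpoints outside $C_i$ this is immediate; for an edge $ux$ with $u \in V(C_i)$ and $x \in U$, we have $x \in N_U(u) \subseteq N_U(\phi(u))$, so $\phi(u)x \in E(G')$; and for an edge inside $C_i$ (i.e. $uu'$ with both in $C_i$), both images land in $C_j$, which is a clique, so $\phi(u)\phi(u')$ is an edge or equal. This last verification is the crux and is exactly where the clique structure of $C_i$ and $C_j$ is used.

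Granting that $\phi$ preserves adjacency (up to collapsing), the two directions follow as in Lemma~\ref{L:VCkernel}. For the easy direction, if $k$ cops win on $G$, they win on $G'$: the cops play their $G$-strategy but, whenever the strategy tells a cop to step onto some $u \in V(C_i)$, the cop instead steps onto $\phi(u) \in V(C_j) \subseteq V(G')$; since $\phi$ preserves adjacency this is always a legal move, and as the robber stays within $V(G')$ throughout, this shadow play captures it. For the converse, assume $k>1$ cops win on $G'$ and let them play on $G$ against the robber's \emph{image} $\phi(\R)$, which lives in $G'$ and whose moves are legal there because $\phi$ preserves adjacency. The winning $G'$-strategy captures the image. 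If the real robber is not inside $C_i$, it coincides with its image and is caught. Otherwise the robber sits on some $u \in V(C_i)$ and a cop sits on $\phi(u) \in V(C_j)$; I then argue, exactly as in Lemma~\ref{l:cliquesize}, that one cop holding $\phi(u)$ confines the robber to $u$. Indeed $\R$ cannot stay elsewhere in $C_i$ (the cop would catch the image there too), cannot step into $U$ because $N_U(u) \subseteq N_U(\phi(u))$ means every such escape vertex is dominated by the guarding cop, and cannot step to another clique since $C_i$ has no edges leaving $S$ except into $U$. A second cop (available since $k>1$) then walks to $u$ in finitely many rounds and captures $\R$.

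The main obstacle I anticipate is the confinement step in the hard direction, because unlike the single-vertex case the robber trapped on $u$ could in principle slide to a neighbor $u'$ \emph{inside} the clique $C_i$ before the guarding cop reacts. I would handle this by noting that such a move does not help the robber: the guarding cop on $\phi(u)$ can move to $\phi(u')$ (legal, as both lie in clique $C_j$), re-establishing the identical trapped configuration, so the invariant ``a cop sits on $\phi(\text{robber's vertex})$'' is maintained; meanwhile any attempt by the robber to leave $C_i$ for $U$ is immediately punished because that target vertex lies in $N_U(u)\subseteq N_U(\phi(u))$ and is thus attacked. Hence the robber is genuinely confined to $V(C_i)$ with a cop permanently shadowing it from $C_j$, and the free cop finishes the capture. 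The only subtlety worth stating carefully is that $V(C_i)$ is separated from the rest of the graph by $U$ alone (since $C_i$ is a connected component of $G[S]$), which is what forbids escape routes other than into $U$ or within $C_i$; this is precisely the structural fact that makes the whole argument go through.

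\begin{proof}
We use a map $\phi : V(G) \to V(G')$, where $G' = G[V(G)\setminus V(C_i)]$, defined as follows. For each $u \in V(C_i)$, fix a vertex $\phi(u) \in V(C_j)$ with $N_U(u) \subseteq N_U(\phi(u))$, guaranteed by hypothesis; for every other vertex $x$, set $\phi(x) = x$. We first verify that $\phi$ preserves adjacency up to collapsing, i.e., for every edge $xy \in E(G)$, either $\phi(x) = \phi(y)$ or $\phi(x)\phi(y) \in E(G')$. If neither endpoint lies in $C_i$ this is immediate. If $u \in V(C_i)$ and $x \in U$ with $ux \in E(G)$, then $x \in N_U(u) \subseteq N_U(\phi(u))$, so $\phi(u)x \in E(G')$. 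Note $C_i$, being a connected component of $G[S]$, has no neighbors in $S\setminus V(C_i)$, so these are the only edge types with an endpoint in $C_i$, except edges internal to $C_i$: if $u,u' \in V(C_i)$ then $\phi(u),\phi(u') \in V(C_j)$, and since $C_j$ is a clique, either $\phi(u)=\phi(u')$ or $\phi(u)\phi(u') \in E(G')$. Thus $\phi$ translates every legal move in $G$ to a legal move in $G'$.

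Suppose $k$ cops win on $G$. On $G'$, they replay this strategy, except that whenever a cop is directed onto a vertex $u \in V(C_i)$, it moves to $\phi(u)$ instead; this is always legal since $\phi$ preserves adjacency. As the robber is confined to $V(G')$ throughout, this shadow play captures it, so $G'$ is $k$-copwin.

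Conversely, suppose $k>1$ cops win on $G'$. On $G$, the cops track the robber's image $\phi(\R) \in V(G')$, whose moves are legal in $G'$ because $\phi$ preserves adjacency, and apply the winning $G'$-strategy to capture this image. When the image is captured, if $\R \notin V(C_i)$ then $\R = \phi(\R)$ is caught. Otherwise $\R$ occupies some $u \in V(C_i)$ and a cop $\C_1$ occupies $\phi(u) \in V(C_j)$. We claim $\C_1$ confines $\R$ to $V(C_i)$. Indeed, $\R$ cannot step into $U$: any such target vertex $x$ lies in $N_U(u) \subseteq N_U(\phi(u))$, so $\C_1$ attacks it. If $\R$ slides to another vertex $u' \in V(C_i)$, then $\C_1$ moves to $\phi(u') \in V(C_j)$ (legal, as $C_j$ is a clique), restoring the invariant that a cop sits on $\phi$ of the robber's vertex; and $\R$ cannot leave $S$ for any clique other than via $U$. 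Hence $\R$ stays within $V(C_i)$, permanently shadowed by $\C_1$ from $C_j$. Since $k>1$, a second cop reaches $\R$ in finitely many rounds and captures it. Therefore $G$ is $k$-copwin, completing the proof.
\end{proof}
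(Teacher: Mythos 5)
Your proposal is correct and follows essentially the same approach as the paper: both define the identity-outside/partner-inside map into $C_j$, verify it preserves adjacency (using that edges leaving $C_i$ go only to $U$ and that $C_j$ is a clique), have the cops capture the robber's image, and then confine the robber to $C_i$ by shadowing from $C_j$ until a second cop finishes the capture. If anything, you are slightly more careful than the paper in explicitly allowing the collapsed case $\phi(x)=\phi(y)$ and in writing out the easy direction, which the paper only sketches.
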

\begin{proof}
The proof idea here is similar to the proof idea of Lemma~\ref{L:VCkernel}. Let $H= G[V(G) \setminus V(C_i)]$. Here, we will just prove that if $k$ cops have a winning strategy in $H$, then $k$ cops have a winning strategy in $G$. (The proof of the reverse direction is rather easy to see, combining arguments from Lemma~\ref{L:VCkernel} and the arguments we present in the rest of this proof).

Let $k\geq 2$ cops have a winning strategy in $H$. Similarly to Lemma~\ref{L:VCkernel}, for each vertex $x \in V(G) \setminus V(C_i)$, we define $I(x) = x$, and for each vertex $u\in V(C_i)$, we have a vertex $v\in V(C_j)$ such that $N_U(u) \subseteq N_U(v)$, and we define $I(u)=v$. (Note that there might be multiple choices for $v$ here. We can choose any such vertex.)

Observe that for each vertex $x\in V(G)$, $I(x)$ is restricted to $H$. Moreover, if $xy \in E(G)$, then $I(x)I(y) \in E(H)$ for the following reasons. If $x,y \in V(G) \setminus V(C_i)$, then it is obvious. Else, if $x,y \in V(C_i)$, then observe that $I(x)$ and $I(y)$ are part of some clique $C_j$, and $N_U(x) \subseteq N_U(I(x))$ and $N_U(y) \subseteq N_U(I(y))$. Hence, in this case, if $xy \in E(G)$, then $I(x)I(y) \in E(H)$. Finally, assume without loss of generality that $x\in V(C_i)$ and $y\in V(G)\setminus V(C_i)$. In this case, $xy\in E(G)$ only if $y\in U$. Since $N_U(x) \subseteq N_U(I(x))$, $I(x)I(y) \in E(H)$. Thus, if $xy \in E(G)$, then $I(x)I(y) \in E(H)$. Therefore, every valid move of a player from a vertex $x$ to a vertex $y$ in $G$ can be translated to a move from $I(x)$ to $I(y)$ in $H$.

Now, cops play their winning strategy in $H$ with the following consideration: When the robber is at a vertex $x$ in $G$, the cops consider the \textit{image} of the robber at vertex $I(x)$ in $G$. Since the robber's image is restricted to the vertices of $H$, the cops can use a winning strategy from $H$ to capture the image of the robber in $G$. Once the image is captured, if the robber is at a vertex $x \notin V(C_i)$, then the robber is also captured. Otherwise, the robber is at a vertex $x\in V(C_i)$, and one of the cops is at vertex $I(x)$ in $C_j$. Now, observe that the robber cannot immediately move to a vertex in $U$. Anyhow, the robber can move to some other vertex $y \in V(C_i)$, and in this case, the cop at vertex $I(x)$ can move to vertex $I(y) \in V(C_j)$. This way, the cop occupying the robber's image can prevent the robber from ever leaving $C_i$. Since $k\geq 2$, some other cop can move to capture the robber in $C_i$ (as cliques are copwin). This completes our proof. 
\end{proof}

Thus, we can apply the following reduction rule, whose safeness was proved by Lemma~\ref{L:cliqueType}.

\begin{RR}[RR\ref{R:KC3}]\label{R:KC3}
Let $C_i$ and $C_j$ be two cliques in $G[S]$ such that for each vertex $u \in V(C_i)$, there exists a vertex $v \in V(C_j)$ such that $N_U(u) \subseteq N_U(v)$. Then, delete $V(C_i)$.
\end{RR}

Finally, we use the following lemma to bound the size of the desired kernel from Theorem~\ref{T:vertexClique}.
\begin{lemma}\label{L:boundClq}
    After an exhaustive application of RR\ref{R:KC1}-RR\ref{R:KC3}, the size of the reduced graph is at most $2^{2^t + \sqrt{t}}$.
\end{lemma}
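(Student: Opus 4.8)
The plan is to bound $|V(G')| = |U'| + \sum_{i}|V(C_i)|$ for the reduced graph $G'$, where $U' = U\cap V(G')$ (so $|U'|\le t$) and the $C_i$ are the cliques of $G'[S]$ that survive RR\ref{R:KC1}--RR\ref{R:KC3}. The total size is therefore governed by two quantities, which I would bound separately and then combine multiplicatively: the maximum size of a surviving clique, and the number of surviving cliques.

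For the clique size I would use RR\ref{R:KC2}. Note first that for two vertices $u,v$ of a common clique $C$, one has $N[u]\subseteq N[v]$ if and only if $N_U(u)\subseteq N_U(v)$, since both closed neighbourhoods contain all of $C$ while $N_U(\cdot)\subseteq U$ is disjoint from $C$. Hence, once RR\ref{R:KC2} is no longer applicable, the sets $\{N_U(u): u\in V(C)\}$ are pairwise incomparable subsets of $U$, i.e.\ an antichain in the Boolean lattice $2^U$. Proposition~\ref{P:comb} (Sperner) then immediately yields $|V(C)|\le \frac{2^t}{\sqrt t}$.

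For the number of cliques I would associate to each surviving clique $C_i$ its neighbourhood family $\mathcal{F}(C_i)=\{N_U(u): u\in V(C_i)\}$, again an antichain in $2^U$. The crucial point is that after an exhaustive application of RR\ref{R:KC3} these antichains are pairwise distinct: if $\mathcal{F}(C_i)=\mathcal{F}(C_j)$ for $i\neq j$, then every vertex of $C_i$ has its $U$-neighbourhood equal to (hence contained in) the $U$-neighbourhood of some vertex of $C_j$, so RR\ref{R:KC3} would have deleted $C_i$. Consequently the number of surviving cliques is at most the number of antichains of $2^U$, which is the Dedekind number $M(t)$. The remaining quantitative ingredient is the estimate $M(t)\le 2^{2^t/\sqrt t}$, and this is where I expect the real work to lie, because it is \emph{not} supplied by Proposition~\ref{P:comb}: that proposition bounds the size of a single antichain, whereas here I need to count antichains. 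I would obtain the estimate from the standard asymptotics $\log_2 M(t)=(1+o(1))\binom{t}{\lfloor t/2\rfloor}$ together with $\binom{t}{\lfloor t/2\rfloor}\le \frac{2^t}{\sqrt t}$, checking the few small values of $t$ directly; alternatively, for a self-contained argument, from a symmetric-chain decomposition of $2^U$ into $\binom{t}{\lfloor t/2\rfloor}$ chains, which forces any antichain to meet each chain at most once and hence gives $M(t)\le \prod(\ell_i+1)$ with $\sum \ell_i = 2^t$, yielding the same order after taking logarithms.

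Finally I would combine the two estimates: $|V(G')|\le t + M(t)\cdot \frac{2^t}{\sqrt t}\le t + 2^{2^t/\sqrt t}\cdot \frac{2^t}{\sqrt t}$. It then remains only to verify the routine exponent arithmetic $t+2^{2^t/\sqrt t}\cdot \frac{2^t}{\sqrt t}\le 2^{2^t+\sqrt t}$, equivalently $\frac{2^t}{\sqrt t}+t-\tfrac12\log_2 t\le 2^t+\sqrt t$, which holds because $2^t-\frac{2^t}{\sqrt t}=2^t(1-t^{-1/2})$ dwarfs $t-\sqrt t$ for all $t\ge 2$ (and the remaining small cases are checked by hand). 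Thus the per-clique Sperner bound and the closing calculation are straightforward, and the genuine obstacle is the count of cliques through the Dedekind number.
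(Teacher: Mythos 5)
Your proposal has the same skeleton as the paper's proof---bound the size of each surviving clique via Proposition~\ref{P:comb}, bound the number of surviving cliques, and multiply---and your first step is identical to the paper's. The divergence is in the second step, and there your diagnosis is exactly right, in a way worth spelling out: the paper asserts that the number of cliques is at most $2^{2^t/\sqrt{t}}/\sqrt{2^t/\sqrt{t}}$ ``due to Proposition~\ref{P:comb}'', but Sperner's theorem does not give this. After RR\ref{R:KC3} is exhausted, the families $\mathcal{F}(C_i)=\{N_U(u):u\in V(C_i)\}$ are pairwise incomparable subsets of the ground set $2^{U}$, which has $2^t$ elements; the honest application of Proposition~\ref{P:comb} therefore bounds the number of cliques only by $2^{2^t}/\sqrt{2^t}=2^{2^t-t/2}$, and combined with the per-clique bound $2^t/\sqrt{t}$ this yields roughly $2^{2^t+t/2-\frac{1}{2}\log_2 t}$ vertices, which exceeds the stated $2^{2^t+\sqrt{t}}$ for all $t\ge 10$. (Applying Proposition~\ref{P:comb} with ground-set size $N=2^t/\sqrt{t}$ would require all the families to live inside a common set of that size---true if every $N_U(u)$ lay in the middle layer of $2^{U}$, but false in general.) The paper's intermediate claim is really a bound on the number of antichains of $2^{U}$, i.e.\ a Dedekind-number bound, exactly as you observe; your route via $\log_2 M(t)=(1+o(1))\binom{t}{\lfloor t/2\rfloor}$ together with $\binom{t}{\lfloor t/2\rfloor}\le 2^t/\sqrt{t}$, or via a chain decomposition, is what is actually needed to prove the lemma as stated. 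So your proposal does not merely take a different route---it repairs a genuine gap in the paper's own argument, and the ``real work'' you flag is real.

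Two caveats on your write-up, neither fatal. First, the symmetric-chain alternative gives $M(t)\le\prod_i(\ell_i+1)\le(1+2^t/m)^m$ with $m=\binom{t}{\lfloor t/2\rfloor}$ chains, i.e.\ $\log_2 M(t)=O\bigl(\binom{t}{\lfloor t/2\rfloor}\log t\bigr)$; this is \emph{not} the same order as $\binom{t}{\lfloor t/2\rfloor}$---there is a genuine $\log t$ loss---though it still suffices here, since $m\log_2(1+2^t/m)+t\le 2^t+\sqrt{t}$ can be verified for every $t\ge 1$, with the gap widening as $t$ grows. Second, the Kleitman--Markowsky asymptotic carries an unquantified $o(1)$, so ``checking the few small values of $t$'' requires either an explicit form of that error term or falling back on the chain-decomposition bound plus the computation just mentioned. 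With either of these made explicit your proof is complete, and your account of where the difficulty sits is more accurate than the paper's.
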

\begin{proof}    
Once we cannot apply the reduction rules RR\ref{R:KC2} and RR\ref{R:KC3}, due to Proposition~\ref{P:comb}, each clique can have at most $\frac{2^t}{\sqrt{t}}$ vertices. Moreover, the total number of cliques possible is at most $\frac{2^{\frac{2^t}{\sqrt{t}}}}{\sqrt{\frac{2^t}{\sqrt{t}}}}$ (due to Proposition~\ref{P:comb}). Thus, the total number of vertices in the reduced graph is at most $2^{2^t + \sqrt{t}}$.
\end{proof}


Since $k \leq \frac{t}{2}+1$ (by Reduction Rule RR\ref{R:KC1}), applying the XP-algorithm for \CR from Proposition~\ref{P:XP} to the kernel in Theorem~\ref{T:vertexClique} gives us the following corollary.

\begin{corollary}
\CR is \FPT parameterized by \cvd. Specifically, it is solvable in $(\mathsf{cvd}+2^{2^\mathsf{cvd} + \sqrt{\mathsf{cvd}}})^{\frac{\mathsf{cvd}}{2}+2}\cdot n^{\mathcal{O}(1)}$ time. 
\end{corollary}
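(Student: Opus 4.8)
The plan is to obtain the result as a direct composition of the kernelization from Theorem~\ref{T:vertexClique} with the XP-time algorithm of Proposition~\ref{P:XP}, with the only genuine care going into tracking the exponent. First I would run the kernelization on the input $(G,k)$: starting from a cluster vertex deletion set of size $t=\mathsf{cvd}$ (computed, if not supplied, by a standard $\mathsf{FPT}$/approximation routine in $n^{\mathcal{O}(1)}$ time), I apply reduction rules RR\ref{R:KC1}--RR\ref{R:KC3} exhaustively. Each rule is testable and applicable in polynomial time, and since RR\ref{R:KC2} and RR\ref{R:KC3} strictly decrease $|V(G)|$ while RR\ref{R:KC1} simply terminates the computation, the whole reduction halts in $n^{\mathcal{O}(1)}$ time. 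If RR\ref{R:KC1} fires, we answer Yes and stop; otherwise, by Lemma~\ref{L:boundClq} (equivalently Theorem~\ref{T:vertexClique}) the reduced equivalent instance $G'$ has $N:=|V(G')|\le \mathsf{cvd}+2^{2^{\mathsf{cvd}}+\sqrt{\mathsf{cvd}}}$ vertices.

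The key bookkeeping step is the bound on $k$ on the reduced instance. Because RR\ref{R:KC1} is no longer applicable, we have $k<\frac{t}{2}+1$, i.e.\ $k\le \frac{\mathsf{cvd}}{2}$. I would then invoke Proposition~\ref{P:XP} on $(G',k)$, which decides $k$-copwinness in $\mathcal{O}(kN^{k+2})$ time. Substituting $N\le \mathsf{cvd}+2^{2^{\mathsf{cvd}}+\sqrt{\mathsf{cvd}}}$ and $k+2\le \frac{\mathsf{cvd}}{2}+2$ yields an exponential-in-$\mathsf{cvd}$ term of $\big(\mathsf{cvd}+2^{2^{\mathsf{cvd}}+\sqrt{\mathsf{cvd}}}\big)^{\frac{\mathsf{cvd}}{2}+2}$; the leading multiplicative factor $k\le \frac{\mathsf{cvd}}{2}\le n$ is polynomial in $n$ and is absorbed into $n^{\mathcal{O}(1)}$, as is the cost of the kernelization itself. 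Adding these up gives the claimed running time $\big(\mathsf{cvd}+2^{2^{\mathsf{cvd}}+\sqrt{\mathsf{cvd}}}\big)^{\frac{\mathsf{cvd}}{2}+2}\cdot n^{\mathcal{O}(1)}$, which is of the form $f(\mathsf{cvd})\cdot n^{\mathcal{O}(1)}$, so \CR parameterized by \cvd~is \FPT.

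I expect no serious obstacle, as this is a routine assembly of earlier results; the one point demanding care—and the only place an off-by-one could creep in—is the exponent. It is essential to use the strict post-RR\ref{R:KC1} bound $k\le \frac{\mathsf{cvd}}{2}$ (rather than $k\le \frac{\mathsf{cvd}}{2}+1$), since otherwise $k+2$ would read $\frac{\mathsf{cvd}}{2}+3$ instead of the advertised $\frac{\mathsf{cvd}}{2}+2$. A secondary subtlety is that Proposition~\ref{P:XP} must be run on the kernel $G'$, not on $G$, so that the exponential base is the kernel size $N$ (a function of $\mathsf{cvd}$ alone) rather than $n$; the correctness of this substitution is exactly the Yes/No-equivalence guaranteed by the safeness of RR\ref{R:KC1}--RR\ref{R:KC3}.
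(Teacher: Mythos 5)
Your proposal is correct and takes essentially the same route as the paper: the paper also obtains this corollary by running the XP-algorithm of Proposition~\ref{P:XP} on the kernel of Theorem~\ref{T:vertexClique}, with the exponent controlled by the bound on $k$ coming from RR\ref{R:KC1}. Your explicit note that inapplicability of RR\ref{R:KC1} forces $k\le\frac{\mathsf{cvd}}{2}$ (so that $k+2\le\frac{\mathsf{cvd}}{2}+2$) is precisely the bookkeeping the paper relies on implicitly, stated more carefully than in the paper's own one-line justification.
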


\subsection{Exponential Kernel for \CR by \dts}
Using the ideas we presented in Section~\ref{S:clique}, we can also get a kernel for \CR with respect to deletion to stars number. Let $U$ be a deletion to stars vertex set of size $t$. Also, let $S = V(G) \setminus U$, and let $X_1, \ldots X_\ell$ be the stars in the graph $G[S]$. Specifically, we have the following reduction rules along with reduction rule RR\ref{R:KC1}. 

\begin{RR}[RR\ref{R:KC4}]\label{R:KC4}
Let $u$ and $v$ be two leaf vertices of some star $X$ in $G[S]$ such that $N_U(u) \subseteq N_U(v)$. Then, delete $u$.
\end{RR}

\begin{RR}[RR\ref{R:KC5}]\label{R:KC5}
Let $X$ and $Y$ be two stars in $G[S]$ such that $V(X) = x, x_1, \ldots, x_p$ and $V(Y) = y, y_1, \ldots, y_q$, where $x$ and $y$ are center vertices of $X$ and $Y$, respectively. If $N_U(x)\subseteq N_U(y)$ and for each vertex $x_i$ (for $i\in [p]$), there is a vertex $y_j$ (for $j\in [q]$) such that $N_U(x_i) \subseteq N_U(y_j)$, then delete $X$.
\end{RR}

The safeness of RR\ref{R:KC1} follows from Theorem~\ref{th:ABound}. We have the following lemma, which establishes that reduction rules RR\ref{R:KC4} and RR\ref{R:KC5} are safe.

\begin{lemma}\label{L:dtsnSafe}
Assuming $k>1$, reduction rules RR\ref{R:KC4} and RR\ref{R:KC5} are safe.
\end{lemma}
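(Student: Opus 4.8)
The plan is to prove Lemma~\ref{L:dtsnSafe} by adapting the image/retract argument of Lemma~\ref{L:VCkernel} and Lemma~\ref{L:cliqueType} to the star setting, handling the two reduction rules separately. For RR\ref{R:KC4}, the vertices $u$ and $v$ are both leaves of the same star $X$, so they share the center $x$ as a common neighbor in $S$, and by hypothesis $N_U(u)\subseteq N_U(v)$; hence $N[u]\subseteq N[v]$ (both are adjacent to $x$, and every $U$-neighbor of $u$ is a $U$-neighbor of $v$). This reduces RR\ref{R:KC4} directly to the situation of Lemma~\ref{L:VCkernel} (deletion of a vertex whose closed neighborhood is contained in another's), so safeness for $k>1$ follows verbatim from that proof: the forward direction reroutes any cop move onto $u$ to $v$ instead, and the backward direction defines the image map $I(\cdot)$ with $I(u)=v$, captures the image, and then uses the guarding cop on $v$ (which dominates $u$ since $N[u]\subseteq N[v]$) together with a second cop to finish.

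For RR\ref{R:KC5} I would mirror the structure of Lemma~\ref{L:cliqueType} but define the image map to respect the center/leaf structure of the two stars $X$ and $Y$. Concretely, set $I(x)=y$ for the centers and, for each leaf $x_i$, set $I(x_i)=y_j$ for some leaf $y_j$ with $N_U(x_i)\subseteq N_U(y_j)$ (choosing arbitrarily if several work), and $I(z)=z$ for every vertex $z\notin V(X)$. The essential verification is that this $I$ is an edge-preserving map into $H=G[V(G)\setminus V(X)]$ whose image lies entirely in $H$. I would check this by cases exactly as in Lemma~\ref{L:cliqueType}: edges with both endpoints outside $X$ are trivial; the star edges $x x_i$ of $X$ map to $y y_j$, which is an edge of $Y$ because $y$ is the center and $y_j$ a leaf; and any edge from $V(X)$ to the rest of the graph must go to $U$ (since $S$-components are disjoint stars), and is preserved because $N_U(x)\subseteq N_U(y)$ and $N_U(x_i)\subseteq N_U(y_j)$.

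With the image map in hand, the winning strategy transfer is the same template: the cops play their winning strategy for $H$ against the robber's image $I(\R)$, eventually placing a cop on $I(x)$ when $\R$ sits on a vertex $x\in V(X)$. The new point, as in Lemma~\ref{L:cliqueType}, is that one cop now confines $\R$ to the star $X$ while a second cop mops up. I would argue confinement as follows: the robber on a leaf or center of $X$ cannot step to any vertex of $U$ without being caught, because the containments $N_U(x)\subseteq N_U(y)$ and $N_U(x_i)\subseteq N_U(y_j)$ mean the cop tracking the image already dominates all of $\R$'s $U$-neighbors; and if $\R$ moves within $X$, the image cop updates $I(\cdot)$ correspondingly to keep shadowing. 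Since a star is copwin, the extra cop (available because $k>1$) captures $\R$ inside $X$ in finitely many moves. The backward directions of both rules are routine, so I would state them briefly.

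The main obstacle I anticipate is the confinement step for RR\ref{R:KC5}, specifically ensuring the image cop can \emph{dynamically} maintain domination of the robber's $U$-neighbors as the robber moves among the leaves and center of $X$. When $\R$ moves from one leaf $x_i$ to another leaf $x_{i'}$ (via the center, since leaves of a star are non-adjacent), the required image vertex changes from $y_j$ to some $y_{j'}$, and I must confirm the cop can follow this within $Y$ in one move, which holds because $Y$ is itself a star and the cop can route through the center $y$. I would make explicit that the only subtlety is synchronizing these moves so that at every robber turn the dominating cop is positioned to catch any escape to $U$, and argue this via the single-cop-confines-a-retract principle already used in Lemma~\ref{L:cliqueType}.
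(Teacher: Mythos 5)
Your proposal is correct and follows essentially the same route as the paper: RR\ref{R:KC4} is reduced to Lemma~\ref{L:VCkernel} via a neighborhood-containment observation, and RR\ref{R:KC5} is handled by exactly the image map the paper defines (center $x \mapsto y$, each leaf $x_i \mapsto$ a leaf $y_j$ with $N_U(x_i)\subseteq N_U(y_j)$, identity elsewhere), followed by the same shadowing/confinement argument borrowed from Lemmas~\ref{L:VCkernel} and~\ref{L:cliqueType}; your extra verification of edge-preservation and of the image cop routing through the center of $Y$ only fills in details the paper leaves implicit. One slip worth correcting: for RR\ref{R:KC4} you assert $N[u]\subseteq N[v]$ and that the cop on $v$ ``dominates'' $u$, but this is false---$u$ and $v$ are distinct leaves of the same star, hence non-adjacent, so $u\notin N[v]$. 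What your own justification actually establishes (and what the paper observes) is the open-neighborhood containment $N(u)\subseteq N(v)$, which is precisely the hypothesis of Lemma~\ref{L:VCkernel}; its proof does not need $v$ to dominate $u$, since the cop on $v$ merely catches any exit from $u$ (every such exit lands in $N(v)$) while a second cop, available because $k>1$, walks to $u$, so the argument goes through unchanged.
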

\begin{proof}
To prove that rule RR\ref{R:KC4} is safe, it suffices to observe that for leaf vertices $u$ and $v$ of some star $X \in S$, if $N_U(u) \subseteq N_U(v)$, then $N(u) \subseteq N(v)$ in $G$. Indeed, the rest of the proof follows directly from the proof of Lemma~\ref{L:VCkernel}.

Next, we give a proof idea for the safeness of rule RR\ref{R:KC5}. Here, we just define the  function of the image of the robber, and the rest of the proof is similar to the proofs of Lemmas~\ref{L:VCkernel} and~\ref{L:cliqueType}.  For each vertex $u \notin V(X)$, $I(u) = u$. For each $x_i$, $I(x_i) = y_j$ such that $N_U(x_i)\subseteq N_U(y_j)$ (there might be multiple choices for $y_j$ and we can choose any one of them), and $I(x) = y$.
\end{proof}

Now, we claim that once we cannot apply rules RR\ref{R:KC4} and RR\ref{R:KC5} anymore, the size of the graph is bounded by a function of $t$. First, we note that the size of each star is at most $\frac{2^t}{\sqrt{t} }+1$ (by Proposition~\ref{P:comb}). Let $X$ and $Y$ be two stars in $G[S]$ such that $x$ and $y$ are the center vertices of $X$ and $Y$, respectively. We say that $X$ and $Y$ have the same \textit{neighbourhood type} if  $N_U(x) = N_U(y)$. Second, it is easy to see that there can be at most $2^t$ neighbourhood types. Next, we bound the number of stars in each neighbourhood type. Let $S_1, \ldots, S_z$ be the stars having the same neighbourhood type, and let $v_i$ be the center vertex of star $S_i$. For each star $S_i$, for $i\in [z]$, let $\mathcal{S}_i = \{N(v): v\in V(S_i)\setminus \{v_i\} \}$. Since we have applied reduction rule RR\ref{R:KC4} exhaustively, we know that for each $A\in \mathcal{S}_i$, $A=N(v)$ for a unique vertex $v\in V(S_i)\setminus \{v_i\}$. Observe that each $S_i$ is a subset of the power set of $U$ and the power set of $U$ has size $2^t$. Moreover, since we have applied reduction rule RR\ref{R:KC5} exhaustively, we know that for any $i,j \in [z]$, neither $\mathcal{S}_i \subseteq \mathcal{S}_j$ nor $\mathcal{S}_j \subseteq \mathcal{S}_i$. Hence, due to Proposition~\ref{P:comb}, $z \leq \frac{2^{2^t}}{\sqrt{2^t}}$. 
Therefore, the size of the reduced graph can be at most $\frac{2^{2^t}}{\sqrt{2^t}}\cdot 2^t \cdot (\frac{2^t}{\sqrt{t}} +1) $. Thus, we have the desired kernel from Theorem~\ref{T:vertexClique}.


Since $k \leq \frac{t}{2}+1$ (by reduction rule RR\ref{R:KC1}), applying the XP-algorithm for \CR from Proposition~\ref{P:XP} to the kernel in Theorem~\ref{T:vertexClique} gives us the following  corollary.

\begin{corollary}
\CR is \FPT parameterized by \dts. Specifically, it is solvable in $(\mathsf{dts}+2^{2^\mathsf{dts} + \mathsf{dts}^{1.5}})^{\frac{\mathsf{dts}}{2}+2}\cdot n^{\mathcal{O}(1)}$ time. 
\end{corollary}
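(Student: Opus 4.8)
The plan is to establish a running-time bound by combining the kernelization result (Theorem~\ref{T:vertexClique}, via the reduction rules RR\ref{R:KC1}, RR\ref{R:KC4}, and RR\ref{R:KC5} analyzed above) with the \XP-algorithm of Proposition~\ref{P:XP}. First, I would invoke the size bound already derived at the end of the preceding subsection: after exhausting the reduction rules, the reduced graph has at most $\frac{2^{2^t}}{\sqrt{2^t}}\cdot 2^t \cdot (\frac{2^t}{\sqrt{t}}+1)$ vertices, which Theorem~\ref{T:vertexClique} simplifies to at most $2^{2^\mathsf{dts}+\mathsf{dts}^{1.5}}$ vertices. This kernelization runs in polynomial time, since each reduction rule can be applied in polynomial time and each application strictly decreases the vertex count.

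Second, I would bound the parameter $k$ on the kernel. By reduction rule RR\ref{R:KC1}, which reports a Yes-instance whenever $k\geq \frac{t}{2}+1$, we may assume on the surviving instance that $k\leq \frac{t}{2}+1 = \frac{\mathsf{dts}}{2}+1$. This is the crucial observation that keeps the exponent in the final running time controlled: the number of cops we must simulate is linear in the parameter, not unbounded.

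Third, I would apply Proposition~\ref{P:XP}, which solves \CR in $\mathcal{O}(kN^{k+2})$ time, where $N$ is the number of vertices of the instance it runs on. Plugging in $N \leq 2^{2^\mathsf{dts}+\mathsf{dts}^{1.5}} = \mathsf{dts}+2^{2^\mathsf{dts}+\mathsf{dts}^{1.5}}$ (absorbing the additive $\mathsf{dts}$ term from the vertex-cover-style kernel bound into the dominant exponential) and $k\leq \frac{\mathsf{dts}}{2}+1$, the running time becomes $\big(\mathsf{dts}+2^{2^\mathsf{dts}+\mathsf{dts}^{1.5}}\big)^{\frac{\mathsf{dts}}{2}+2}\cdot n^{\mathcal{O}(1)}$, where the $n^{\mathcal{O}(1)}$ factor accounts for constructing the kernel from the original $n$-vertex input (and, if a deletion-to-stars set of size $t=\mathsf{dts}$ is not supplied, computing or approximating one). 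This matches the claimed bound exactly.

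I do not anticipate a genuine obstacle here, as this corollary is a routine composition of two already-established results. The only point requiring mild care is the bookkeeping in the exponent: one must verify that $k+2$ in the exponent of Proposition~\ref{P:XP} is correctly replaced by $\frac{\mathsf{dts}}{2}+2$ using the bound from RR\ref{R:KC1}, and that the additive $\mathsf{dts}$ inside the base is harmless since it is dominated by the doubly-exponential kernel size. Everything else follows verbatim from the analogous corollaries already stated for $\mathsf{vcn}$ and $\mathsf{cvd}$.
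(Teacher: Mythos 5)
Your proposal is correct and follows exactly the paper's own route: the paper likewise obtains this corollary by combining the kernel of Theorem~\ref{T:vertexClique} with the bound $k \leq \frac{\mathsf{dts}}{2}+1$ from RR\ref{R:KC1} and then running the $\mathcal{O}(kn^{k+2})$ algorithm of Proposition~\ref{P:XP} on the kernel. The only blemish is cosmetic: your displayed ``equality'' $2^{2^\mathsf{dts}+\mathsf{dts}^{1.5}} = \mathsf{dts}+2^{2^\mathsf{dts}+\mathsf{dts}^{1.5}}$ should just be stated as bounding the kernel size by the base $\mathsf{dts}+2^{2^\mathsf{dts}+\mathsf{dts}^{1.5}}$, which is what the paper's corollary uses.
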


\subsection{Exponential Kernels for Different Variants}

Here, we extend the result of Theorem~\ref{th:Kernel} to several variations of the game. We have the following results.

\subsubsection{\LCR and \CAR:}
First, we prove the following lemma.
\begin{lemma}\label{L:VCkernelVariants}

Let $u$ and $v$ be two distinct vertices of $G$ such that $N(u) \subseteq N(v)$. Consider the graph $H$ induced by $V(G)\setminus \{u\}$. Then for $k>1$ and for $x\in \{lazy,attack\}$, $\mathsf{c}_x(G) \leq k$ if and only if $\mathsf{c}_x(H) \leq k$. 
\end{lemma}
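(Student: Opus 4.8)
The plan is to follow the proof of Lemma~\ref{L:VCkernel} essentially verbatim, using the same \emph{image} (shadow) map $I$ given by $I(x)=x$ for $x\neq u$ and $I(u)=v$, and then to check that the argument survives the extra rules of each variant. Before doing anything game-specific I would record two structural facts. First, since $N(u)\subseteq N(v)$ and $G$ is simple, $u\not\sim v$ (otherwise $v\in N(u)\subseteq N(v)$ would give a loop at $v$); this is what makes a cop stationed at $v$ a safe guard of $u$. Second, $I$ is a fold: it is edge-preserving into $H$ and, crucially, $x\sim p\Rightarrow I(x)\sim p$ for every $p\in V(H)$ (the only case to verify is $x=u$, where any $p\sim u$ lies in $N(u)\subseteq N(v)=N(I(u))$). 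In the reverse direction this fold property says the \emph{image} of the robber is always at least as threatening as the robber itself, so cops who follow a winning $H$-strategy against the image are never exposed to an attack that the $H$-strategy does not already handle.

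For \LCR the adaptation is faithful. In the forward direction I would borrow the winning $G$-strategy and replace each prescribed move onto $u$ by the move onto $v$ (legal, as every neighbour of $u$ is a neighbour of $v$); since the $G$-strategy moves at most one cop per turn, so does its translate, so laziness is preserved, and the robber, confined to $V(H)$, is captured exactly as in $G$. In the reverse direction I would play the $H$-strategy on the image; once the image is caught, either the robber sits on some $x\neq u$ and is caught, or it sits on $u$ with a cop $\C_1$ on $v$. Then $\C_1$ guards passively: if the robber steps to any $w\in N(u)\subseteq N(v)$ it is captured on the next cop turn by $\C_1$ moving $v\to w$, so it is confined to $u$. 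Since $k>1$, a second cop $\C_2$ walks to $u$ while $\C_1$ stays put (only one cop moving per turn, consistent with laziness) and captures. There is no retaliation to worry about, so this is genuinely ``similar to Lemma~\ref{L:VCkernel}''.

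The \CAR game is where the real difficulty lies, and I expect the endgame to be the main obstacle. The confinement step itself is still safe: from $u$ the robber cannot attack the guard $\C_1$ on $v$ because $u\not\sim v$, and whenever it enters $N(v)$ it does so on its own turn and is captured on the following cop turn, before it can strike. The trouble is the \emph{extraction}: a lone cop that walks to a neighbour $w\in N(u)$ is, on the robber's turn, attacked and eliminated, so the clean ``guard $+$ one extra cop'' finish of Lemma~\ref{L:VCkernel} no longer works. The resolution I would pursue exploits the attacking rule that an attack on a vertex carrying two cops results in the robber's capture: I would bring two cops simultaneously onto neighbours of $u$, so that the robber's only non-losing reply is to flee to a further neighbour, and then, using $u\not\sim v$ to reoccupy the guard at $v$, convert the resulting position into one in which the robber is trapped. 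Making this finish work in \emph{all} cases — in particular the tight case $k=2$ with $u$ of large degree, where the robber has many pairwise non-adjacent escape neighbours — is the delicate part, and is exactly where the hypothesis $k>1$ and the precise semantics of two cops sharing a vertex are used.

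Finally, I would flag a second, easily overlooked subtlety specific to \CAR that also has to be dispatched: in the \emph{forward} direction the substitution $u\mapsto v$ gives the cop at $v$ a strictly larger attack surface than it had at $u$ (a robber adjacent to $v$ but not to $u$ can now strike it), so the naive translation of a $G$-strategy to $H$ is not immediately valid and must be patched — either by arguing the borrowed strategy can be taken not to expose such a cop, or by re-routing that cop. I expect both the forward exposure issue and the attacking endgame to require the honest work, while the lazy case and all the reverse-direction simulations reduce cleanly to the fold property $x\sim p\Rightarrow I(x)\sim p$ together with $u\not\sim v$.
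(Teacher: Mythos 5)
Your image-map setup and your \LCR argument coincide with the paper's proof, but there is a genuine gap in the \CAR case, exactly where you announce that "the real difficulty lies." You assert that the lone-cop finish fails because a cop stepping onto a neighbour $w\in N(u)$ is attacked and eliminated, and you replace it with a two-cop manoeuvre that you yourself leave unresolved. The observation you are missing is that in \CAR an attack is not a free kill: the robber must move onto the attacked cop's vertex. This is the paper's endgame, and it works with a single expendable cop. With $\R$ confined to $u$ and the guard $\C_1$ at $v$, a second cop $\C_2$ walks to some $w\in N(u)$. If $\R$ attacks $\C_2$, then $\C_2$ is eliminated but $\R$ now stands on $w\in N(u)\subseteq N(v)$, i.e., inside the attack range of the guard, and $\C_1$ captures it on the very next cop move; if $\R$ does not attack, $\C_2$ captures it at $u$. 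The sacrifice is affordable precisely because the attack drags the robber into the guard's range. By contrast, your pincer cannot even be set up in the case you single out as critical: for $k=2$ the only cops are $\C_1$ and $\C_2$, so placing two cops in $N(u)$ forces the guard off $v$, and a cop sitting at a single vertex $w_1\in N(u)$ no longer dominates all of $N(u)$, letting $\R$ escape through any neighbour outside $N[w_1]$. So the reverse direction for \CAR --- the half of the lemma the paper proves in detail --- is not established by your proposal.

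Your second flag, that in the forward direction the substituted cop at $v$ has a strictly larger attack surface than the cop at $u$ it replaces, is a fair observation: the paper does not engage with it, declaring the forward direction "easy" and analogous to Lemma~\ref{L:VCkernel}, whose verbatim substitution argument is airtight only for variants without the attack rule. Spotting this shows care. But flagging is not proving; you leave that direction open as well, so on your own account both halves of the \CAR case remain incomplete in your write-up, and the one concrete plan you do offer (the two-cop finish) is the one that breaks.
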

\begin{proof}
The proof for the forward direction ($c_x(G)\leq k$ implies $c_x(H)\leq k$) is easy and follows from the arguments similar to the arguments in the proof of Lemma~\ref{L:VCkernel}. We prove the reverse side ($c_x(H)\leq k$ implies $ c_x(G) \leq k$) for both  the variants below. Moreover, similarly to the proof of Lemma~\ref{L:VCkernel}, we define $I(u) = v$ and $I(x) = x$ when $x\neq u$. Similarly, when $\R$ is at a vertex $x$, we say that the \textit{image} of $\R$ is at vertex $I(x)$. (Note that the image of $\R$ is restricted to $H$.) In both of the variants, the cops will play in $G$ to capture the image of the robber using the winning strategy of $H$. 

In \LCR, the cops begin by capturing the image of $\R$ in $G$. If $\R$ is at a vertex $x \neq u$, then $\R$ is captured. If $\R$ is at vertex $u$, then observe that there is a cop, say, $\C$, at $v$ that has captured the image of $\R$. Now, $\C$ ensures that $\R$ cannot move, and some other lazy cop can move to capture $\R$ in a finite number of rounds.

In \CAR, the main observation is that if the cops can capture $\R$ in $H$, they can capture the image of $\R$ in $G$ without getting attacked by $\R$. If $\R$ is at a vertex $x\neq u$ when the image of $\R$ is captured, then $\R$ is captured. Otherwise, $\R$ is at $u$, and a cop, say, $\C_1$, is at  vertex $v$. Now another cop, say, $\C_2$, can move to a vertex $w \in N(v)$ (in a finite number of steps) to capture $\R$. If $\R$ attacks $\C_2$ at this point, then note that $\C_1$ can move to capture $\R$ in the next round. If $\R$ does not attack, then $\C_2$ moves to capture $\R$ in the next round.
\end{proof}

Lemma~\ref{L:VCkernelVariants} establishes that  reduction rule RR\ref{R:KV3} is safe for both \CAR and \LCR. Before applying reduction rule RR\ref{R:KV3}, we apply the following reduction rules.

\begin{RR}[RR\ref{R:E1}]\label{R:E1}
If $k \geq \frac{t}{2} +1$, then answer positively (Theorem~\ref{T:variationBound}).
\end{RR}

\begin{RR}[RR\ref{R:E2}]\label{R:E2}
If $k=1$, then apply the ${\mathcal{O}(n^3)}$ time algorithm from Proposition~\ref{P:generalXP}.
\end{RR}

The size of the kernel, by using these reduction rules, is dependent on RR9. Therefore, the proof of the existence of the desired kernel from Theorem~\ref{T:LAtKernel} follows directly.

Moreover, Theorem~\ref{T:LAtKernel}, along with the XP-algorithms from Proposition~\ref{P:generalXP} for these variants, gives the following immediate corollary.

\begin{corollary}
\CAR and \LCR are \FPT parameterized by $\mathsf{vcn}$. Specifically, they are solvable in $\mathsf{vcn}+ \frac{2^\mathsf{vcn}}{\sqrt{\mathsf{vcn}}})^{\frac{\mathsf{vcn}}{2}+2}\cdot n^{\mathcal{O}(1)}$ time. 
\end{corollary}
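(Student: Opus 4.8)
The plan is to combine the kernelization of Theorem~\ref{T:LAtKernel} with the variant-agnostic XP-time algorithm of Proposition~\ref{P:generalXP}, exactly mirroring how Corollary~\ref{C:paraAlgo} is derived from Theorem~\ref{th:Kernel} and Proposition~\ref{P:XP} in the classical setting. Given an instance $(G,k)$ of \CAR (respectively \LCR) together with a vertex cover $U$ of size $t$, I would first run the kernelization algorithm underlying Theorem~\ref{T:LAtKernel} to obtain, in polynomial time, an equivalent instance $(G',k)$ whose graph has at most $N := t + \frac{2^t}{\sqrt{t}}$ vertices. Since reduction rule RR\ref{R:E1} has already been applied exhaustively during kernelization, we may assume $k \le \frac{t}{2}$; otherwise the instance has already been reported as a Yes-instance, which is safe by Theorem~\ref{T:variationBound}.

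Next, I would invoke Proposition~\ref{P:generalXP} on the kernel $(G',k)$, which solves any variant of \CR on an $N$-vertex graph in $\mathcal{O}(k\,N^{k+2})$ time. Substituting $k \le \frac{t}{2}$ and $N \le t + \frac{2^t}{\sqrt{t}}$ yields a running time of $\big(t + \frac{2^t}{\sqrt{t}}\big)^{\frac{t}{2}+2}\cdot n^{\mathcal{O}(1)}$ on the kernel; since kernelization itself is polynomial in the original input size $n$, the overall running time is $\big(\mathsf{vcn} + \frac{2^\mathsf{vcn}}{\sqrt{\mathsf{vcn}}}\big)^{\frac{\mathsf{vcn}}{2}+2}\cdot n^{\mathcal{O}(1)}$, as claimed. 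Because this has the form $f(\mathsf{vcn})\cdot n^{\mathcal{O}(1)}$, it simultaneously establishes that both variants are \FPT parameterized by $\mathsf{vcn}$.

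The only point requiring care is the availability of a vertex cover: if none is supplied with the input, I would compute a minimum vertex cover in time $1.2738^{\mathsf{vcn}}\cdot n^{\mathcal{O}(1)}$ via~\cite{fastestVC} (or, for a purely polynomial preprocessing step, a $2$-approximate cover of size at most $2\,\mathsf{vcn}$), a cost absorbed into the stated bound. I do not anticipate a genuine obstacle, as every ingredient---the kernel size of Theorem~\ref{T:LAtKernel}, the cop-number bound of Theorem~\ref{T:variationBound} ensuring $k \le \frac{t}{2}$, and the variant-agnostic XP algorithm of Proposition~\ref{P:generalXP}---has already been established earlier; the proof is essentially the bookkeeping needed to compose them and to read off the exponent $\frac{t}{2}+2$ from the cop-number bound $\frac{t}{2}+1$.
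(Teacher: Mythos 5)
Your proposal is correct and is essentially the paper's own argument: the paper derives this corollary immediately by composing the kernel of Theorem~\ref{T:LAtKernel} with the variant XP algorithm of Proposition~\ref{P:generalXP}, using $k \le \frac{\mathsf{vcn}}{2}$ from RR\ref{R:E1} to bound the exponent, and computing a minimum vertex cover in $1.2738^{\mathsf{vcn}}\cdot n^{\mathcal{O}(1)}$ time when one is not supplied---exactly as you do. One minor caution: your alternative of a polynomial-time $2$-approximate cover would double the parameter (inflating both the kernel size and the exponent to functions of $2\,\mathsf{vcn}$), so only the exact FPT vertex-cover computation yields the bound exactly as stated.
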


\subsubsection{\CR on Directed Graphs:}
Next, we consider the game of \CR on oriented graphs. For a directed graph $\overrightarrow{G}$ and a vertex $v\in V(\overrightarrow{G})$, let $N^+(v)$ and $N^-(v)$ denote the set of out-neighbors and in-neighbors of $v$, respectively. We have the following lemma. 
\begin{lemma}\label{L:oriented}
Let $u$ and $v$ be two distinct vertices of a strongly connected directed graph $\overrightarrow{G}$ such that $N^+(u) \subseteq N^+(v)$ and $N^-(u) \subseteq N^-(v)$. Let $\overrightarrow{H}$ be the graph induced by $V(\overrightarrow{G})\setminus \{u\}$. Then, for $k>1$, $k$ cops have a winning strategy in $\overrightarrow{H}$ if and only if $k$ cops have a winning strategy in $\overrightarrow{G}$
\end{lemma}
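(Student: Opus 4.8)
The plan is to mimic the structure of the proof of Lemma~\ref{L:VCkernel}, now adapted to the directed setting. The key point is that the hypotheses $N^+(u) \subseteq N^+(v)$ and $N^-(u) \subseteq N^-(v)$ together say that $v$ dominates $u$ with respect to \emph{both} arc directions, which is exactly what is needed so that moving a cop or the robber ``through'' $v$ instead of $u$ remains a legal move along the arc orientation. I would handle the two directions of the equivalence separately, just as in Lemma~\ref{L:VCkernel}.

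For the forward direction ($k$ cops win in $\overrightarrow{G}$ $\Rightarrow$ $k$ cops win in $\overrightarrow{H}$), I would have the cops in $\overrightarrow{H}$ simulate their winning strategy from $\overrightarrow{G}$, with the single modification that whenever the strategy instructs a cop to move to $u$, it moves to $v$ instead. I must check this is legal: if a cop legally moves to $u$ along an arc in $\overrightarrow{G}$, the previous vertex lay in $N^-(u) \subseteq N^-(v)$, so the arc into $v$ exists as well; and from $v$ the cop can continue, since whatever arc it would have taken out of $u$ goes to a vertex in $N^+(u) \subseteq N^+(v)$. The robber, living entirely in $\overrightarrow{H} = \overrightarrow{G}\setminus\{u\}$, is confined to $V(\overrightarrow{H})$, so this modified strategy captures it.

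For the reverse direction ($k$ cops win in $\overrightarrow{H}$ $\Rightarrow$ $k$ cops win in $\overrightarrow{G}$), I would define the retraction-style image map $I(u)=v$ and $I(x)=x$ for $x\neq u$, and verify that $I$ sends arcs of $\overrightarrow{G}$ to arcs (or identical vertices) of $\overrightarrow{H}$: if $xy$ is an arc of $\overrightarrow{G}$ with, say, $x=u$, then $y\in N^+(u)\subseteq N^+(v)$ so $I(x)I(y)=vy$ is an arc of $\overrightarrow{H}$, and the symmetric argument using $N^-(u)\subseteq N^-(v)$ handles $y=u$. Hence every legal robber move in $\overrightarrow{G}$ projects to a legal move of its image in $\overrightarrow{H}$. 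The cops then run the $\overrightarrow{H}$-strategy to capture the image; when the image is caught, either the real robber is caught too, or the robber sits at $u$ with a cop at $v$. The strong connectivity hypothesis and $k>1$ are used exactly here, as in Lemma~\ref{L:VCkernel}: one cop stays at $v$ to pin the robber at $u$ (if the robber leaves $u$ it enters some $w\in N^+(u)\subseteq N^+(v)$ and is captured, since the cop at $v$ can follow the arc $vw$), while a second cop uses strong connectivity to reach $u$ in finitely many moves and complete the capture.

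The main obstacle, and the one place the directed setting genuinely differs from the undirected Lemma~\ref{L:VCkernel}, is verifying that the pinning argument is valid with arcs rather than edges: I must confirm that the cop at $v$ can respond to \emph{every} escape route of the robber out of $u$, which is precisely guaranteed by $N^+(u)\subseteq N^+(v)$, and I must ensure the second cop can actually navigate to $u$ respecting arc orientations, which is where strong connectivity of $\overrightarrow{G}$ is essential (an ordinary orientation need not admit such a path). Once these two direction-sensitive checks are in place, the remaining bookkeeping is routine and inherited from the proofs of Lemmas~\ref{L:VCkernel} and~\ref{L:cliqueType}.
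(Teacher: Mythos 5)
Your proposal is correct and takes essentially the same route as the paper's own proof: the forward direction by simulating the $\overrightarrow{G}$-strategy with $v$ substituted for $u$ (justified by $N^-(u)\subseteq N^-(v)$ for entering and $N^+(u)\subseteq N^+(v)$ for leaving), and the reverse direction via the image map $I(u)=v$, capturing the image, pinning the robber at $u$ with a cop at $v$, and using strong connectivity plus $k>1$ to send a second cop to $u$. The only cosmetic difference is that the paper opens by observing that $\overrightarrow{H}$ is itself strongly connected, a point worth stating since the kernelization operates within the class of strongly connected digraphs, but its omission does not affect the validity of your argument.
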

\begin{proof}
First, observe that $H$ is also strongly connected.

Second, let $k$ cops have a winning strategy in $\overrightarrow{G}$. Then, the cops can use this winning strategy in $\overrightarrow{H}$, with the only difference that whenever a cop, say, $\C$, has to move to $u$ in $\overrightarrow{G}$, $\C$ moves to $v$ in $\overrightarrow{H}$ instead ($\C$ can do so because $N^-(u) \subseteq N^-(v)$). Next, whenever $\C$ has to move to a vertex, say, $w$, from $u$, in the strategy in $G$, then $\C$ can move to $w$ from $v$ also (since $N^+(u) \subseteq N^+(v)$). As $\R$ is restricted to $V(\overrightarrow{H})$ in $\overrightarrow{G}$, cops will capture $\R$ using this strategy in $\overrightarrow{H}$ as well.

Finally, let $k$ cops have a winning strategy in $\overrightarrow{H}$. We use this strategy to get a winning strategy in $\overrightarrow{G}$ using $k$ cops. First, we define $I(x) = x$ for $x\neq u$ and $I(u) = v$. Since $I(x)$ is restricted to $\overrightarrow{H}$, we use the winning strategy in $\overrightarrow{H}$ to capture $I(x)$. At this point if $x \neq u$, then $\R$ is captured. Else, $\R$ is at $u$ and one of the cops, say, $\C_1$, is at $v$. Since $N^+(u) \subseteq N^+(v)$, $\R$ cannot move as long as $\C_1$ occupies $v$. Since $\overrightarrow{G}$ is strongly connected, one of the other cops, say, $\C_2$, can move to $u$ in a finite number of rounds to capture $\R$.
\end{proof}

Let $G$ be a graph with a vertex cover $U$ of size $t$, and let $I = V(G)\setminus U$. Let $\overrightarrow{G}$ be a strongly connected orientation of $G$. We apply the following reduction rules.

\begin{RR}[RR\ref{R:D1}]\label{R:D1}
If $k\geq t$, then answer positively.
\end{RR}

\begin{RR}[RR\ref{R:D2}]\label{R:D2}
If $k=1$, then apply the $\mathcal{O}(n^3)$ time algorithm from Proposition~\ref{P:generalXP} to check whether $\overrightarrow{G}$ is copwin.
\end{RR}

\begin{RR}[RR\ref{R:D3}]\label{R:D3}
If $u$ and $v$ are two distinct vertices in $I$ such that  $N^+(u) \subseteq N^+(v)$ and $N^-(u) \subseteq N^-(v)$, then delete $u$.
\end{RR}

Safeness of reduction rules RR\ref{R:D1} and RR\ref{R:D3} follow from Theorem~\ref{T:variationBound} and Lemma~\ref{L:oriented}, respectively. Now, we argue that once we cannot apply reduction rules RR\ref{R:D1}-RR\ref{R:D3}, the size of $\overrightarrow{G}$ is bounded by a function of $t$. Observe that each vertex $u$ in $I$ has a unique neighbourhood ($N^+(u)\cup N^-(u)$) and there are three choices for a vertex  $v \in U$ to appear in the neighbourhood of a vertex $u \in I$, that is,  either $v \in N^+(u)$, or $v \in N^-(u)$, or $v \notin N^+(u)\cup N^-(u)$. Therefore, the total number of possible vertices in $I$ are at most $3^t$. Thus, applying reduction rules RR\ref{R:D1}-RR\ref{R:D3}, we get the desired kernel from Theorem~\ref{T:LAtKernel}.


Theorem~\ref{T:LAtKernel}, along with rule RR21 and Proposition~\ref{P:generalXP}, gives the following corollary.

\begin{corollary}\label{C:COriended}
\CR on strongly connected directed graphs is \FPT parameterized by the vertex cover number $t$. In particular, it is solvable in $(3^t+t)^{t+1}\cdot n^{\mathcal{O}(1)}$ time.
\end{corollary}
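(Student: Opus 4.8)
The plan is to obtain the result purely by assembling the kernelization machinery developed immediately above (reduction rules RR\ref{R:D1}--RR\ref{R:D3} and the kernel guaranteed by Theorem~\ref{T:LAtKernel}) with the generic \XP algorithm of Proposition~\ref{P:generalXP}. The \FPT claim is then essentially automatic: a kernel whose size depends only on $t$, solved by an algorithm whose running time is $n^{\mathcal{O}(k)}$ with $k$ also bounded in terms of $t$, yields fixed-parameter tractability. The quantitative running time is then a matter of substituting the kernel size into the \XP bound, so the real content is the careful bookkeeping of the exponent.

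First I would fix (or, if it is not supplied with the input, compute in $1.2738^t\cdot n^{\mathcal{O}(1)}$ time via \cite{fastestVC}) a vertex cover $U$ of size $t$ of the underlying undirected graph $G$, and set $I = V(\overrightarrow{G})\setminus U$. I would then apply reduction rules RR\ref{R:D1}--RR\ref{R:D3} exhaustively; their safeness has already been established (RR\ref{R:D1} by Theorem~\ref{T:variationBound} and RR\ref{R:D3} by Lemma~\ref{L:oriented}, while RR\ref{R:D2} merely invokes Proposition~\ref{P:generalXP} directly when $k=1$). By the counting argument preceding the corollary---each vertex of $I$ carries a distinct signature in $\{N^+, N^-, \text{neither}\}^{U}$, giving at most $3^t$ surviving vertices in $I$ and at most $t$ in $U$---the reduced instance $\overrightarrow{G}'$ is the kernel of Theorem~\ref{T:LAtKernel} with at most $N := 3^t + t$ vertices. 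The crucial observation to record here is that, because RR\ref{R:D1} did not fire on the (non-trivial) reduced instance, we have $k \leq t-1$, and hence $k+2 \leq t+1$.

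Next I would run the \XP algorithm of Proposition~\ref{P:generalXP} on the kernel $\overrightarrow{G}'$, which takes $\mathcal{O}\bigl(k\cdot N^{\,k+2}\bigr)$ time. Substituting $N \leq 3^t+t$ and the bound $k+2 \leq t+1$ gives $\mathcal{O}\bigl(t\cdot (3^t+t)^{t+1}\bigr)$, and folding the polynomial-time kernelization and the (dominated) vertex-cover computation into an $n^{\mathcal{O}(1)}$ factor yields the stated overall bound $(3^t+t)^{t+1}\cdot n^{\mathcal{O}(1)}$. The \FPT statement follows a fortiori, since both $N$ and the exponent $k+2$ are bounded by functions of $t$ alone.

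There is no serious obstacle in this argument, as all the heavy lifting---the equivalence under RR\ref{R:D3} via the image map of Lemma~\ref{L:oriented}, the $3^t+t$ kernel-size bound, and the correctness of the generic game-graph algorithm for the directed variant---is already in place. The only point demanding care is the exponent accounting: one must explicitly invoke the post-RR\ref{R:D1} inequality $k\leq t-1$ to replace the $k+2$ in the \XP running time by $t+1$, since otherwise the stated time bound does not follow. I would therefore state that dependence explicitly rather than leave it implicit.
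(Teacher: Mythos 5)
Your proposal is correct and follows exactly the paper's own route: the paper derives this corollary in one line from Theorem~\ref{T:LAtKernel} (the $3^t+t$ kernel via RR\ref{R:D1}--RR\ref{R:D3}), the bound $k\leq t-1$ enforced by RR\ref{R:D1}, and the XP algorithm of Proposition~\ref{P:generalXP}. Your explicit bookkeeping of the exponent ($k+2\leq t+1$) and the absorption of the factor $k$ and the kernelization/vertex-cover costs into $n^{\mathcal{O}(1)}$ is precisely the accounting the paper leaves implicit.
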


\subsection{General Kernelization}
In this section, we provide a general reduction rule that works for most variants of \CR parameterized by the vertex cover number.
Let $U$ be a vertex cover of size $t$ in $G$ and $I$ be the independent set $V(G) \setminus U$. For each subset $S\subseteq U$, we define the following equivalence class: $\mathcal{C}_S = \{ v \in I \colon  N(v) = S\}$.
Given an instance $((G,k),t)$, we have the following reduction rule.

\begin{RR}[RR\ref{R:G}]\label{R:G}
If there is an equivalence class $\mathcal{C}_S$ such that $|\mathcal{C}_S| >k+1$, then keep only $k+1$ arbitrary vertices from $\mathcal{C}_S$ in $G$, and delete the rest.
\end{RR}

First, we present (informal) intuition why reduction rule RR22 is safe. Since the neighbourhood of each vertex in $\mathcal{C}_S$ is the same, all of these vertices are equivalent with respect to the movement rules in any of the variants discussed. We keep $k+1$ copies of such vertices because, on a robber move, there is at least one vertex that is not occupied by any cop. We refer to such a vertex as a \textit{free vertex}. Note that there might be multiple free vertices. On a robber player's turn, if $\R$ plans to move to a vertex in $\mathcal{C}_S$, it can move to a free vertex. Moreover, if a fast robber wants to use a vertex from $\mathcal{C}_S$ as an intermediate vertex, it can use a free vertex for this purpose as well.  We prove safeness for individual variants later in this section.

Moreover, we have the following lemma that we will use later. 
\begin{lemma}\label{L:generalSize}
Let $G$ be a graph with a vertex cover $U$ of size $t$. After an exhaustive application of reduction rules RR\ref{R:D1} and RR\ref{R:G}, the reduced graph has at most $t+ t\cdot2^t$ vertices.
\end{lemma}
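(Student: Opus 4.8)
The plan is to argue by a direct double-counting of the vertices of the reduced graph, splitting them into the vertex cover $U$ and the independent set $I = V(G)\setminus U$. The cover contributes exactly $|U| \le t$ vertices, so the entire task reduces to bounding $|I|$ after the two rules have been applied exhaustively.

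First I would use reduction rule RR\ref{R:D1} to control $k$. Once this rule can no longer be applied, we must have $k < t$ (otherwise the rule would have fired and answered positively, in which case the output is a trivial constant-size instance and the bound holds vacuously). Hence we may assume $k+1 \le t$; this is the only place where RR\ref{R:D1}, and implicitly the cop-number bound of Theorem~\ref{T:variationBound}, is needed.

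Next I would partition $I$ according to the equivalence classes $\mathcal{C}_S$. Since $U$ is a vertex cover, every vertex $v \in I$ has $N(v) \subseteq U$, so $v$ lies in exactly one class $\mathcal{C}_{N(v)}$; thus the classes $\{\mathcal{C}_S : S \subseteq U\}$ partition $I$. As there are only $2^t$ subsets of $U$, there are at most $2^t$ non-empty classes. After RR\ref{R:G} has been applied exhaustively, every class satisfies $|\mathcal{C}_S| \le k+1 \le t$. I would also note that deleting vertices of $I$ via RR\ref{R:G} does not disturb this structure: since $I$ is independent, removing a vertex of $I$ changes no other vertex's neighborhood, so both $U$ and the class partition of the surviving vertices are preserved, and the rule is well-behaved under repetition.

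Combining these two bounds yields $|I| = \sum_{S \subseteq U} |\mathcal{C}_S| \le 2^t \cdot t$, and therefore $|V(G)| = |U| + |I| \le t + t\cdot 2^t$, as claimed. The argument is essentially a counting exercise, so I do not anticipate a genuine obstacle; the only thing that requires care is making explicit that the chain $|\mathcal{C}_S| \le k+1 \le t$ relies on RR\ref{R:D1} having already guaranteed $k<t$, and that the empty-neighborhood class is harmless (it is either absent in a connected graph on more than one vertex, or already absorbed into the crude $2^t$ count).
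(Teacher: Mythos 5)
Your proof is correct and follows essentially the same route as the paper's: bound the number of equivalence classes $\mathcal{C}_S$ by $2^t$, use RR\ref{R:D1} to ensure $k<t$ so that RR\ref{R:G} leaves at most $k+1\le t$ vertices per class, and sum to get $|I|\le t\cdot 2^t$ plus $|U|\le t$. The extra observations you make (that deleting independent-set vertices preserves the class structure, and that the empty class is harmless) are fine but not needed beyond what the paper's own argument already implicitly uses.
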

\begin{proof}
There can be at most $2^t$ equivalence classes, and for each equivalence class, we keep at most $k+1$ vertices in $I$. Due to rule RR19, we can assume $k<t$. Thus, size of $I$ is at most $t\cdot 2^t$. The size of $G$ is, therefore, at most $|U| + |I| \leq t+ t\cdot2^t$.
\end{proof}

\subsubsection{\textsc{Generalized CnR}}
In this section, we establish that RR\ref{R:G} is safe for \textsc{Generalized CnR}. We have the following lemma to prove this claim. 

\begin{lemma}\label{L:Gen}
Let $G$ be a graph with a vertex cover $U$ of size $t$. Let $\mathcal{C}_S$ (for $S \subseteq U$) be an equivalence class such that $|\mathcal{C}_S| = \ell >k+1$. Moreover, let $H$ be a subgraph formed by deleting an arbitrary vertex $v$ of $\mathcal{C}_S$ from $G$. Then, $(G,\C_1,\ldots,\C_k, \R)$ is a Yes-instance if and only if $(H,\C_1,\ldots,\C_k, \R)$ is a Yes-instance.
\end{lemma}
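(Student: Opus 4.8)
The plan is to prove both directions by exhibiting a strategy-transfer argument, since the equivalence class $\mathcal{C}_S$ consists of vertices that are mutually interchangeable: they all have the same neighbourhood $S$ in $G$, hence identical movement possibilities for any agent in any of the generalized rules (speed, reach, laziness/activeness all depend only on the graph structure around a vertex, which is identical for all members of $\mathcal{C}_S$). Since $H$ is an induced subgraph of $G$ and $|\mathcal{C}_S \cap V(H)| = \ell - 1 \geq k+1$, both graphs still retain at least $k+1$ members of the class. The harder direction is showing that a winning cop strategy in $H$ lifts to $G$, because $G$ has the extra vertex $v$ that the robber may exploit; the reverse direction (a winning strategy in $G$ gives one in $H$) should be comparatively routine, as removing $v$ only restricts the robber.

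For the direction ``$H$ Yes-instance $\Rightarrow$ $G$ Yes-instance'', I would define an image map as in Lemmas~\ref{L:VCkernel} and~\ref{L:cliqueType}. The key idea is the \textit{free vertex} notion introduced in the discussion before the lemma: at any point during the game on $G$, since $|\mathcal{C}_S| = \ell \geq k+2 > k+1$ and there are only $k$ cops, at least two vertices of $\mathcal{C}_S$ are unoccupied by cops, so at least one such free vertex lies in $V(H)$. I would set $I(x) = x$ for every $x \neq v$, and whenever the robber occupies $v$, define its image $I(v)$ to be some free vertex of $\mathcal{C}_S$ inside $H$ (of which at least one always exists, by the counting above). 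The cops then simulate their winning $H$-strategy against this image. One must check that every legal robber move in $G$ corresponds to a legal move of the image in $H$: if the robber moves into, out of, or through $v$, the identical move is available through the chosen free vertex because $N_G(v) = S = N_H(w)$ for every $w \in \mathcal{C}_S \cap V(H)$; this is where the speed and reach parameters must be handled, but since each $w$ has exactly the same closed/open neighbourhood structure as $v$, any walk of length at most $s$ in $G$ using $v$ as an intermediate vertex can be rerouted through a free $w$.

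The main obstacle, and the step I expect to require the most care, is the \emph{endgame}: the cops' $H$-strategy captures the image $I(x)$, but if the robber is actually sitting on $v$ while its image $w$ gets captured, the robber itself is not yet caught. At that moment a cop occupies $w \in \mathcal{C}_S$, and I must argue the robber on $v$ cannot escape. The crucial point is that $v$ and $w$ have the same neighbourhood $S$, so any vertex the robber could move to from $v$ is also attacked/reachable from the cop on $w$; thus the cop on $w$ effectively guards $v$. Because the robber cannot leave $v$ without being captured, and because $k > 1$ ensures at least one other cop is free, that other cop can walk to $v$ in finitely many rounds to capture the robber. I would state this carefully to cover the reach parameter (a cop on $w$ with reach $\lambda$ covers exactly the same distance-$\lambda$ ball shape around $v$'s neighbourhood) and the activeness constraints (the guarding cop on $w$ may be forced to move, but it can oscillate between $w$ and a neighbour, or we invoke that there are $\geq k+1$ free-class vertices so a second cop can be stationed to maintain the guard); handling these movement-rule variants uniformly is the delicate part of the argument.

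For the reverse direction, I would note that any robber strategy in $H$ is also a valid robber strategy in $G$ restricted to $V(H)$, and symmetrically a cop winning strategy in $G$ induces one in $H$ simply by never using vertex $v$ (whenever the $G$-strategy sends a cop to $v$, redirect it to a free vertex of $\mathcal{C}_S \cap V(H)$, which exists by the same counting). Hence $H$ remains a Yes-instance, completing the equivalence. Since this mirrors the forward map with the roles reversed and no endgame subtlety arises, I would present it briefly and concentrate the exposition on the lifting direction above.
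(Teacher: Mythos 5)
Your overall architecture---image map, free vertices, cop-side simulation of the $H$-strategy, then an endgame where a cop parked on a twin $w$ of $v$ ``guards'' $v$---is essentially the paper's approach for the \emph{classical} game (Lemma~\ref{L:VCkernel}), but it does not survive the generality that Lemma~\ref{L:Gen} must cover, and the step you yourself flag as delicate is exactly where it breaks. First, the guarding claim is false once the robber has speed $s_R \geq 2$ (recall that \textsc{Generalized CnR} subsumes \CFR): $v$ and $w$ both lie in the independent set $V(G)\setminus U$, so $v \notin N[w]$, and the robber on $v$ may move along a path $v,u,z$ with $u \in S$ and $z \notin N[w]$; this path contains no cop-occupied vertex, so the move is legal and the robber escapes even though a cop sits on its twin. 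Thus ``capture the image, then guard $v$'' yields no capture at all in the fast-robber case. Second, your endgame needs a second cop to walk to $v$, i.e.\ $k>1$, an assumption the lemma does not make (unlike Lemmas~\ref{L:VCkernel}, \ref{L:cliqueType}, and~\ref{L:VCkernelVariants}, which state it explicitly). Third, your fix for active cops---oscillating between $w$ and a neighbour---fails: the moment the guard stands on $u \in S$, even a speed-$1$ robber may step from $v$ to a vertex of $S \setminus N[u]$ (the set $S$ need not be a clique), which no one attacks, and the guard is lost.

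The paper sidesteps all of this by never arguing from the cops' side: both implications are proved in contrapositive form, purely from the robber's perspective, so no capture/endgame analysis is ever needed. If the robber wins in $G$, it wins even when restricted to $V(H)$, by replacing every use of $v$ with a currently free class vertex (safety is a class-wide property, since all unoccupied members of $\mathcal{C}_S$ are equidistant from every other vertex; the substitution turns a path into a walk, and a walk contains a path of no greater length, so speed bounds are respected). Conversely, if the robber wins in $H$, it wins in $G$ by imaging the \emph{cops} rather than itself: a cop standing on $v$ is imagined at a free class vertex, and the key claim is that every vertex safe with respect to the imaged positions is safe with respect to the true ones ($F' \subseteq F$ in the paper's notation). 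This robber-centric formulation is what lets a single argument handle reach, arbitrary speeds, active/lazy/flexible constraints, and $k=1$ uniformly; your cop-centric lifting cannot be patched to do the same without, in effect, reproving the lemma this way.
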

\begin{proof}
Let $\mathcal{C}_S = \{v_1, \ldots, v_\ell\}$. Without loss of generality, let us assume that vertices $v_1,\ldots v_{\ell-1}$ belong to the graph $H$, and $v = v_\ell$. Since there are at most $k$ cops in the game and $\ell >k+1$, at least one vertex of $v_1,\ldots, v_{\ell-1}$ is not occupied by any cop. We denote this vertex by $x$ ($x$ is dependent on the position of the cops and may change during the course of the game). Moreover, here we modify the definition of a \textit{safe vertex} slightly: A vertex $y$ is \textit{safe} if it is at a distance at least $\lambda_i+1$ from $\C_i$, for $i\in [k]$. Since each vertex in $\mathcal{C}_S$ has the same neighborhood, observe that either each vertex in $\mathcal{C}_S$ not occupied by a cop is a safe vertex or none of the vertices in $\mathcal{C}_S$ is safe. Moreover, for each vertex $y\in V(G)\setminus \{v\}$, let $I(y) = y$ and $I(v) = x$. Note that for each vertex $u$, $I(u)$ is restricted to vertices of $V(H)$, $N(u)= N(I(u))$, and if $u$ is a safe vertex, then $I(u)$ is also a safe vertex. To ease the presentation, instead of saying that the cops/robber has a winning strategy in $(G,\C_1,\ldots, \C_k,\R)$ (or $(G,\C_1,\ldots, \C_k,\R)$), we will say that the cops/robber has a winning strategy in $G$ (or $H$). 

First, let $\R$ has a winning strategy $\mathcal{S}$ in $G$. To show that $\R$ has a winning strategy in $H$,  we will prove a slightly stronger statement that $\R$ has a winning strategy, say, $\mathcal{S}'$, in $G$ even if $\R$ is restricted to the vertices of $V(H)$ in $G$. We get $\mathcal{S}'$ from $\mathcal{S}$ as follows: If $\R$ has to use a vertex $y$ in $\mathcal{S}$ in some move during the game, it uses $I(y)$ instead.  We first show that $\R$ can safely enter the graph. Let $y$ be the vertex $\R$ enters in the strategy $\mathcal{S}$. Then, $\R$ enters at $I(y)$ in $\mathcal{S}'$. Since $y$ is a safe vertex (as $\mathcal{S}$ is a winning strategy for $\R$), $I(y)$ is also a safe vertex. Hence $\R$ can safely enter a vertex. Now, the only thing to argue is that if $\R$ can move safely from a vertex $y$ to a vertex $z$ in $G$, then it can safely move from vertex $I(y)$ to $I(z)$ in $G$. Let $\R$ moves from $y$ to $z$ using a path $P_1= (y=y_1,\ldots,y_r=z)$, where $r\in [s_R]$, during some move in $\mathcal{S}$. Notice that since $\mathcal{S}$ is a winning strategy, each vertex $y_i$ ($i\in [r]$) is a safe vertex, and hence, each vertex $I(y_i)$ is also a safe vertex. Moreover, since $N(y_i) = N(I(y_i))$, $W=(I(y_1), \ldots, I(y_r))$ is a walk with at most $r$ vertices between $I(y)$ and $I(z)$. (It might not be a path since vertex $x$ may repeat in this walk.) Since the existence of a walk between two vertices implies the existence of a path between these vertices using vertices from a subset of the walk vertices, we have an $I(y),I(z)$-path of length at most $r$ using (safe) vertices from $\{I(y_1),\ldots,I(y_r)\}$. Hence $\R$ can safely move from $I(y)$ to $I(z)$. Thus, $\mathcal{S}'$ is a winning strategy for $\R$ even when $\R$ is restricted to vertices of $V(H)$ in $G$.

In the reverse direction, let $\R$ has a winning strategy in $H$. Then, we show that $\R$ has a winning strategy in $G$ as well. Here, whenever a cop $\C_i$ moves to a vertex $y$, $\R$ assumes its image at the vertex $I(y)$. Observe that $I(y)$ is restricted to $V(H)$ in $G$. Let $y_1,\ldots,y_k$ be the vertices occupied by cops during some instance in the game. Let $F$ be the set of vertices in $V(H)$ that are safe during this turn. Moreover, let $F'$ be the set of the vertices in $V(H)$ that are safe if cops are occupying the vertices $I(y_1), \ldots, I(y_k)$. Then, we have the following claim.

\begin{claim}
$F'\subseteq F$.
\end{claim}\label{C:general}

\begin{proofofclaim}
Targetting contradiction, assume that $y\in F'$ but $y\notin F$. Then, there exists some $i\in [k]$ such that $d(y, y_i) \leq \lambda_i$ but $d(y,I(y_i))>\lambda_i$. If $y_i \neq v$, then this is not possible since for $y_i\neq v$, $I(y_i)=y_i$. Hence, we can assume that $y_i = v$ and $I(y_i) = x$. Since $N(v) = N(x)$, for each vertex $y$ in $V(G)\setminus\{v\}$ (and $y\in F' \subseteq V(H)$), $d(y,x) \leq d(y,v)$, that is, $d(y,I(y_i))\leq d(y,y_i)$, a contradiction.

We note that it might not be true that $F\subseteq F'$, as it might so happen that $F$ contains the vertex $x$, but $F'$ does not.
\end{proofofclaim}

Due to Claim~\ref{C:general}, it is sufficent to show that if $\R$ has a winning strategy in $H$ considering the image of cop $\C_i$ as a cop with the capabilities of $\C_i$, then $\R$ has a winning strategy in $G$. To this end, $\R$ can use its winning strategy from $H$ since images of the cops are restricted to $V(H)$. Thus, $\R$ has a winning strategy in $G$.

Finally, note that, in both directions of proofs, $\R$ moves in $H$ (respectively, in $G$) if and only if $\R$ moves  in $G$ (respectively, in $H$). Hence, if $\R$ is active/flexible in the original strategy, $\R$ is active/flexible in the designed strategy. This completes the proof.
\end{proof}

Observe that $\mathsf{vcn}+1$ cops always have a winning strategy in $G$. Therefore, we have the following theorem as a consequence of Lemma~\ref{L:Gen} and Lemma~\ref{L:generalSize}.

\GenKernel*



Theorem~\ref{T:general} directly implies the existence of the desired kernel for \ACR and \CFR from Theorem~\ref{Th:active}. The existence of the desired kernel for \SCR from Theorem~\ref{Th:active} follows from Lemma~\ref{L:generalSize} and the following lemma, which proves the safeness of RR\ref{R:G} for \SCR.

\begin{lemma}\label{L:ASCR}
Let $G$ be a graph with a vertex cover $U$ of size $t$. Let $\mathcal{C}_S$ (for $S \subseteq U$) be an equivalence class such that $|\mathcal{C}_S| = \ell >k+1$. For a subgraph $H$ formed by deleting $\ell - k - 1$ arbitrary vertices of $\mathcal{C}_S$ from $G$, $\mathsf{c}_{surround}(H) \leq k$ if and only if $\mathsf{c}_{surround}(G) \leq k$.
\end{lemma}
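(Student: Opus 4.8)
The plan is to mirror the image-based argument of Lemma~\ref{L:Gen} almost verbatim, changing only the capture condition from co-location to surrounding. Write $\mathcal{C}_S = \{v_1,\dots,v_\ell\}$ and assume without loss of generality that $H$ retains $v_1,\dots,v_{k+1}$ and deletes $v_{k+2},\dots,v_\ell$. Since at most $k$ cops are in play and $k+1$ copies survive in $H$, at every configuration at least one retained copy carries no cop; call it $x$ (it may change over time). Exactly as in Lemma~\ref{L:Gen}, set $I(y)=y$ for every $y\in V(H)$ and $I(v_j)=x$ for each deleted copy $v_j$ with $j>k+1$, so $I$ fixes $U$ and all surviving vertices and only remaps deleted copies to a free copy. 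Two facts drive everything. First, $N(I(y))=N(y)$ for all $y$, because every member of $\mathcal{C}_S$ has neighbourhood exactly $S$; hence a robber standing on $I(y)$ is surrounded if and only if one standing on $y$ would be. Second, the threshold $k+1$ is chosen precisely so that in both $G$ and $H$ every vertex $w\in S$ retains strictly more than $k$ neighbours inside $\mathcal{C}_S$ (namely $k+1$ in $H$ and $\ell$ in $G$); consequently no vertex of $S$ can ever be surrounded by $k$ cops in either graph. I will phrase the equivalence through robber evasion, which is clean here since surrounding capture, unlike co-location, needs no ``extra cop to close in'' and so imposes no $k>1$ hypothesis.

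For the direction that a robber evading in $G$ can also evade in $H$, I strengthen the claim to: the robber can evade in $G$ while confined to $V(H)$. It replays its $G$-strategy, moving to $I(y)$ whenever the strategy prescribes $y$. The move is valid because $N(I(y))=N(y)$ (the robber arrives from a common neighbour in $S$), the robber may legally end its turn there because $x$ is cop-free, and it is never surrounded at $I(y)$ since being surrounded at $I(y)$ would mean being surrounded at $y$, contradicting the winning $G$-strategy. For the reverse direction, a robber evading in $H$ imagines each real cop occupying a vertex $c$ to sit at $I(c)\in V(H)$ and replays its $H$-evasion strategy, keeping itself on $V(H)$. I then prove the surrounding analogue of Claim~\ref{C:general}: whenever the real cops surround the robber's vertex $y\in V(H)$, the imagined cops surround it too. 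Indeed, each $w\in N(y)$ carrying a real cop contributes an imagined cop at $I(w)$; if $w$ is a deleted copy then $y\in N(w)=S$, but then $y\in S$ is un-surroundable by the second fact above, so this case cannot occur, and otherwise $I(w)=w$. The ``robber not on a cop'' constraint transfers identically, since a real cop on $y\in V(H)$ forces an imagined cop on $I(y)=y$. Thus any real capture is an imagined capture, so the $H$-strategy keeps the robber safe in $G$ as well.

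The step I expect to be the main obstacle is precisely this verification that surrounding is never completed using a deleted copy: one must argue that a cop sitting on a vertex of $\mathcal{C}_S$ can only ever contribute to surrounding a robber located at some $w\in S$, and that such a $w$ is un-surroundable once more than $k$ copies are present — which is exactly what fixing the count at $k+1$ guarantees. Everything else, including legality of the robber's moves and the end-of-turn occupancy constraint, reduces to the identity $N(I(y))=N(y)$ and the permanent existence of a free copy. Combining the resulting equivalence with the vertex count of Lemma~\ref{L:generalSize} yields the $\mathsf{vcn}+\mathsf{vcn}\cdot 2^{\mathsf{vcn}}$ bound for \SCR in Theorem~\ref{Th:active}.
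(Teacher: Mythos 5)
Your direction B (an evading robber in $H$ keeps evading in $G$ by imagining deleted-copy cops inside $V(H)$) matches, in contrapositive form, the paper's first direction and is essentially sound, up to one small repair: the cop images should be sent to a \emph{fixed} retained copy (say $v_1$), not to the time-varying free copy. Otherwise, when a real cop steps onto the current free copy, every imagined cop parked there would have to ``teleport'' to a new free copy, which is not a legal move in $H$ because $\mathcal{C}_S$ is an independent set. The genuine gap is in direction A, and it is exactly the \SCR-specific mechanism you do not discuss: eviction. Suppose the virtual robber (running its $G$-evasion strategy) sits on a deleted copy $y$ while the real robber sits on the free copy $x=I(y)$. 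The real cops play in $H$ and never occupy $y$, so they may step onto $x$: the virtual robber is neither forced nor surrounded and its strategy may legitimately prescribe staying at $y$, yet the real robber must vacate $x$, and it cannot move to any other retained copy since $\mathcal{C}_S$ is independent. It is driven onto some vertex of $S$, the invariant ``real position $=I(\text{virtual position})$'' breaks, and your argument then gives no guarantee at all; concretely, if $|S|\le k$, the cops can later strand the real robber on a retained copy and occupy all of $S$, surrounding it, while the virtual robber remains perfectly safe in its game. The obstacle you flagged as the main one (surrounding never being completed through a deleted copy) is indeed handled by your fact (2), but it is not the crux; the crux is transferring forced moves, which is precisely the point the paper's proof treats explicitly.

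The paper resolves it on the cop side: when the $H$-strategy wants to evict the (image of the) robber from the free copy $y$ by a cop move $w\to y$ with $w\in N(y)=S$, the cop player in $G$ redirects that move to land on the robber's actual vertex instead, which is legal because all vertices of $\mathcal{C}_S$ have the same neighbourhood $S$; hence robber and image are always evicted simultaneously. To complete your direction A you need the robber-side analogue: when a real cop moves from $w\in S$ onto $x$, feed your virtual game the redirected move $w\to y$ (legal in $G$ for the same reason); the virtual robber is then evicted as well, its strategy names a virtual-cop-free $z\in S$, and the real robber can copy that move since real and virtual cop positions agree outside $\{x,y\}$; afterwards that cop's real/virtual positions are tracked through the pair $(x,y)$. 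Without this redirection (or an equivalent excursion-and-resynchronization argument protected by your fact (2)), direction A does not go through. A final, minor remark: both of your directions are phrased via evasion strategies, so you implicitly use determinacy of the game to convert ``cops have no winning strategy'' into ``the robber has a winning strategy''; the paper's proof of Lemma~\ref{L:ASCR} is written purely in terms of cop strategies and avoids this, although the paper itself argues via robber strategies in Lemma~\ref{L:Gen}, so this is a matter of bookkeeping rather than substance.
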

\begin{proof}
Let $\mathcal{C}_S = \{v_1, \ldots, v_\ell\}$. Without loss of generality, let us assume that the vertices $v_1,\ldots v_{k+1}$ belong to the graph $H$ and vertices $v_{k+2}, \ldots, v_\ell$ are deleted. We begin by noting that $\R$ cannot be surrounded at a vertex in $S$ in $G$ (since each vertex in $S$ has at least $k+1$ neighbours). Therefore, throughout the proof, we have the implicit assumption that when $\R$ is surrounded, it is not on a vertex in $S$.

Let $k$ cops have a winning strategy in $G$. Then, to surround $\R$, cops use this strategy with the following changes in $H$. Whenever a cop has to move to a vertex in $\{v_{k+2}, \ldots, v_\ell\}$, it moves to vertex $v_1$ instead. Since all vertices in $\mathcal{C}_S$ have the same neighbourhood, the next move of this cop can be the same as it was in (the winning strategy of) $G$. Note that using this strategy, the cops can surround the robber in $G$ if $\R$ is restricted to $V(H)$ in $G$, and also the moves of cops are restricted to $V(H)$ in $G$. Therefore, the cops can surround $\R$ using this strategy in $H$ as well.

Now, let $k$ cops have a winning strategy in $H$. We use this strategy to surround $\R$ in $G$, in the following manner. Since we have only $k$ cops, during any time in the gameplay, there is at least one vertex in $\{v_1, \ldots, v_{k+1}\}$ that is not occupied by any cop. Let us call this vertex a \textit{free vertex} (there might be multiple free vertices). Again we use the concept of the \textit{image of the robber}. 
For each vertex $x\in V(G)$, if $x\in V(H)$, then we define $I(x) = x$; else, if $x\in \{ v_{k+1}, \ldots v_\ell \}$, then we define $I(x) = y$, where $y$ is a free vertex at that instance. Whenever $\R$ moves to a vertex $x \in V(G)$, we say that the image of the robber moves to $I(x)$. Moreover, we remind that, in this game, although some cop and $\R$ can be at the same vertex, the robber cannot end its move at the same vertex as one of the cops. Cops use this capability to force $\R$ to move from a vertex. Therefore, we also have to argue that whenever cops force $\R$ to move, they force the image of the robber to move as well. To this end, observe that the image of the robber and the robber are on different vertices only if $\R$ is on some vertex $x\in \{ v_{k+1}, \ldots, v_\ell \}$ and the image of the robber is on a free vertex, say, $y$. Notice that if, in the strategy for $H$, $\R$ was occupying $y$ and the cop player wants to force $\R$ to move out of $y$, then it does so by moving a cop, say, $\C$, from a vertex $w\in N(y)$ to $y$. Cop player adapts this strategy in $G$ by moving $\C$ form $w$ to $x$ instead of $w$ to $y$. This move is possible because $N(x)= N(y)$. Thus, $\R$, as well as the image of $\R$, are forced to move as they would have been forced to move in the winning strategy of $k$ cops in $H$.

Hence, the image of $\R$ is restricted to $V(H)$ in $G$ and has to follow the rules of the movement of the robber. 
Thus, the cops will finally surround the image of $\R$ in  $G$. At this point, if $\R$ is on a vertex $ v \in  \{v_{k+2}, \ldots, v_\ell\}$, note that $I(\R)$ is on a vertex  $u \in \{v_1, \ldots, v_{k+1}\}$. Observe that here, if $I(\R)$ is surrounded, then there is a cop on each vertex in $S$, and thus, $\R$ is surrounded as well. If $\R$ was on a vertex in $V(H)\setminus S$ when surrounded, then $I(\R)$ and $\R$ are at the same vertex, and thus, $\R$ is surrounded as well.
\end{proof}

This finishes the proof of Theorem~\ref{Th:active}. The following corollary is a direct consequence of Theorem~\ref{T:variationBound}, Theorem~\ref{Th:active}, Theorem~\ref{T:general}, and Proposition~\ref{P:generalXP}.

\begin{corollary}\label{C:gen}
    \CFR, \ACR, \SCR, and \textsc{Generalized CnR} are \FPT prameterized by $\mathsf{vcn}$. Specifically, each of these variants is solvable in $(\mathsf{vcn} \cdot 2^\mathsf{vcn}+\mathsf{vcn})^{\mathsf{vcn}+1}\cdot n^{\mathcal{O}(1)}$ time.
\end{corollary}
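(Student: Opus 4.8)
The plan is to obtain the result by \emph{composing} three ingredients that are already in place, rather than by any new combinatorial argument: the cop-number upper bounds of Theorem~\ref{T:variationBound}, the exponential kernels of Theorems~\ref{T:general} and~\ref{Th:active}, and the uniform XP-time algorithm of Proposition~\ref{P:generalXP}. Accordingly, the entire task is to verify that these pieces fit together and that the running-time bookkeeping produces the claimed expression. Throughout, write $t=\mathsf{vcn}$ and assume a vertex cover $U$ of size $t$ is at hand; if it is not, I would first compute one in time $1.2738^{t}\cdot n^{\mathcal{O}(1)}$, a cost that is absorbed into the final bound.

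First I would dispose of the easy regime in $k$. By Theorem~\ref{T:variationBound} we have $\mathsf{c}_{active}(G),\mathsf{c}_{surround}(G),\mathsf{c}_s(G)\le t$, while $t+1$ cops always win in \textsc{Generalized CnR} (as observed in the proof of Theorem~\ref{T:general}); hence, as soon as $k$ meets the bound relevant to the variant at hand, I immediately report a Yes-instance. This step guarantees that in the remaining case $k$ is bounded linearly in $t$. Second, I would invoke the kernelization: Theorem~\ref{T:general} handles \textsc{Generalized CnR} and Theorem~\ref{Th:active} handles \CFR, \ACR, and \SCR, each producing in polynomial time an equivalent instance whose graph $G'$ satisfies $|V(G')|\le N:=t+t\cdot 2^{t}$ (the equivalence being exactly the safeness of Reduction Rule~RR\ref{R:G}, established in Lemmas~\ref{L:Gen} and~\ref{L:ASCR}, together with the size bound of Lemma~\ref{L:generalSize}). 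Third, I would run the algorithm of Proposition~\ref{P:generalXP} on $(G',k)$, which decides each of these variants in $\mathcal{O}\!\big(k\,|V(G')|^{k+2}\big)$ time.

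The remaining quantitative point is the exponent. Since reporting Yes at the threshold forces $k\le t$ (indeed $k\le t-1$ for \CFR, \ACR, and \SCR, whose bound is $t$), the exponent $k+2$ is governed by Theorem~\ref{T:variationBound} and is $O(t)$; tracking the threshold gives exponent at most $t+1$ for the tightest variants, so the XP step costs $\big(t\cdot 2^{t}+t\big)^{t+1}\cdot n^{\mathcal{O}(1)}$, with the leading factor $k$ and the polynomial cost of kernelization (and of computing $U$) folded into $n^{\mathcal{O}(1)}$. I do not expect a genuine obstacle in this corollary: all the difficulty resides in the prerequisites, namely the correctness of RR\ref{R:G} for each variant and the fact that Proposition~\ref{P:generalXP} applies uniformly. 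The one place warranting care is precisely this exponent bookkeeping—confirming that the cop-number bound caps $k+2$ so that $N^{k+2}$ does not exceed the stated $N^{t+1}$—since the slack between $N^{t+1}$ and $N^{t+2}$ is $2^{\Theta(t)}$ and cannot be hidden inside $n^{\mathcal{O}(1)}$.
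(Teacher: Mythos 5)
Your proposal is correct and takes essentially the same route as the paper, which derives this corollary as a direct consequence of Theorem~\ref{T:variationBound}, Theorem~\ref{Th:active}, Theorem~\ref{T:general}, and Proposition~\ref{P:generalXP} --- precisely the three ingredients (cop-number threshold, kernel of size $\mathsf{vcn}+\mathsf{vcn}\cdot 2^{\mathsf{vcn}}$, and the $\mathcal{O}(kn^{k+2})$ XP algorithm) that you compose in the same order. If anything, your bookkeeping is more scrupulous than the paper's: the subtlety you flag --- that \textsc{Generalized CnR}, whose winning threshold is $\mathsf{vcn}+1$ rather than $\mathsf{vcn}$, only yields the exponent bound $k+2\le \mathsf{vcn}+2$ --- is passed over silently in the paper, which asserts the $(\mathsf{vcn}\cdot 2^{\mathsf{vcn}}+\mathsf{vcn})^{\mathsf{vcn}+1}$ bound uniformly for all four variants.
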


\section{Polynomial Kernels for \CR}
In this section, we provide a linear kernel for \CR parameterized by the neighbourhood diversity ($\mathsf{nd}$) of the input graph. One of the key benefits of $\mathsf{nd}$ as a parameter is that it is computable in polynomial time~\cite{ndCompute}. More specifically, in polynomial time, we can compute a minimum partition of $V(G)$ into classes $V_1,\ldots, V_w$ such that each $V_i$ contains vertices of the same type. Hence, a linear kernel parameterized by $\mathsf{nd}$ can be very useful from an applicative perspective. 

Since for any two vertices $u,v\in V_i$, for $i\in [w]$, $N(v) \setminus \{u\} = N(v)\setminus \{ u\}$, we have that either each $V_i$ is an independent set ($N(v) = N(u)$ in this case) or each $V_i$ is a clique ($N[v] = N[u]$ in this case). Now, we use the following reduction rules. 
\begin{RR}[RR\ref{nd1}]\label{nd1}
    If $k\geq w$, then answer positively.
\end{RR}

We have the following lemma to prove that RR\ref{nd1} is safe.
\begin{lemma}\label{L:ndBound}
    For a graph $G$, $\mathsf{c}(G) \leq \mathsf{nd}$.
\end{lemma}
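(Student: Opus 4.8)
The plan is to show that the $\mathsf{nd}=w$ classes $V_1,\dots,V_w$ of the neighbourhood-diversity partition admit a placement of $w$ cops that \emph{dominates} $G$, and then to invoke the classical fact that cops sitting on a dominating set win in a single move. Concretely, I would place one cop $\C_i$ on an arbitrary vertex of each class $V_i$, using exactly $w$ cops in total, and argue that this set of $w$ vertices is a dominating set of $G$.

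Before the domination argument I would record two structural facts about the partition. The first is already noted in the text: since any $u,v\in V_i$ satisfy $N(v)\setminus\{u\}=N(u)\setminus\{v\}$, each class $V_i$ is either a clique or an independent set. The second is the complete-or-empty property between distinct classes: if $u\in V_i$ is adjacent to $v\in V_j$ with $i\neq j$, then every vertex of $V_i$ is adjacent to $v$. Indeed, for any $\C_i\in V_i$ we have $v\in N(u)\setminus\{\C_i\}=N(\C_i)\setminus\{u\}$, and since $v\neq u$ this gives $\C_i v\in E(G)$.

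With these in hand, the core step is to verify that $\{\C_1,\dots,\C_w\}$ dominates $G$. Take any vertex $v\in V_j$. If $\C_j=v$ it is trivially dominated. If $V_j$ is a clique and $\C_j\neq v$, then $\C_j v\in E(G)$, so $v$ is dominated by its own-class cop. The only remaining case is $V_j$ an independent set with $\C_j\neq v$; here I would use connectivity: assuming $|V(G)|\geq 2$, the vertex $v$ has some neighbour $u$, which (as $V_j$ is independent) lies in a different class $V_i$, and then the complete-or-empty property yields $\C_i v\in E(G)$, so $v$ is dominated by $\C_i$. Thus every vertex of $G$ is dominated.

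Finally I would close the argument from domination to capture. After the cops occupy the dominating set, the robber must be placed on some vertex $r$, which by the above is adjacent to (or occupied by) some cop $\C_i$; since the cops move first after placement, $\C_i$ moves onto $r$ and captures $\R$ immediately. Hence $\mathsf{c}(G)\leq w=\mathsf{nd}$, with the degenerate single-vertex graph handled separately (there $\mathsf{nd}=1=\mathsf{c}(G)$). The only delicate point is the independent-set case of the domination step, where connectivity and the complete-or-empty property must be combined correctly; every other part is immediate.
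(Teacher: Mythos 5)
Your proof is correct and follows essentially the same route as the paper: place one cop on an arbitrary vertex of each neighbourhood class, observe that (by connectedness) these $w$ vertices form a dominating set, and conclude that the cops capture the robber on their first move. The only difference is that you spell out the domination argument (clique versus independent classes, plus the complete-or-empty adjacency between classes) that the paper leaves as an unelaborated observation, which is a welcome but not substantively different addition.
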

\begin{proof}
    Let $S$ be a set of vertices such that $S$ contains exactly one vertex, say $v_i$, from each neighbourhood class $V_i$. Then, (since we assume $G$ to be connected) observe that $S$ is a dominating set of $G$. Hence, the cops have a trivial winning strategy by placing a cop on each vertex of $S$ (and $|S| \leq w$). Therefore, $\mathsf{c}(G) \leq \mathsf{nd}$.
\end{proof}

Next, if $k=1$, then we apply RR\ref{R:KV2} (the XP-algorithm for \CR from Proposition~\ref{P:XP}). Hence, we assume that $k\geq 2$. Next, we have the following reduction rule. 

\begin{RR}[RR\ref{nd2}]\label{nd2}
    For each neighbourhood class $V_i$, keep one arbitrary vertex and delete the rest.
\end{RR}

We have the following lemma to prove that RR\ref{nd2} is safe.

\begin{lemma}\label{L:nSafe}
    Let $V_i= \{v_1,\ldots,v_\ell\}$ be a neighbourhood class of $G$ having at least two vertices ($\ell \geq 2$). Consider the subgraph $H$ of $G$ induced by $V(G)\setminus \{v_\ell\}$. Then, for $k>1$, $G$ is $k$-copwin if and only if $H$ is $k$-copwin.
\end{lemma}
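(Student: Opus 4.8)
The plan is to exploit the dichotomy, recalled just before the statement, that every neighbourhood class is either an independent set or a clique, and then to reduce each of the two cases to a lemma already established. The first step is to recall that, by definition of a neighbourhood class, the vertices $v_1,\ldots,v_\ell \in V_i$ satisfy $N(v_a)\setminus\{v_b\} = N(v_b)\setminus\{v_a\}$ for all $a,b$; consequently either $N(v_1)=N(v_\ell)$, in which case $V_i$ is an independent set, or $N[v_1]=N[v_\ell]$, in which case $V_i$ is a clique. Since $V(H)=V(G)\setminus\{v_\ell\}$ and $\ell\geq 2$, the vertex $v_1$ survives in $H$ and can play the role of a ``representative'' of the deleted vertex $v_\ell$.

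In the independent-set case we have $N(v_\ell)=N(v_1)$, so in particular $N(v_\ell)\subseteq N(v_1)$. Taking $u:=v_\ell$ and $v:=v_1$, the hypotheses of Lemma~\ref{L:VCkernel} are met, and since $k>1$ that lemma gives directly that $G$ is $k$-copwin if and only if $H$ is $k$-copwin. In the clique case $v_1$ and $v_\ell$ are adjacent and $N[v_\ell]=N[v_1]$, hence $N[v_\ell]\subseteq N[v_1]$. This is precisely the closed-neighbourhood situation handled in Lemma~\ref{l:cliquesize} (with $u:=v_\ell$, $v:=v_1$ lying in a common clique), whose retract/image argument yields $\mathsf{c}(G)=\mathsf{c}(G\setminus\{v_\ell\})$ and therefore the claimed equivalence for every $k$, in particular for $k>1$. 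Combining the two cases establishes the lemma.

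The main point requiring care, and the reason the clique case is not subsumed by the independent case, is that $N[v_\ell]\subseteq N[v_1]$ does \emph{not} imply $N(v_\ell)\subseteq N(v_1)$: in a clique we have $v_1\in N(v_\ell)$ while $v_1\notin N(v_1)$, so Lemma~\ref{L:VCkernel} cannot be invoked verbatim and one genuinely needs the closed-neighbourhood variant of Lemma~\ref{l:cliquesize}. Beyond ensuring the correct lemma is applied in each case, there is no new strategic argument to construct: the substantive work — having one cop occupy the representative $v_1$ to pin the robber should it hide on $v_\ell$, while a second cop (available because $k>1$) advances to capture it — is already encapsulated in Lemmas~\ref{L:VCkernel} and~\ref{l:cliquesize}. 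I therefore expect the only delicate step to be the verification that the dichotomy is exhaustive and that the appropriate containment ($N(v_\ell)\subseteq N(v_1)$, respectively $N[v_\ell]\subseteq N[v_1]$) holds in each branch.
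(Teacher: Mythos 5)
Your proposal is correct, and its overall skeleton (the independent-set/clique dichotomy, with Lemma~\ref{L:VCkernel} disposing of the independent-set case) coincides with the paper's proof. The divergence is in the clique case: the paper does not cite Lemma~\ref{l:cliquesize} but instead re-runs the image argument of Lemma~\ref{L:VCkernel} inline, setting $I(v_\ell)=v_1$ and observing that the one extra move available in $G$ but not in $H$ --- the robber stepping from $v_\ell$ to $v_1$ --- is simulated by letting the image stand still; you instead delegate this case wholesale to Lemma~\ref{l:cliquesize}. Your delegation is legitimate: taking $U=V(G)\setminus V_i$ as the cluster vertex deletion set, $V_i$ is the single clique of $G[V(G)\setminus U]$ and $N[v_\ell]=N[v_1]$ gives $N_U(v_\ell)\subseteq N_U(v_1)$, so the hypotheses of Lemma~\ref{l:cliquesize} are met and its conclusion $\mathsf{c}(G)=\mathsf{c}(G\setminus\{v_\ell\})$ yields the equivalence for every $k$. (To be fully rigorous you should state this choice of $U$ explicitly, since Lemma~\ref{l:cliquesize} is phrased relative to a cluster-deletion structure rather than for arbitrary pairs with $N[u]\subseteq N[v]$.) Since the paper's own proof of Lemma~\ref{l:cliquesize} is itself an adaptation of Lemma~\ref{L:VCkernel}, the two routes share the same mathematical core (the retract/dominated-vertex argument); what yours buys is modularity and a slightly stronger statement in the clique case, namely that there the hypothesis $k>1$ is superfluous, because the cop sitting on $v_1$ is adjacent to $v_\ell$ and can capture the robber unaided. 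Your cautionary remark that $N[v_\ell]\subseteq N[v_1]$ does not imply $N(v_\ell)\subseteq N(v_1)$, so Lemma~\ref{L:VCkernel} cannot be invoked verbatim in the clique case, is exactly the right point of care and is precisely why the paper (and you) must treat the two cases differently.
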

\begin{proof}
    We have the following two cases depending on whether $V_i$ is an independent set or a clique.
    \begin{enumerate}
        \item $V_i$ is an independent set: Note that, in this case, $N(v_\ell) = N(v_1)$. Therefore, due to Lemma~\ref{L:VCkernel}, we have that $G$ is $k$-copwin if and only if $H$ is $k$-copwin.

\item $V_i$ is a clique: Note that in this case, $N[v_\ell] = N[v_1]$. The proof, in this case, (specifically the forward direction) follows from arguments presented in the proof of Lemma~\ref{L:VCkernel}. For the reverse direction, here for $x\neq v_\ell$, $I(x) = x$ and $I(v_\ell) = v_1$. Now, note that every possible move of $\R$ in $G$ can be mapped to a valid move of the \textit{image of the robber} in $H$, just like in the proof of Lemma~\ref{L:VCkernel}. The only difference here is that when $\R$ is at $v_\ell$ (image of $\R$ is at $v_1$), $\R$ can move to $v_1$ as well (along with vertices in $N(v_1)$). Notice that this move can be translated to the movement of the image of $\R$ in $H$ where the image of $\R$ chooses to stay on the same vertex in its move. Hence, the cops will first capture the image of $\R$ in $H$, and then capture $\R$ in $G$. 
    \end{enumerate}
This completes the proof of this lemma.    
\end{proof}

Since we keep only one vertex of each type and there are at most $w$ types, we have the following theorem.

\ND*
We have the following corollary as a consequence of Theorem~\ref{T:nd}.
\begin{corollary}
    \CR is \FPT parameterized by $\mathsf{nd}$. Specifically, it is solvable in $\mathsf{nd}^\mathsf{nd}\cdot n^{\mathcal{O}(1)}$ time.
\end{corollary}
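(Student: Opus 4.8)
The plan is to derive this corollary purely by pipelining the linear kernel of Theorem~\ref{T:nd} into the generic XP-algorithm of Proposition~\ref{P:XP}; no new game-theoretic argument is needed. First I would invoke the polynomial-time procedure of~\cite{ndCompute} to compute a minimum partition of $V(G)$ into the $w = \mathsf{nd}$ neighbourhood classes, and then run the kernelization underlying Theorem~\ref{T:nd}: apply RR\ref{nd1}, resolve the case $k=1$ via RR\ref{R:KV2}, and apply RR\ref{nd2}. All of these steps run in time polynomial in $n$, and by Theorem~\ref{T:nd} they produce an equivalent instance $(G',k)$ with $|V(G')| \le \mathsf{nd}$.

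The key bookkeeping step is to observe that after kernelization the number of cops is itself bounded by the parameter: since RR\ref{nd1} did not fire, we have $k < w = \mathsf{nd}$ (and $k \ge 2$, as $k=1$ was already handled by RR\ref{R:KV2}). Hence it remains only to decide whether the kernel $G'$ on at most $\mathsf{nd}$ vertices is $k$-copwin for a value $k \le \mathsf{nd}$, which is exactly the regime in which Proposition~\ref{P:XP} is efficient. Applying that algorithm to $G'$ takes $\mathcal{O}\big(k\,|V(G')|^{k+2}\big)$ time, and substituting $|V(G')| \le \mathsf{nd}$ and $k \le \mathsf{nd}$ bounds this by $\mathcal{O}(\mathsf{nd}\cdot \mathsf{nd}^{\,\mathsf{nd}+2}) = \mathsf{nd}^{\,\mathsf{nd}+\mathcal{O}(1)}$. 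Writing $\mathsf{nd}^{\,\mathsf{nd}+\mathcal{O}(1)} = \mathsf{nd}^{\,\mathsf{nd}}\cdot \mathsf{nd}^{\mathcal{O}(1)}$ and using $\mathsf{nd} \le n$ to absorb the polynomial factor $\mathsf{nd}^{\mathcal{O}(1)}$ into $n^{\mathcal{O}(1)}$ (together with the polynomial cost of the kernelization itself) yields the claimed bound $\mathsf{nd}^{\,\mathsf{nd}}\cdot n^{\mathcal{O}(1)}$. Membership in \FPT follows a fortiori, since a kernel whose size is bounded by a function of the parameter already certifies fixed-parameter tractability for this decidable problem.

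There is essentially no serious obstacle here; the corollary is a mechanical consequence of the two already-established ingredients. The only point requiring care — and the step I would double-check — is the bound $k \le \mathsf{nd}$ on the surviving kernel, since it is what keeps the exponent $k+2$ of Proposition~\ref{P:XP} bounded by a function of the parameter alone; without RR\ref{nd1} the running time would not collapse to the stated form. I would also record explicitly the inequality $\mathsf{nd} \le n$ (valid because the all-singletons partition is always a valid neighbourhood-diversity partition), as it is precisely what licenses folding the residual $\mathsf{nd}^{\mathcal{O}(1)}$ factor into $n^{\mathcal{O}(1)}$.
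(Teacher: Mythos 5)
Your proposal is correct and follows essentially the same route as the paper: the paper also obtains this corollary by pipelining the kernel of Theorem~\ref{T:nd} (with $k < \mathsf{nd}$ guaranteed by RR\ref{nd1} and the case $k=1$ handled separately) into the XP-algorithm of Proposition~\ref{P:XP}, yielding $\mathcal{O}(\mathsf{nd}\cdot \mathsf{nd}^{\mathsf{nd}+2})\cdot n^{\mathcal{O}(1)} = \mathsf{nd}^{\mathsf{nd}}\cdot n^{\mathcal{O}(1)}$. Your explicit bookkeeping of $k \le \mathsf{nd}$ and of absorbing the residual $\mathsf{nd}^{\mathcal{O}(1)}$ factor into $n^{\mathcal{O}(1)}$ is precisely the arithmetic the paper leaves implicit.
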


Moreover, it is not difficult to see that this kernelization can be extended to \LCR and \CAR using an extension of Lemma~\ref{L:VCkernelVariants}, giving us a kernel with at most $\mathsf{nd}$ vertices. Moreover, using a reduction rule similar to RR\ref{R:G} where we keep $k$ vertices of each type, we can have a kernel with at most $k\cdot \mathsf{nd}$ vertices for \CFR and \ACR.
We have the following lemma, for which we provide a proof outline.

\begin{lemma}\label{L:ND2}
    Let $V_i = \{v_1,\ldots,v_\ell\}$ is a neighbourhood class of $G$ containing at least $k+2$ vertices $(\ell \geq k+2)$. Consider the subgraph of $G$ induced by $V(G)\setminus \{v_\ell\}$. Then, for $k>1$, $s\geq 1$, and $x\in \{active, s\}$, $\mathsf{c}_x(G) \leq k$  if and only if $\mathsf{c}_x(H) \leq k$.
\end{lemma}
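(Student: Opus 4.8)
The plan is to reuse, essentially verbatim, the free-vertex and image-of-the-robber machinery developed for \textsc{Generalized CnR} in the proof of Lemma~\ref{L:Gen}, and to argue throughout in terms of \emph{robber} strategies, so that the equivalence $\mathsf{c}_x(G)\le k \Leftrightarrow \mathsf{c}_x(H)\le k$ is established in its contrapositive form: $\R$ wins in $G$ if and only if $\R$ wins in $H$. The first observation I would make is that both relevant games are special cases of \textsc{Generalized CnR}: \ACR is the case in which every cop is active with speed $1$, and \CFR is the case of flexible cops with robber speed $s$. Consequently, when the class $V_i$ is an \emph{independent} set, it coincides exactly with an equivalence class $\mathcal{C}_S$ (with $S=N(v_\ell)$) of Lemma~\ref{L:Gen}, and since $\ell\ge k+2$ deleting the single vertex $v_\ell$ still leaves $\ell-1\ge k+1$ twins, the argument of Lemma~\ref{L:Gen} applies unchanged. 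The only genuinely new situation is therefore the case in which $V_i$ is a \emph{clique}.

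Setting up, I would let $x^\ast$ denote a \emph{free vertex}, that is, a vertex of $\{v_1,\dots,v_{\ell-1}\}$ not currently occupied by a cop; since there are only $k$ cops and $\ell-1\ge k+1$ candidate vertices, $x^\ast$ always exists (though it may change as the cops move). I then define the image map by $I(y)=y$ for $y\neq v_\ell$ and $I(v_\ell)=x^\ast$, so that $I(y)$ always lies in $V(H)$. The crux of the clique case is that wherever the independent-class argument invokes the open-neighbourhood identity $N(v_\ell)=N(x^\ast)$, one now substitutes the closed-neighbourhood identity $N[v_\ell]=N[x^\ast]$, valid because all vertices of $V_i$ are mutual twins forming a clique; in particular $v_\ell$ and $x^\ast$ are themselves adjacent, and any robber step between two vertices of $V_i$ lifts to an adjacency-preserving image step inside $V_i$, exactly as handled in the clique case of Lemma~\ref{L:nSafe}.

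For the direction ``$\R$ wins in $G$ $\Rightarrow$ $\R$ wins in $H$'', I would prove the stronger claim that $\R$ wins in $G$ while confined to $V(H)$, by replacing every intended vertex $y$ of its $G$-strategy with $I(y)$. Safety transfers because $N[v_\ell]=N[x^\ast]$ forces $v_\ell$ to be safe exactly when $x^\ast$ is, and a fast robber move along a path $(y_1,\dots,y_r)$ with $r\le s$ maps to the walk $(I(y_1),\dots,I(y_r))$, which (possibly repeating $x^\ast$) contains an $I(y_1),I(y_r)$-path of length at most $r$ through safe vertices --- precisely the walk-to-path compression of Lemma~\ref{L:Gen}. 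For the reverse direction ``$\R$ wins in $H$ $\Rightarrow$ $\R$ wins in $G$'', $\R$ plays its $H$-strategy against the \emph{images} $I(y_1),\dots,I(y_k)$ of the cop positions; here I would re-establish the containment $F'\subseteq F$ of Claim~\ref{C:general}, the sole modification being that the inequality $d(y,x^\ast)\le d(y,v_\ell)$ is now justified by $N[x^\ast]=N[v_\ell]$ rather than by $N(x^\ast)=N(v_\ell)$. Finally, since in both directions $\R$ takes a (possibly trivial) step in $G$ exactly when it does in $H$, its active/flexible status is preserved, and the active-cop constraint is inherited directly from the \textsc{Generalized CnR} framework.

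The step I expect to be the main obstacle is the clique case of the reverse direction: unlike the independent classes of Lemma~\ref{L:Gen}, a clique class allows the robber and its image to sit on adjacent or coinciding vertices of $V_i$, and the free vertex $x^\ast$ shadowing $v_\ell$ shifts as the cops move, so I must verify carefully that an intra-$V_i$ robber move is never forced onto an illegal image move and that the $F'\subseteq F$ comparison survives when $x^\ast$ is itself one of the cop-adjacent vertices. I expect this to go through via $N[v_\ell]=N[x^\ast]$, mirroring the clique analysis in Lemma~\ref{L:nSafe}, but it is exactly the point at which the argument departs from Lemma~\ref{L:Gen} and therefore merits the most scrutiny.
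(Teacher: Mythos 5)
Your proposal follows the paper's proof skeleton: the same case split on whether $V_i$ is an independent set or a clique, the same dispatch of the independent case to Lemma~\ref{L:Gen}, and an image-of-the-agent argument for the clique case. It differs in two respects: you argue with robber strategies throughout (the contrapositive form, which additionally relies on determinacy of these games---something the paper's cop-side argument never needs), and in the clique case you shadow $v_\ell$ by a moving free vertex $x^\ast$, whereas the paper fixes $I(v_\ell)=v_1$ and uses the adjacency $v_1v_\ell\in E(G)$ to close out the endgame once the image is caught (for \CFR it even retroactively replaces a cop's final move to $v_1$ by a move to $v_\ell$). For $x=s$ (\CFR, where all agents are flexible) your free-vertex variant does go through, and is in fact the natural choice in the robber-side framing, since a fixed image $v_1$ could be occupied by a cop that does not attack $v_\ell$ in that game.

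For $x=active$, however, there is a genuine gap, located exactly where you flagged it but not repaired by what you propose. Your claim that ``$\R$ takes a (possibly trivial) step in $G$ exactly when it does in $H$, so its active/flexible status is preserved'' is false in the clique case: if the active robber's $G$-strategy moves it from $v_\ell$ to the vertex currently serving as $x^\ast$, its image must stand still, which is an illegal move in the \ACR game on $H$; the strategy you construct is therefore not a valid active-robber strategy. The identity $N[v_\ell]=N[x^\ast]$ does not repair this, and neither does mirroring Lemma~\ref{L:nSafe}: there the image is explicitly allowed to ``choose to stay'', which is legitimate only for a flexible agent. The same defect appears, mirrored, in your reverse direction: an active cop moving from $v_\ell$ onto the current free vertex forces its image to stand still, so the image trajectory is not a legal active-cop play in $H$, and Claim~\ref{C:general}---which only compares attack footprints at a single instant---cannot substitute for that legality, since the robber's $H$-strategy is guaranteed only against legal play. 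The repair is the clique analogue of the paper's explicit contingency ``if $\C_i$ wants to move from $v_1$ to $v_\ell$, it moves to $v_2$ instead'': whenever an image would be forced to stay, reroute it to another unoccupied vertex of the clique, which exists because $\ell\ge k+2$ (in the forward direction all clique vertices of $H$ are then cop-free by safety of $v_\ell$; in the reverse direction the moving cop occupies a vertex outside the images' reach), and the reroute is harmless because all vertices of $V_i$ share the same closed neighbourhood. Without this step the lemma is not established for \ACR.
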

\begin{proofsketch}
    Similar to the proof of Lemma~\ref{L:nSafe}, we have the following two cases depending on whether $V_i$ is an independent set or a clique.
    \begin{enumerate}
        \item $V_i$ is an independent set: Proof of this case follows from the proof of Lemma~\ref{L:Gen}.

        \item $V_i$ is a clique: Here, for each $v_j \in V_i$, $N[v_j] = N[v_\ell]$. For each vertex $x \neq v_\ell$, let $I(x) = x$ and $I(v_\ell) = v_1$.  
        
        First, let $\mathsf{c}_x(G) \leq k$. Then, we use the strategy of the cops from $G$ to capture $\R$ in $H$ with the only change that whenever a cop, say, $\C_i$, wants to move to a vertex $x$ in $G$, it moves to  $I(x)$ in $H$ instead with the only contingency that if $\C_i$ wants to move from $v_1$ to $v_\ell$, then it moves to $v_2$ (so that if the cops are active, then this is indeed a valid move of $\C_i$ in $H$). Observe that the cops can capture $\R$ in $G$ using this strategy even when the cops are restricted to the vertices of $H$. Hence, the cops can capture $\R$ using this strategy in $H$. 

        In the reverse direction, let $\mathsf{c}_x(H) \leq k$. Note that if $k$ active cops have a winning strategy against a flexible robber in $G$, then $k$ active cops have a winning strategy against an active robber in $G$ as well. Hence, for the ease of arguments, we show that $k$ active cops have a winning strategy in $G$ even if $\R$ is flexible to show that $\mathsf{c}_{active} \leq k$. The cops assume that the \textit{image of} $\R$ is occupying the vertex $I(x)$ when $\R$ is occupying the vertex $x$.  Thus, we have an image of $\R$ moving in $H$ with the same capabilities as $\R$. The cops will capture the image of $\R$ using their winning strategy from $H$. Notice that once the image of $\R$ is captured, if $\R$ is at a vertex $x \neq v_\ell$, then $\R$ is captured as well. Otherwise, $\R$ is at $v_\ell$ and there is some cop, say, $\C_1$, at  $v_1$. In the case of \ACR, $\R$ will be captured in the next move of cops (since $v_1v_\ell \in E(G)$). In the case of \CFR, if this is a cop move (that is, the image of $\R$ is captured on a robber move), then $\C_1$ will capture $\R$ in the next move. Otherwise, in the previous move of the cops, $\C_1$ moved to $v_1$ while $\R$ was at $v_\ell$. In this case, since $N[v_1] = N[v_\ell]$, $\C_1$ could have moved to $v_\ell$ to capture $\R$ itself.  Hence, $\mathsf{c}_x(G) \leq k$.   
    \end{enumerate}
    This completes the proof.
\end{proofsketch}

Since $\mathsf{c}(G) \leq \mathsf{nd}$ for all of these variants (as there is a dominating set of size $\mathsf{nd}$ in $G$), we have the following result as a consequence of Lemma~\ref{L:ND2} (and arguments presented above).

\begin{theorem}
    \LCR and \CAR parameterized by $\mathsf{nd}$ admit a kernel with at most $\mathsf{nd}$ vertices. Moreover, \CFR and \ACR parameterized by $\mathsf{nd}$ admit a kernel with at most $\mathsf{nd}^{2}$ vertices.
\end{theorem}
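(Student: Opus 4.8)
The plan is to mirror, for each variant, the two-rule kernelization already developed for \CR (namely RR\ref{nd1} and RR\ref{nd2}), substituting the appropriate safeness lemma and re-deriving the size bound. In both cases the first rule discards the easy high-parameter instances and the $k=1$ instance, and the second rule collapses each neighbourhood class; the only variant-specific content is the safeness of the collapsing rule.

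For \LCR and \CAR I would first install the rule ``if $k \geq \mathsf{nd}$, answer positively''. Its safeness needs $\mathsf{c}_{lazy}(G) \leq \mathsf{nd}$ and $\mathsf{c}_{attack}(G) \leq \mathsf{nd}$, which follow from the dominating-set argument underlying Lemma~\ref{L:ndBound}: placing one cop on a representative of each neighbourhood class dominates $G$, and a capture requires only a single cop to move on the cops' turn, immediately after the robber picks its start vertex and before the robber can move or attack. Hence the strategy transfers verbatim to lazy and attacking cops. I would dispose of the case $k=1$ via the XP-algorithm of Proposition~\ref{P:generalXP}, and then apply the rule ``keep one arbitrary vertex of each neighbourhood class''. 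For this I would establish the closed-neighbourhood extension of Lemma~\ref{L:VCkernelVariants}, splitting exactly as in Lemma~\ref{L:nSafe} into the case where a class is independent (so $N(u)=N(v)$ and Lemma~\ref{L:VCkernelVariants} applies directly) and the case where a class is a clique (so $N[u]=N[v]$), handling the latter by the image-of-the-robber technique with $I(v_\ell)=v_1$. After exhaustive application at most one vertex per class survives, so the kernel has at most $\mathsf{nd}$ vertices.

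For \CFR and \ACR I would again start with ``if $k \geq \mathsf{nd}$, answer positively'' (now using $\mathsf{c}_{active}(G) \leq \mathsf{nd}$ and $\mathsf{c}_s(G) \leq \mathsf{nd}$ from the same dominating-set bound) and treat $k=1$ by Proposition~\ref{P:generalXP}, so that afterward $k \leq \mathsf{nd}-1$. The main reduction rule is ``while some neighbourhood class has at least $k+2$ vertices, delete one of its vertices'', whose safeness is precisely Lemma~\ref{L:ND2}. Once this rule is no longer applicable each class retains at most $k+1$ vertices; since there are at most $\mathsf{nd}$ classes and $k+1 \leq \mathsf{nd}$, the reduced graph has at most $(k+1)\cdot\mathsf{nd} \leq \mathsf{nd}^2$ vertices, as claimed.

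The routine part is the bookkeeping of these size bounds; the delicate part is the clique-class (closed-neighbourhood) extension of Lemma~\ref{L:VCkernelVariants} for the attacking robber. There the blocking cop $\C_1$ on $v_1$ must simultaneously confine the robber hidden at $v_\ell$ and avoid elimination, so I would argue, as in the proof of Lemma~\ref{L:VCkernelVariants}, that a second cop can approach a common neighbour and that any attack by the robber on this second cop is answered by $\C_1$ capturing on the following move, using $v_1 v_\ell \in E(G)$. Checking that this confinement is compatible with the lazy constraint (only one cop moving per turn) is the remaining point requiring care, but it reduces to the same observation that confinement costs no cop movement until the final capture.
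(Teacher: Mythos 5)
Your overall architecture matches the paper's route exactly: the threshold rule via the dominating-set bound, disposal of $k=1$ by Proposition~\ref{P:generalXP}, one vertex per class for \LCR and \CAR via a closed-neighbourhood extension of Lemma~\ref{L:VCkernelVariants}, and $k+1$ vertices per class for \CFR and \ACR via Lemma~\ref{L:ND2} with the bound $(k+1)\cdot\mathsf{nd}\leq\mathsf{nd}^2$. The parts concerning \LCR, \ACR, and \CFR are sound (there the robber cannot attack, so confinement by a stationary cop works). The gap is in the one case you yourself flag as delicate: the clique class for \CAR. There, a blocking cop $\C_1$ sitting on $v_1$ cannot ``simultaneously confine the robber at $v_\ell$ and avoid elimination'': since $V_i$ is a clique, $v_1v_\ell\in E(G)$, so the moment the image of $\R$ is captured on $v_1$ it becomes the robber's turn, and $\R$ can attack the lone cop $\C_1$ and eliminate it. This is not a fringe behaviour but the robber's best reply; afterwards $\R$ stands on $v_1$ with only $k-1$ cops remaining, and the simulation of the $H$-strategy, which presupposed $k$ cops, cannot be restarted. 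Your proposed mechanism---that an attack on the approaching cop $\C_2$ is punished by $\C_1$---covers only attacks on $\C_2$ and never explains how $\C_1$ itself survives. Note the contrast with the open-neighbourhood setting of Lemma~\ref{L:VCkernelVariants}: there $N(u)\subseteq N(v)$ forces $uv\notin E(G)$, so the blocking cop is out of the robber's reach; you import that reasoning into a setting where its crucial precondition fails, and indeed the very edge $v_1v_\ell$ you invoke for the capture is the edge that exposes $\C_1$.

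The theorem is still true, and the repair is the same trick the paper spells out for \CFR inside the proof sketch of Lemma~\ref{L:ND2}: never actually capture the image on $v_1$. If the $H$-strategy directs a cop to move onto the image's vertex $v_1$ while the real robber sits on $v_\ell$, that cop currently occupies some $w\in N(v_1)$ with $w\neq v_\ell$ (a cop on $v_\ell$ would mean $\R$ is already captured), and $N[v_1]=N[v_\ell]$ gives $w\in N(v_\ell)$; so the cop moves onto $v_\ell$ and captures the real robber outright, ending the game before $\R$ ever gets a turn adjacent to an exposed lone cop. Combined with two routine observations---any attack by $\R$ on a cop at a vertex $w\neq v_1$ translates to a legal attack of the image in $H$, so the winning $H$-strategy already accounts for it, and a robber who moves onto $v_\ell$ while a cop occupies $v_1$ is captured on the next cop move---this closes the \CAR clique case. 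With that substitution your proof becomes correct and coincides with the paper's argument.
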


Finally, we remark that this technique of kernelization will not work directly for \SCR. For example, consider a complete graph on $n$ vertices, for which $\mathsf{nd} = 1$ (all the vertices have the same type) and $\mathsf{c}_{surround} = n$, and if we remove any vertex from this clique, $\mathsf{c}_{surround}$ decreases. Moreover, as evident from our example of complete graphs, $\mathsf{c}_{surround}$ cannot be bounded by any computable function that depends only on $\mathsf{nd}$.

\section{Incompressibility}\label{S:NPoly}
\subsection{Incompressibility of \CR}
In this section, we show that it is unlikely that the \CR  problem parameterized by $\mathsf{vcn}$ admits a polynomial compression. For this purpose, we first define the following problem. In \RBDS, we are given a bipartite graph $G$ with a vertex bipartition $V(G) = T \cup N$ and a non-negative integer $k$. A set of vertices $N'\subseteq N$ is said to be an \textit{RBDS} if each vertex in $T$ has a neighbour in $N'$. The aim of \RBDS is to decide whether there exists an \textit{RBDS} of size at most $k$ in $G$. Accordingly, we have the following decision version of \RBDS. 


\Pb{\RBDS}{A bipartite graph $G$ with vertex bipartition $V(G) = T \cup N$, and a non-negative integer $k$.}{Question}{ Does $G$ has an \textit{RBDS} of size at most $k$?}

Dom, Lokshtanov,  and Saurabh~\cite{RBDSIncompressibility} proved that it is unlikely for \RBDS parameterized by $|T|+k$ to admit a polynomial compression. More precisely, they proved the following result.



\begin{proposition}[\cite{RBDSIncompressibility}]\label{R:RBDS}
\RBDS parameterized by $|T|+k$ does not admit a polynomial compression, unless \NPoly.
\end{proposition}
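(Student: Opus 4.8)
The plan is to prove the lower bound through the standard \emph{cross-composition} machinery of Bodlaender, Jansen, and Kratsch: if some $\mathsf{NP}$-hard language $L$ cross-composes into a parameterized problem $Q$, then $Q$ admits no polynomial compression unless \NPoly. I would instantiate $Q = \RBDS$ parameterized by $|T|+k$ and take as source the \textsc{Colored Red-Blue Dominating Set} problem, where $N$ is partitioned into $k$ colour classes and one must pick exactly one blue vertex of each colour so that together they dominate $T$; this colored variant is readily seen to be $\mathsf{NP}$-hard. The polynomial equivalence relation would group well-formed instances by the pair $(|T|,k)$, so that the $t$ instances to be composed all share a red side of size $p$ (relabelled onto a common vertex set $\{\tau_1,\dots,\tau_p\}$) and the same number of colours $k$; after padding with dummy no-instances I may assume $t=2^{s}$ with $s=O(\log t)$. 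To land in \emph{plain} \RBDS I would simulate the colour constraint by adding, for each colour class, a red vertex adjacent to exactly that class and setting the budget equal to the number of classes, so that a solution of that size is forced to take exactly one blue vertex per class.

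The heart of the construction is the ID gadget that forces a bounded solution to draw all its content picks from a single input instance. I keep the shared red set $\tau_1,\dots,\tau_p$; for each instance $i$ and each blue vertex $b$ of colour $c$ in instance $i$ I create a content vertex $b^{(i)}$ of colour $c$ with the same neighbourhood in $T$, and I assign instance $i$ the $s$-bit string $\mathrm{id}(i)$. I add $s$ new ``selector'' colour classes, one per bit, each containing two blue vertices $\sigma_j^{0},\sigma_j^{1}$ (a guess of bit $j$). The linking device is, for every colour $c\in[k]$ and bit $j\in[s]$, a pair of red vertices $A_{c,j},B_{c,j}$ with complementary adjacencies: $A_{c,j}$ is dominated by every content vertex of colour $c$ whose ID has a $1$ in position $j$ and by $\sigma_j^{0}$, while $B_{c,j}$ is dominated by every content vertex of colour $c$ whose ID has a $0$ in position $j$ and by $\sigma_j^{1}$. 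A short case check shows that, once one content vertex of colour $c$ and one selector for bit $j$ are chosen, both $A_{c,j}$ and $B_{c,j}$ are dominated if and only if bit $j$ of that content vertex's ID equals the guessed bit. Ranging over all $c$ and $j$, every chosen content vertex must then agree, bit by bit, with a single guessed string $\gamma\in\{0,1\}^{s}$; since $t=2^{s}$, $\gamma=\mathrm{id}(i)$ for a unique real instance $i$, so all content picks come from instance $i$. With budget $k'=k+s$, the colour-forcing red vertices guarantee exactly one pick per class, and the composed instance is a yes-instance exactly when some input instance admits a colored solution of size $k$, which establishes both directions of the equivalence.

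Finally I would verify the parameter bound: the red side has size $|T'| = p + O(ks)$ (the shared elements, the vertices $A_{c,j},B_{c,j}$, and the colour-forcing vertices) and the budget is $k' = k + O(\log t)$, so $|T'|+k' = \mathrm{poly}(\max_i|x_i| + \log t)$ as cross-composition requires, and the whole construction runs in time polynomial in the total input size. Invoking the cross-composition theorem then yields the claim. The main obstacle is exactly the ID gadget: because domination is monotone, it is easy to force \emph{diversity} among the picks but delicate to force \emph{agreement}, and the complementary pair $A_{c,j}/B_{c,j}$ per (colour, bit) is the crucial trick that converts the constraint ``the content bit must equal the guessed bit'' into a pair of ordinary domination requirements. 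Getting this gadget correct, together with the colour enforcement that prevents partial solutions from distinct instances from being glued into one cheap cover, is where the real work lies.
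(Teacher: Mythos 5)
You should first note what the paper actually does with this statement: it does not prove it at all. Proposition~\ref{R:RBDS} is imported verbatim from Dom, Lokshtanov, and Saurabh~\cite{RBDSIncompressibility} and used purely as a black box, as the source problem for the polynomial parameter transformation in Theorem~\ref{th:PolyCompression}. So there is no in-paper proof to compare against; what you have written is a self-contained reconstruction of the cited result itself.

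That reconstruction is, as far as I can check it, correct, and it follows essentially the same ``colors and IDs'' technique that the cited paper pioneered, repackaged in the later cross-composition framework of Bodlaender, Jansen, and Kratsch (the original argument predates cross-composition and was phrased as an OR-composition for the \emph{colored} variant followed by a transformation to plain \RBDS). The core of your argument is sound: the per-class forcing vertices make a budget-$(k+s)$ solution pick exactly one blue vertex per class; and for each colour $c$ and bit $j$ the pair $A_{c,j},B_{c,j}$ enforces $(\beta^c_j = 1 \vee \gamma_j = 0) \wedge (\beta^c_j = 0 \vee \gamma_j = 1)$, which is equivalent to $\beta^c_j = \gamma_j$, so all content picks carry the same ID and hence come from a single input instance; the parameter $|T'|+k' = p + (k+s) + 2ks + (k+s)$ is polynomial in $\max_i|x_i| + \log t$ as required. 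Two points deserve one line each if you write this up: (i) the NP-hardness of \textsc{Colored RBDS}, which you wave at, is genuinely easy (make $k$ disjoint copies of $N$, one per colour), but it is a needed ingredient and should be stated; (ii) the dummy padding instances must be well-formed no-instances in the same equivalence class (same $|T|$ and $k$), e.g.\ instances whose blue vertices have no neighbours in $T$, so that a solution whose ID selects a dummy would fail to dominate the shared red side, which is what makes the reverse direction go through. In short: your proof is a valid and faithful re-derivation of the cited proposition; the paper's ``approach'' is simply to cite it, which is all it needs, whereas your argument would make the lower bound self-contained at the cost of about a page of gadgetry.
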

 
We show that \CR parameterized by the $\mathsf{vcn}$ does not have a polynomial compression by developing a polynomial parameter transformation from  \RBDS parameterized by $|T|+k$ to \CR parameterized by  $\mathsf{vcn}$. 

\subsubsection{Bipartite Graphs with Large Degree and Girth}
For our reduction, we borrow a construction by Fomin at al.~\cite{fomin} of bipartite graphs having high girth and high minimum degree, which they used to prove NP-hardness (and $W[2]$-hardness for the solution size $k$) of \CR. 
For positive integers $p,q$, and $r$, we can construct a bipartite graph $H(p,q,r)$ with $rqp^2$ edges and a bipartition $(X,Y)$, with $|X| = |Y| = pq$. The set $X$ is partitioned into sets $U_1, \ldots, U_p$, and the set $Y$ is partitioned into sets $W_1, \ldots W_p$, with $|U_i| = |W_i| = q$. By $H_{i,j}$ we denote the subgraph of $H(p,q,r)$ induced by $U_i \cup W_j$, and by $\mathsf{deg}_{i,j}(z)$ we denote the degree of vertex $z$ in $H_{i,j}$. Fomin et al.~\cite{fomin} provided the following construction:

\begin{proposition}[\cite{fomin}]
Let $q \geq 2p(r+1) \frac{(p(r+1)-1)^6-1}{(p(r+1)-1)^2-1}$. Then, we can construct $H(p,q,r)$ in time $\mathcal{O}(r\cdot q \cdot p^2)$ with the following properties.
\begin{enumerate}
    \item The girth of $H(p,q,r)$ is at least 6.
    \item For every vertex $z \in V(H_{i,j})$ and every $i,j \in [p]$, we have $r-1 \leq \mathsf{deg}_{i,j}(z) \leq r+1$.
\end{enumerate}
\end{proposition}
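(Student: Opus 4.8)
The plan is to build $H(p,q,r)$ incrementally, starting from the empty bipartite graph on the prescribed vertex set $X = U_1 \cup \cdots \cup U_p$ and $Y = W_1 \cup \cdots \cup W_p$ (with $|U_i| = |W_j| = q$), and to add edges one at a time while maintaining the invariant that the current graph has girth at least $6$ --- equivalently, that no two vertices on the same side share two common neighbours, i.e.\ no $C_4$ exists. I would process the $p^2$ blocks $H_{i,j}$ in turn; within a block $(i,j)$ I keep adding edges between $U_i$ and $W_j$ until every vertex of $U_i \cup W_j$ has degree in $\{r-1,r,r+1\}$ inside $H_{i,j}$. The $\pm 1$ slack is precisely what lets a greedy process terminate cleanly: at each step I raise the block-degree of a currently \emph{deficient} vertex, declaring a vertex finished as soon as it reaches $r-1$, while never pushing any block-degree above $r+1$.

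Set $d := p(r+1)$; this is an upper bound on the degree of any vertex in the final graph, since each vertex lies in $p$ blocks and gains at most $r+1$ neighbours in each. The heart of the argument is to show that whenever $u \in U_i$ is still deficient in block $(i,j)$ there is a \emph{legal} target $w \in W_j$ to which $u$ may be joined. A target $w$ is illegal if either joining $u$ to $w$ would close a $4$-cycle, or $w$ already has block-degree $r+1$ (no room). A new $C_4$ through $uw$ corresponds exactly to an existing length-$3$ path from $u$ to $w$, so it suffices to forbid the targets reachable from $u$ by short paths. Counting uniformly over paths of length $1$, $3$, and $5$ (an over-estimate that in particular rules out every distance-$3$, i.e.\ $C_4$-creating, vertex), and using that every degree is at most $d$, the number of girth-forbidden targets is at most
\[
d + d(d-1)^2 + d(d-1)^4 \;=\; d\cdot\frac{(d-1)^6-1}{(d-1)^2-1}.
\]
Doubling this to absorb the degree-saturated targets gives the threshold $q \ge 2d\cdot\frac{(d-1)^6-1}{(d-1)^2-1}$ stated in the proposition; under this bound the illegal targets number strictly fewer than $|W_j| = q$, so a legal $w$ always exists and the construction never gets stuck.

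Once every block is saturated I would read off the two properties directly. Property~2 is the termination condition: each vertex has degree in $\{r-1,r,r+1\}$ inside each $H_{i,j}$. Summing the block-degrees over the $q$ vertices of $U_i$ shows that each of the $p^2$ blocks carries about $rq$ edges, giving the claimed total on the order of $rqp^2$. Property~1 holds because girth $\ge 6$ is exactly the maintained invariant. For the running time, by keeping local bookkeeping --- for the current $u$, marking the $\mathcal{O}\!\left(d\cdot\frac{(d-1)^6-1}{(d-1)^2-1}\right)$ vertices it may not be joined to --- each insertion can be carried out efficiently, so the whole construction runs in $\mathcal{O}(r\,q\,p^2)$ time, linear in the number of edges produced.

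I expect the main obstacle to be the counting step, namely arguing that at \emph{every} single insertion, and uniformly across all $p^2$ blocks, the union of the girth-forbidden and degree-saturated targets stays strictly below $q$. The delicate point is that edges added in earlier blocks also contribute short paths out of $u$ within the current block, which is exactly why the \emph{global} degree bound $d = p(r+1)$ must be used throughout the estimate rather than a per-block bound of $r+1$, and why the threshold on $q$ must grow polynomially like $\big(p(r+1)\big)^{5}$.
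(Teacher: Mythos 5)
The first thing to note is that the paper itself does not prove this proposition: it is imported verbatim from Fomin et al.~\cite{fomin} as a black box (the paper only uses its two properties in the incompressibility reduction). So the only question is whether your blind reconstruction stands on its own as a proof. It does not, because of precisely the step you flag at the end as the ``main obstacle'': the treatment of degree-saturated targets.

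Your counting of girth-forbidden targets is sound and correctly reverse-engineers half of the stated threshold: with the global degree bound $d = p(r+1)$, the vertices of $W_j$ at odd distance at most $5$ from $u$ number at most $d + d(d-1)^2 + d(d-1)^4 = d\cdot\frac{(d-1)^6-1}{(d-1)^2-1}$. The gap is the single clause ``doubling this to absorb the degree-saturated targets.'' You give no argument that the number of vertices of $W_j$ with block-degree $r+1$ is bounded by this same quantity, and no counting of the kind you use can give such a bound, because the saturated count scales with the number of edges already placed in the block, not with a polynomial in $d$. Concretely: in your process every edge is added to raise a currently deficient endpoint (block-degree $< r-1$), so each vertex is the ``raised'' endpoint of at most $r-1$ edges and a block can legitimately accumulate up to $2q(r-1)$ edges; this permits up to $2q(r-1)/(r+1)$ saturated vertices in $W_j$, which exceeds $q/2$ for every $r \ge 2$, and for $r \ge 3$ it is even consistent with these estimates that \emph{all} $q$ vertices of $W_j$ are saturated while some $u \in U_i$ is still deficient --- at which point the greedy is irrecoverably stuck. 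Nothing in your argument excludes this, so termination with all block-degrees at least $r-1$ (which is also your proof of Property~2) is unestablished. There is additionally a minor strictness slip: even granting your claim, $q \ge 2d\cdot\frac{(d-1)^6-1}{(d-1)^2-1}$ only yields ``illegal targets $\le q$,'' not ``$< q$.'' Repairing the argument needs a genuinely new idea --- some invariant keeping the degree distribution inside a block balanced (e.g., a minimum-degree target rule proved inductively, or a restructured insertion schedule), which is what the cited construction supplies; the $\pm 1$ slack in the degree requirement is not by itself enough to make a naive greedy terminate.
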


\subsubsection{Polynomial Parameter Transformation}

Suppose that we are given an instance $(G,k)$ with $V(G) = T \cup N$ of the \RBDS problem. First, we construct a graph $G'$ with $V(G') = T' \cup N'$ from $G$ by introducing two new vertices, $x$ and $y$, such that $T' = T \cup \{x\}$ and $N' = N \cup \{y\}$, and $E(G') = E(G) \cup \{xy \}$. We have the following observation.

\begin{observation}\label{O:reduction}
$G$ has an \textit{RBDS} of size at most $k$ if and only if $G'$ has an \textit{RBDS} of size at most $k+1$. Moreover, any \textit{RBDS} of $G'$ contains $y$.
\end{observation}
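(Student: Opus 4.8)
The plan is to establish both directions of the equivalence together with the structural ``moreover'' claim, all of which hinge on a single observation about the two newly added vertices. First I would record that in $G'$ the vertex $y$ has exactly one neighbour, namely $x$, and symmetrically $x$ has exactly one neighbour, namely $y$. Indeed, $x$ and $y$ are brand-new vertices and the only edge added in the construction is $xy$, so neither $x$ nor $y$ is incident to any vertex of the original graph $G$. This immediately yields the ``moreover'' part: any \textit{RBDS} of $G'$ is a subset of $N'$ that must dominate $x \in T'$, and since the only neighbour of $x$ lying in $N'$ is $y$, every such set must contain $y$.

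For the forward direction, I would take an \textit{RBDS} $D \subseteq N$ of $G$ with $|D| \le k$ and verify that $D \cup \{y\}$ is an \textit{RBDS} of $G'$ of size at most $k+1$. Since $D \subseteq N \subseteq N'$ and $y \in N'$, the set $D \cup \{y\}$ is a legitimate candidate on the blue side. Every vertex of $T$ keeps a neighbour in $D$ because the adjacencies between $T$ and $N$ are unchanged in $G'$, and the single new red vertex $x$ is dominated by $y$ through the edge $xy$. Hence $D \cup \{y\}$ dominates all of $T' = T \cup \{x\}$.

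For the reverse direction, I would start from an \textit{RBDS} $D' \subseteq N'$ of $G'$ with $|D'| \le k+1$. By the ``moreover'' part, $y \in D'$, so I set $D = D' \setminus \{y\} \subseteq N$, which has size at most $k$. It then remains to check that $D$ dominates $T$ in $G$: each $t \in T$ has a neighbour in $D'$, and since $y$ is adjacent only to $x \notin T$, that neighbour in fact lies in $D' \setminus \{y\} = D$. As the $T$–$N$ adjacencies coincide in $G$ and $G'$, it follows that $D$ is an \textit{RBDS} of $G$ of size at most $k$, completing the equivalence.

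I do not anticipate any genuine obstacle; the statement is essentially a bookkeeping lemma that pins down $y$ as a \emph{forced} vertex so that the parameter of the target instance is tightly controlled. The only point requiring (minor) care is justifying that deleting $y$ in the reverse direction is harmless, which is precisely guaranteed by the fact that $y$'s unique neighbour $x$ lies outside $T$; I would state this explicitly rather than leave it implicit.
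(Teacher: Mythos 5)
Your proof is correct and is precisely the bookkeeping argument the paper leaves implicit: the paper states this as an unproved observation, and your reasoning (the edge $xy$ is the only new adjacency, so $y$ is forced into every RBDS to dominate $x$, and adding/removing $y$ translates solutions back and forth) is exactly the intended justification. No gaps; the explicit care you take in the reverse direction (that $y$ cannot dominate any vertex of $T$) is the right point to make explicit.
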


Now, we present the main construction for our reduction. Denote the vertex set  $V(T')$ by $\{v_1, v_2, \ldots, v_{p'}, x\}$. Moreover, let $p = p'+1$, $\ell=k+1$, $r = \ell+2$, and $q = \lceil 2p(r+1) \frac{(p(r+1)-1)^6-1}{(p(r+1)-1)^2-1} \rceil$. 

We construct $H(p,q,r)$ such that each of $U_i$ and $W_i$, for $0 < i \leq p'$, contains $q$ copies of vertex $v_i$, and each of $U_p$ and $W_p$ contains $q$ copies of vertex $x$. Now, we obtain a graph $G''$ by adding one more set of vertices  $P$ to $H(p,q,r)$ such that $V(P) = V(N')$. Moreover, if there is an edge between a vertex $u \in N'$ and a vertex $v_i \in T'$, then we add an edge between $u$ and every vertex of $U_i$, and also between $u$ and every vertex of $W_i$. Similarly, we add an edge between $y$ and every vertex of $U_p$, and between $y$ and every vertex of $W_p$. Finally, we make the vertex $y$ adjacent to every vertex of $P$. See Figure~\ref{fig:NoPoly} for reference.

For correctness, we have the following lemma.

\begin{figure}
    \centering
    \includegraphics[scale=0.65]{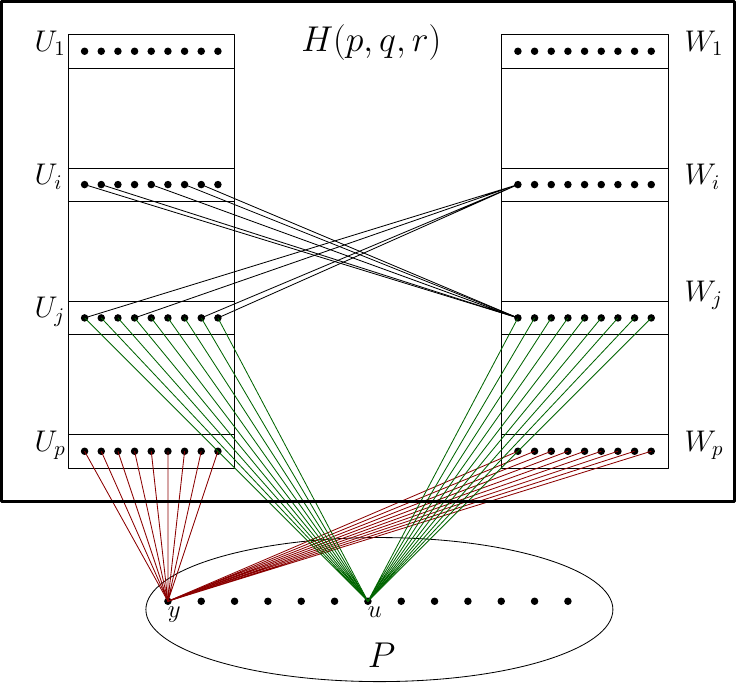}
    \caption{Illustration for $H(p,q,r)$ and $P$. If a vertex $u$ is connected to $v_j$ in $G$, then here $u$ is connected to every vertex of $W_j$ and $U_j$. Moreover, for every $i,j$, each vertex in $U_i$ has at least $r-1$ neighbors in $U_j$.}
    \label{fig:NoPoly}
\end{figure}

\begin{lemma}\label{L:NoPoly}
 $G'$ has an \textit{RBDS} of size at most $\ell$ if and only if  $G''$ is $\ell$-copwin.  
\end{lemma}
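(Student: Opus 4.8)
The plan is to establish the two directions of the biconditional separately, in each case transferring a combinatorial object (a small dominating set) into a cop strategy and vice versa, while crucially exploiting the high girth and high minimum degree of $H(p,q,r)$ to pin down where cops must be positioned.

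\medskip
\noindent\textbf{Forward direction ($\ell$-RBDS $\Rightarrow$ $\ell$-copwin).}
First I would assume $G'$ has an RBDS $N'' \subseteq N'$ of size at most $\ell$; by \Cref{O:reduction} we may assume $y \in N''$. The idea is to place the $\ell$ cops in $G''$ on the corresponding vertices of the set $P$ (recall $V(P) = V(N')$). Since $N''$ dominates every vertex of $T'$ in $G'$, and each vertex $v_i \in T'$ was ``blown up'' into $U_i \cup W_i$ all adjacent to the same dominators, the cops placed on $P$ collectively attack \emph{every} vertex of $X \cup Y$. The cop occupying $y$ additionally dominates all of $P$ (since $y$ is adjacent to every vertex of $P$) together with $U_p \cup W_p$. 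Hence at the start every vertex of $G''$ is attacked, the robber can be captured wherever it chooses to appear, and $\ell$ cops win. I would state this as the easy direction.

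\medskip
\noindent\textbf{Reverse direction ($\ell$-copwin $\Rightarrow$ $\ell$-RBDS).}
This is the substantive direction and the main obstacle. Here I would argue the contrapositive: assuming $G'$ has \emph{no} RBDS of size $\le \ell$, I would exhibit a winning strategy for the robber against $\ell$ cops in $G''$. The robber will live inside $X \cup Y = \bigcup_i (U_i \cup W_i)$ and use the expansion properties of $H(p,q,r)$ to evade. The key structural facts to invoke are: (i) the girth of $H(p,q,r)$ is at least $6$, so the neighbourhoods in $H_{i,j}$ of any two distinct vertices overlap in at most one vertex; and (ii) every vertex has degree at least $r-1 = \ell+1$ in each relevant $H_{i,j}$. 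Because there are only $\ell$ cops, at any moment the cops occupying $P$ correspond to a set of at most $\ell$ vertices of $N'$, which — by assumption — fails to dominate some $v_i \in T'$; equivalently, some block $U_i \cup W_i$ is not attacked \emph{from $P$}. The robber aims to sit in such an unattacked block. The danger is a cop sitting \emph{inside} $X \cup Y$ directly threatening the robber's vertex; here the degree bound comes in. Since the robber's current vertex has at least $\ell+1$ neighbours in the opposite side of its block-pair, and the high girth guarantees that no two cops can simultaneously cover many of these neighbours, the robber always has a safe neighbour to escape to. I would formalize this by a potential/counting argument: at each robber turn I count the neighbours of the robber's vertex that are attacked, show this is strictly less than the degree, and conclude a safe move exists. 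The robber thus evades forever, so $G''$ is not $\ell$-copwin.

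\medskip
\noindent\textbf{Where the difficulty lies.}
The hard part will be the robber's evasion argument in the reverse direction: I must show that the combination ``at most $\ell$ cops, minimum degree $\ge \ell+1$, girth $\ge 6$'' genuinely forces the existence of a safe escape vertex on \emph{every} robber turn, and moreover that the robber can coordinate staying within an undominated block over time (not just at a single instant). The girth condition is what prevents a single cop in $X \cup Y$ from blocking two escape routes at once (two neighbours shared would create a $4$-cycle), and the degree condition ensures the escape routes outnumber the cops; I would need to carefully track that a cop moving into the robber's block can threaten at most one escape neighbour per step, while the non-dominating property of the $P$-cops guarantees the block itself is reachable and stays a refuge. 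Combining these requires bookkeeping of how the at-most-$\ell$ cops can be distributed between $P$ and $X \cup Y$, and arguing that whichever split the cop player chooses, the robber's invariant ``I occupy a safe vertex in an undominated block'' is maintainable after the cop move and restorable after the robber move.
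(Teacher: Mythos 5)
Your overall architecture is the same as the paper's: the forward direction places the $\ell$ cops on the RBDS vertices inside $P$ (so every vertex of $G''$ is attacked and the robber is caught on the first cop move), and the reverse direction has the robber evade inside $H(p,q,r)$ using exactly the three ingredients the paper uses — non-domination of some block from $P$, minimum degree $r-1=\ell+1$ in the bipartite blocks, and the girth-$6$ counting argument that no vertex of $H$ can attack two of the robber's prospective escape vertices. However, one step would fail as you describe it. Your escape move sends the robber to ``the opposite side of its block-pair'' (from $U_i$ to $W_i$), and your invariant asks the undominated block to ``stay a refuge.'' Neither is tenable: the cops occupying $P$ can re-position (every vertex of $P$ is adjacent to $y$, so a cop inside $P$ can reach any other vertex of $P$ in two moves) and thereby dominate the robber's current block pair entirely. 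At that point every vertex of $U_i\cup W_i$ is attacked from $P$, and no within-pair escape exists; a robber committed to its block pair is caught.

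The fix — which is exactly what the paper's proof does — is to decouple the escape target from the robber's current block. When the robber standing in $U_i$ is under attack, it picks whichever block $W_j$ is undominated from $P$ \emph{at that moment} (one exists because the at most $\ell$ cop positions in $P$, viewed as vertices of $N'$, are not an RBDS of $G'$), and moves to one of its at least $r-1=\ell+1$ neighbours in that $W_j$. This is possible for \emph{every} index $j$, not just $j=i$, precisely because the degree bound holds in every $H_{i,j}$; the girth argument then guarantees that the at most $\ell$ cops inside $H$ attack at most $\ell$ of these $\ell+1$ candidate vertices, so a safe one remains. Thus no block is ever a long-term refuge: the robber hops to a freshly chosen undominated block at each escape, alternating between the $X$-side and the $Y$-side. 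You should also make the initial placement explicit, which your sketch leaves implicit: a cop inside $H$ dominates at most $p(r+1)$ vertices of $H$, and $q>\ell\cdot p(r+1)$, so an undominated-from-$P$ block $U_i$ always contains a safe starting vertex for the robber.
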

\begin{proof}
First, we show that if $G'$ has an \textit{RBDS} of size $\ell$, then $\ell$ cops have a winning strategy in $G''$. Let $S\subseteq N'$ be an RBDS in $G'$ of size at most $\ell$. The cops begin by choosing the vertices corresponding to $S$ in $P$. Observe that the vertex $y$ has to be present in $S$. Since vertex $y$ dominates each vertex in $P$, the robber cannot safely enter a vertex in $P$. Additionally, due to the construction of $G''$, the vertices of $S$ dominate each vertex in $H$. Hence, the robber cannot safely enter a vertex in $H$. Therefore, the robber will be captured in the first move of the cops.

Next, we show that if there is no \textit{RBDS} of size $\ell$ in $G'$, then $\ell$ cops do not have a winning strategy. We prove this by giving a winning strategy for the robber. First, we show that the robber can safely enter the graph. In the beginning, let there be $\ell_1 \leq \ell$ cops in $P$ and $\ell_2 \leq \ell$ cops in $H$. Since there is no \textit{RBDS} of size $\ell$ in $G'$, for every placement of at most $\ell$ cops in $P$, there exists at least one pair of $U_i$ and $W_i$ such that no vertex of $U_i$ and $W_i$ is dominated by the cops from $P$. Let $U_i$ and $W_i$ be one such pair of sets such that no vertex of $U_i$ and $W_i$ is dominated by the cops from $P$. Moreover, since each vertex of $H$ can dominate at most $p(r+1)$ vertices in $H$, $\ell_2$ cops can dominate at most $\ell \cdot p(r+1)$ vertices. Since $U_i$ (and $W_i$ also) contains $q$ vertices, and $q> \ell \cdot p(r+1)$, the $\ell_2$ cops in $H$ cannot dominate all vertices of $U_i$, and hence the robber can safely enter a vertex of $U_i$. 

Now, whenever the robber is under attack, it does the following. Without loss of generality, let us assume that the robber is in $U_i$ (the case of $W_i$ is symmetric). Since there are at most $\ell$ cops in $P$, there is always a $W_j$ such that no vertex of $W_j$ is dominated by cops from $P$. Since each vertex in $U_i$ has at least $r-1 = \ell+1$ neighbours in $W_j$,  the robber can move to at least $\ell+1$ vertices of $W_j$. Since the girth of $H$ is at least 6, no vertex from $H$ can dominate two vertices of $W_j$ that are adjacent to the robber; else, we get a cycle on four vertices. Hence, at most $\ell$ cops from $H$ can dominate at most $\ell$ neighbors of the robber in $W_j$, and the robber has at least $\ell+1$ neighbors in $W_j$.  Hence, the robber can move to a safe vertex in $W_j$. Since the graph $H$ is symmetric, the robber can move safely from  $W_{j'}$ to $W_{i'}$ also. The robber follows this strategy to avoid capture forever.

This completes the proof of our lemma.
\end{proof}

Next, we have the following observation to show that there exists a vertex cover $U$ of $G''$ such that $|U| = poly(|T|,k)$. 

\begin{observation}\label{O:SizeBoundReduction}
$V(H) \cup \{y\}$ is a vertex cover of $G''$. Therefore, the vertex cover number of $G''$ is at most $2\cdot p \cdot q+1 = 1+ 2p\cdot \lceil 2p(k+3) \frac{(p(k+3)-1)^6-1}{(p(k+3)-1)^2-1}\rceil$, where $p = |T|+1$.
\end{observation}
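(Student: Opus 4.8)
The plan is to verify the claimed vertex cover and then compute the resulting bound by substituting the chosen parameter values. First I would argue that $V(H) \cup \{y\}$ is indeed a vertex cover of $G''$. To do this, I would examine each edge of $G''$ and show that at least one endpoint lies in $V(H) \cup \{y\}$. Recall the three sources of edges in $G''$: the edges internal to $H(p,q,r)$ (both endpoints lie in $V(H)$, so they are covered); the edges between a vertex $u \in N'$ and the vertices of some $U_i$ or $W_i$ (here the $U_i$/$W_i$ endpoint is in $V(H)$, so they are covered); and the edges between $y$ and the vertices of $P$ (covered by $y$). The only vertices \emph{not} in $V(H) \cup \{y\}$ are the remaining vertices of $P = V(N')$, namely $N' \setminus \{y\}$, and I would check that these form an independent set: any two vertices of $P$ are non-adjacent because $G''$ adds no edges within $P$ (the only edges incident to $P$ go either into $H$ or to $y$). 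Hence $V(H) \cup \{y\}$ is a vertex cover.

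Next I would count the size of this vertex cover. By construction, $|V(H)| = |X| + |Y| = pq + pq = 2pq$, since $X$ and $Y$ each have $pq$ vertices (partitioned into $p$ blocks of size $q$). Adding the single vertex $y$ gives $\mathsf{vcn}(G'') \le |V(H) \cup \{y\}| = 2pq + 1$.

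Finally I would substitute the reduction's parameter settings. We set $p = |T| + 1$ (since $p = p' + 1$ where $p' = |T'| - 1 = |T|$, noting $T' = T \cup \{x\}$), and $r = \ell + 2 = (k+1) + 2 = k + 3$, and $q = \lceil 2p(r+1) \frac{(p(r+1)-1)^6 - 1}{(p(r+1)-1)^2 - 1} \rceil$. Plugging $r + 1 = k + 3$ into the formula for $q$ and then into $2pq + 1$ yields exactly
\[
\mathsf{vcn}(G'') \le 1 + 2p \cdot \left\lceil 2p(k+3)\, \frac{(p(k+3)-1)^6 - 1}{(p(k+3)-1)^2 - 1} \right\rceil,
\]
with $p = |T| + 1$, which is the claimed bound. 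I would emphasize that this quantity is polynomial in $|T|$ and $k$: the ceiling argument is a fixed-degree rational function of $p(k+3)$, hence polynomial in $|T|+k$, and multiplying by $2p$ keeps it polynomial.

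I do not expect any genuine obstacle here, as the statement is essentially a bookkeeping verification. The only point requiring a little care is confirming that $P \setminus \{y\}$ really is independent (so that we have not overlooked an edge class that would force more vertices into the cover), and correctly tracking the index bookkeeping $p = |T| + 1$ and $r + 1 = k + 3$ through the substitution so that the final closed form matches the stated expression. The underlying purpose of this observation—that the transformation is a genuine polynomial parameter transformation, so that Proposition~\ref{R:RBDS} transfers incompressibility to \CR parameterized by $\mathsf{vcn}$ via Lemma~\ref{L:NoPoly}—follows immediately once the polynomial bound on $\mathsf{vcn}(G'')$ is established.
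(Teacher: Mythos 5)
Your proposal is correct and is exactly the bookkeeping verification the paper intends (the paper states this observation without an explicit proof): check that every edge of $G''$ has an endpoint in $V(H)\cup\{y\}$ via the independence of $P\setminus\{y\}$, count $|V(H)|=2pq$, and substitute the parameter values. One arithmetic slip, though: you set $r=\ell+2=k+3$ and then claim to plug in $r+1=k+3$, which is inconsistent — with the paper's settings $r+1=k+4$, so the substitution actually yields $2pq+1 = 1+2p\cdot\bigl\lceil 2p(k+4)\,\frac{(p(k+4)-1)^6-1}{(p(k+4)-1)^2-1}\bigr\rceil$, not the $(k+3)$ form. This off-by-one is present in the paper's own statement of the observation, and it is immaterial to the point being established, since either expression is polynomial in $|T|+k$, which is all that the polynomial parameter transformation requires.
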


This completes the proof of the argument that \CR parameterized by the $\mathsf{vcn}$ is unlikely to admit a polynomial compression. Thus, we have the following theorem as a consequence of Lemma~\ref{L:NoPoly}, Observation~\ref{O:SizeBoundReduction} and Proposition~\ref{R:RBDS}.

\VCNPoly*

We prove the incompressibility of the variants (Theorem~\ref{th:variantPolyCOmpression}) in the Appendix.

\subsection{Incompressibility for Variants}
In this section, we prove Theorem~\ref{th:variantPolyCOmpression}. In Theorem~\ref{th:PolyCompression}, we proved  that it is unlikely for \CR to admit a polynomial compression. For this purpose, we constructed a graph $G''$ where $k$ cops have a winning strategy if and only if the graph $G'$ has an $RBDS$ of size at most $k$. If $G'$ has an $RBDS$ of size $k$, then there is a dominating set of size $k$ in $G''$. Else, there is no winning strategy for $k$ cops in $G''$. Here, we  use the same construction to show that the variants we study (except for \SCR) are unlikely to admit a polynomial compression parameterized by $\mathsf{vcn}$. We establish this by proving that $G''$ is $k$-copwin for these variants if and only if $G'$ has an $RBDS$ of size at most $k$.

As discussed earlier, for a graph $G$, $\mathsf{c}(G) \leq \mathsf{c}_{lazy}(G)$,  $\mathsf{c}(G) \leq \mathsf{c}_{attacking}(G)$, and $\mathsf{c}(G) \leq \mathsf{c}_{s}(G)$ (for any $s\geq 1$). Therefore, if $G$ does not have an $RBDS$ of size at most $k$, then $\mathsf{c}(G)>k$, and hence, $\mathsf{c}_{lazy}(G) >k$, $\mathsf{c}_{attacking}(G) > k$, and $\mathsf{c}_{s}(G) > k$ (for $k>0$). To see the reverse direction, observe that in each of these three variants, if the cops start by occupying a dominating set, then they win in the next round. Hence, this establishes that it is unlikely for \CAR, \CFR, \LCR parameterized by $\mathsf{vcn}$ to admit to admit a polynomial compression. 

Similarly, it is true for \ACR also, that if the cops start by occupying a dominating set, then they win in the next round. Hence, we have to only show that if there is no $RBDS$ of size $k$ in $G'$ (and hence, no dominating set of size $k$ in $G''$), then $k$ cops do not have a winning strategy in $G$ for \ACR.  The robber uses the following strategy. When $\R$ is under attack, it follows the strategy from \CNR.  Now, $\R$ is forced to move (because it is active) even when it is on a safe vertex. Note that $\R$ always stays in $H(p,q,r)$. Due to symmetry, let us assume it is in some vertex $v$ in some block $U_i$. In this case, $\R$ can simply move to a vertex in $W_i$. Observe here that since vertices in $U_i$ are safe, the vertices in $W_i$ are also safe.   

Thus, we have the following lemma to establish that these variants are unlikely to admit a polynomial compression.

\begin{figure}
    \centering
    \includegraphics[scale=0.6]{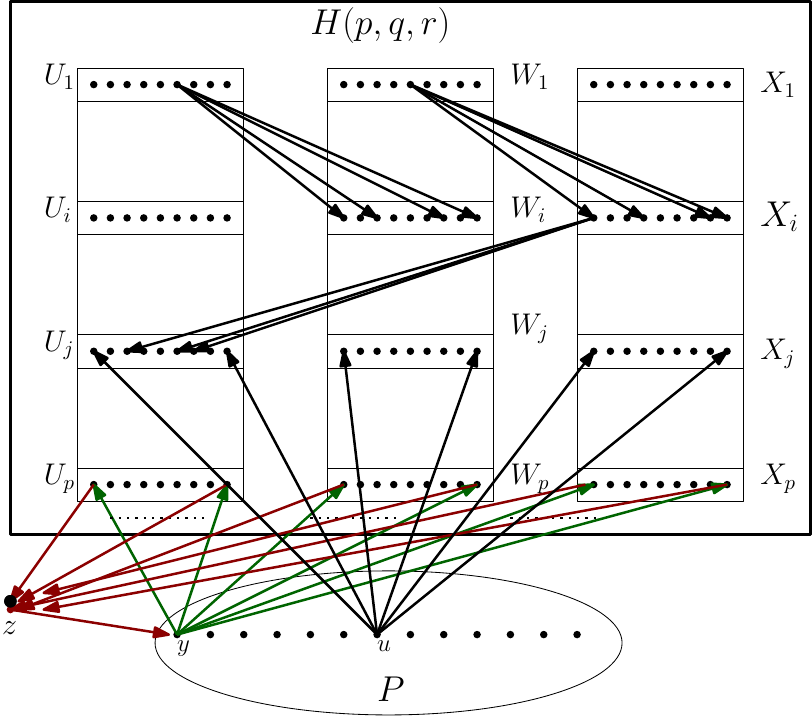}
    \caption{Illustration for $H(p,q,r)$ and $P$. If a vertex $u$ is connected to $v_j$ in $G$, then for every vertex $v \in W_j\cup X_j \cup U_j$, we orient the arc $\overrightarrow{uv}$. For each vertex $v \in V(P)\setminus \{y\}$, we orient the arc $\overrightarrow{yv}$. We add a vertex $z$, and arcs $\overrightarrow{zy}$ and $\overrightarrow{xz}$ for each $x\in U_p \cup X_p \cup W_P$.}
    \label{fig:appendix}
\end{figure}


\begin{lemma}\label{L:variantPolyCOmpression}
\LCR, \CAR, \CFR, and \ACR parameterized by the vertex cover number do not admit a polynomial compression, unless \NPoly.
\end{lemma}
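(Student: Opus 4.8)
The plan is to reuse verbatim the reduction underlying Theorem~\ref{th:PolyCompression}: from an instance of \RBDS we build $G'$ and then the graph $G''$, relying on the fact (Observation~\ref{O:SizeBoundReduction}) that $\mathsf{vcn}(G'')$ is bounded by a polynomial in $|T|+k$. Since neither the graph nor this bound depends on the game variant, it suffices to re-establish the equivalence of Lemma~\ref{L:NoPoly} for each variant, i.e.\ to show that with $\ell=k+1$ cops, $G''$ is $\ell$-copwin in the variant precisely when $G'$ admits an \textit{RBDS} of size at most $\ell$. Composing this polynomial parameter transformation with Proposition~\ref{R:RBDS} then yields the claimed incompressibility, unless \NPoly.

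The completeness direction is uniform across all four variants. If $G'$ has an \textit{RBDS} of size at most $\ell$, then $G''$ contains a dominating set $D$ with $|D|\le\ell$, and the cops simply start on $D$. Whatever vertex the robber picks is adjacent to some $d\in D$, so on the very first cop move the single cop at $d$ steps onto the robber and captures it; this works for lazy cops (only that one cop needs to move), for the attacking robber (capture happens on the cop turn, before the robber can strike), for the fast robber, and for active cops (the remaining cops may move arbitrarily). Hence $\mathsf{c}_x(G'')\le\ell$ in every case.

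For soundness of \LCR, \CAR, and \CFR no new evasion argument is needed: restricting the cops to be lazy, or strengthening the robber by letting it attack or move faster, can only increase the number of cops required, so $\mathsf{c}(G'')\le\mathsf{c}_{lazy}(G'')$, $\mathsf{c}(G'')\le\mathsf{c}_{attack}(G'')$, and $\mathsf{c}(G'')\le\mathsf{c}_{s}(G'')$. When $G'$ has no \textit{RBDS} of size $\ell$, Lemma~\ref{L:NoPoly} gives $\mathsf{c}(G'')>\ell$, and these inequalities propagate the bound to each of the three variants.

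The active case is the crux, because \ACR forces \emph{both} players to move, so the comparison with plain \CR is no longer monotone and the inequality $\mathsf{c}(G'')\le\mathsf{c}_{active}(G'')$ cannot be invoked; a direct evasion strategy is required. The plan is to keep the robber inside $H(p,q,r)$ and maintain the invariant that it ends each turn on a safe vertex lying in a block $U_i$ (or $W_i$) not dominated by any cop in $P$; such a $P$-free pair always exists because $G'$ has no \textit{RBDS} of size $\ell$. When the robber is under attack it uses the original \CR escape from Lemma~\ref{L:NoPoly}: from $U_i$ it has at least $r-1=\ell+1$ neighbours in a $P$-free block $W_j$, and girth at least $6$ prevents any single $H$-cop from dominating two of them, so at most $\ell$ cops block at most $\ell$ of these neighbours and a safe landing remains. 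When the robber is safe but is nevertheless forced to move, it steps between the symmetric halves $U_i\leftrightarrow W_i$ of a $P$-free block; since each block has $q>\ell\cdot p(r+1)$ vertices while $\ell$ cops dominate at most $\ell\cdot p(r+1)$ vertices of $H$, a safe target always exists, so the active robber is never driven onto an attacked vertex. This yields $\mathsf{c}_{active}(G'')>\ell$ whenever $G'$ has no \textit{RBDS} of size $\ell$, completing the equivalence for all four variants. I expect this last step---certifying that the \emph{forced} moves of the active robber can always be directed to a currently safe vertex, so that the robber never gets stuck---to be the main obstacle, and it is exactly here that the large block size $q$ and the girth bound do the work.
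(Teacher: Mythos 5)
Your overall plan coincides with the paper's proof: reuse the construction $G''$ and Observation~\ref{O:SizeBoundReduction} verbatim, dispatch \LCR, \CAR, and \CFR via the monotonicity inequalities $\mathsf{c}(G'')\leq \mathsf{c}_{lazy}(G'')$, $\mathsf{c}(G'')\leq\mathsf{c}_{attack}(G'')$, and $\mathsf{c}(G'')\leq\mathsf{c}_{s}(G'')$ combined with Lemma~\ref{L:NoPoly}, and give a direct evasion strategy for \ACR in which the robber stays inside $H(p,q,r)$, uses the escape of Lemma~\ref{L:NoPoly} when under attack, and steps between $U_i$ and $W_i$ when it is safe but forced to move. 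This is exactly the structure of the paper's argument, including the identification of \ACR as the one variant where monotonicity fails.

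There is, however, a genuine flaw in the justification you give for the forced-move step. You argue that because each block has $q>\ell\cdot p(r+1)$ vertices while $\ell$ cops dominate at most $\ell\cdot p(r+1)$ vertices of $H$, ``a safe target always exists.'' That counting only shows that $W_i$ contains \emph{some} safe vertex; but the active robber has speed $1$ and must move to a vertex adjacent to its current vertex $v\in U_i$, and $v$ has at most $r+1$ neighbours in $W_i$, so the safe vertices produced by your counting need not be reachable. (This global counting is the right tool for the robber's initial placement, where it may choose any vertex, but not for a move.) The correct argument is the same degree-plus-girth argument you already used for the under-attack escape, applied to the pair $(U_i,W_i)$: since the robber is not under attack, no cop in $P$ dominates the pair---any such cop would be adjacent to $v$ itself---so no vertex of $W_i$ is attacked from $P$; the vertex $v$ has at least $r-1=\ell+1$ neighbours in $W_i$ by the degree property of $H_{i,i}$; and since the girth of $H$ is at least $6$ and $W_i$ is independent, no single cop in $H$ can attack two of these neighbours. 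Hence at most $\ell$ of the $\ell+1$ neighbours are attacked, and a safe neighbour remains. With this substitution the \ACR case, and hence the whole lemma, goes through exactly as in the paper.
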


This result can also be extended to directed (or oriented) graphs. We have the following lemma.
\begin{lemma}\label{L:directedPolyCompression}
\CR on strongly connected directed and oriented graphs parameterized by vertex cover number does not admit a polynomial compression, unless \NPoly.
\end{lemma}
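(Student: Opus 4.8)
The plan is to extend the reduction from the proof of \Cref{th:PolyCompression} to directed and oriented graphs by orienting the edges of the construction so that the strong connectivity of the resulting digraph is guaranteed while the cop-robber dynamics are preserved. Recall that in the undirected reduction we built $G''$ from an instance $(G',\ell)$ of \RBDS, where $G''$ consists of the high-girth bipartite graph $H(p,q,r)$ together with the ``pendant'' part $P = V(N')$, and we showed that $G''$ is $\ell$-copwin if and only if $G'$ has an \textit{RBDS} of size at most $\ell$. The key features exploited were: (i) if an RBDS of size $\ell$ exists, the corresponding vertices dominate all of $H$ and $y$ dominates $P$, so the cops win in one move; and (ii) if no such RBDS exists, then for any placement of $\ell$ cops the robber can always find a safe vertex in some $U_i$ (or $W_i$) and, whenever attacked, can escape to one of at least $r-1 = \ell+1$ neighbors in some undominated $W_j$, using the girth-$6$ property to ensure no single cop blocks two escape vertices.

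First, I would fix an orientation of $G''$ following the scheme indicated in \Cref{fig:appendix}: for each edge from $u \in N'$ to a block vertex, orient the arc $\overrightarrow{uv}$ from $P$ into $H$; orient $\overrightarrow{yv}$ for every $v \in V(P)\setminus\{y\}$; and add an auxiliary vertex $z$ together with the arcs $\overrightarrow{zy}$ and $\overrightarrow{xz}$ for each $x \in U_p \cup X_p \cup W_p$. The purpose of $z$ is precisely to close the digraph into a single strongly connected component: arcs run from $P$ into $H$, and the new vertex $z$ routes flow from $H$ back up to $y$ and thence back into $P$, so that every vertex can reach every other vertex along directed paths. I would verify strong connectivity as the first formal step, then argue that the cop side of the equivalence is unchanged: if $G'$ has an RBDS $S$ of size $\ell$, the cops occupy the corresponding vertices of $P$ (including $y$), and since the arcs are oriented \emph{from} these vertices \emph{into} the blocks they dominate, the robber still cannot safely enter any vertex of $H$ or $P$, so it is captured on the first move.

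The substance is the robber's escape argument in the oriented setting, which I would carry out next. Here the main obstacle is that orientation restricts the robber's moves: it can only traverse arcs in their direction, so I must check that the escape routes used in the undirected proof remain available as \emph{directed} paths. The construction must be set up so that the robber, sitting on a safe vertex in $U_i$, can move along out-arcs to at least $\ell+1$ neighbors in an undominated block, and symmetrically escape back. This is why the figure introduces a third family of blocks $X_j$ alongside $U_j, W_j$: with three layers one can orient arcs cyclically (roughly $U \to W \to X \to U$) so that the robber always has outgoing escape arcs regardless of which block it currently occupies, while the high-girth property still forbids any single cop from covering two of the robber's out-neighbors. I would re-derive the degree and girth bounds for this three-layer version of $H(p,q,r)$ and confirm that $q > \ell \cdot p(r+1)$ continues to force an undominated block, so the robber can both enter safely and evade indefinitely.

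Finally, I would bound the vertex cover number of the oriented construction by the same polynomial in $|T|$ and $k$ as in \Cref{O:SizeBoundReduction} (the underlying graph gains only the single vertex $z$ and an extra block family, changing the bound by a constant factor), confirm $k' = \ell = k+1 \le p(k)$, and invoke \Cref{R:RBDS} together with the fact that polynomial parameter transformations preserve incompressibility to conclude \Cref{L:directedPolyCompression}. The oriented case follows from the directed case since the construction uses no digons, and thus the underlying digraph is in fact an orientation of a simple graph. I expect the genuinely delicate point to be confirming that the cyclic orientation of the three block families simultaneously (a) keeps the robber's evasion paths directed and of the required length, and (b) does not accidentally give the robber a directed route that lets it evade even when a dominating set \emph{is} present; reconciling these two requirements with the strong-connectivity demand is where the argument must be checked most carefully.
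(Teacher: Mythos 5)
Your plan follows the paper's own construction (it is precisely the one depicted in Figure~\ref{fig:appendix}): three block families oriented cyclically $U \to W \to X \to U$, arcs from $P$ into $H$, all of $y$'s edges oriented outward, and the auxiliary vertex $z$ with arcs $\overrightarrow{zy}$ and $\overrightarrow{vz}$ for $v \in U_p \cup W_p \cup X_p$; the robber-side escape argument and the vertex-cover bound also match the paper's proof. However, there is a genuine gap in your cop-side argument. You claim that when the cops occupy the vertices of an RBDS $S \subseteq P$ (with $y \in S$), ``the robber still cannot safely enter any vertex of $H$ or $P$, so it is captured on the first move.'' The first clause is true but the conclusion is false: the vertex $z$ is a safe entry point. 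In the oriented game a cop attacks only its out-neighbourhood, the out-neighbours of every vertex of $P$ lie in $H \cup P$, and the in-neighbours of $z$ are exactly $U_p \cup W_p \cup X_p$, none of which is occupied. So the robber can enter at $z$ and is certainly not captured in one move; as stated, your forward implication does not go through.

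The repair is the argument the paper actually gives, and it uses both of the construction's special features. The unique out-neighbour of $z$ is $y$, and every RBDS of $G'$ contains $y$ (Observation~\ref{O:reduction}), so a robber entering at $z$ can never leave $z$ without being captured; since $\ell = k+1 \geq 2$ and $\overrightarrow{G''}$ is strongly connected, a second cop can follow a directed path to $z$ and capture the robber after finitely many rounds. Note that this is exactly the ``does the escape hatch for strong connectivity accidentally help the robber'' issue that you flagged in your last paragraph as needing the most care --- you identified the risk but did not close it, and the step you wrote down in its place is incorrect. Two smaller remarks: your final implication is stated backwards (incompressibility on oriented graphs transfers to the larger class of strongly connected directed graphs, not the other way around), though your justification makes clear the digon-free construction covers both; and the paper disposes of the non-oriented directed case even more simply, by replacing every edge of the original construction with a digon, which preserves the undirected game verbatim.
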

\begin{proof}
For the case of directed graphs, we can simply replace each edge in the construction with a loop edge (directed cycle on two vertices). 

To prove this result for oriented graphs, we do the following. Here, we change the underlying graph $G''$. First, instead of having two partitions $U$ and $W$, we have three partitions $U,W$, and $X$ (with the same rules). See Figure~\ref{fig:appendix} for an illustration. Second, we add edges between $U$ and $W$, $W$ and $X$, and $X$ and $U$ following the rules of the construction. Moreover, the edge rules for vertices in $P$ are the same (that is, if a vertex has edges with each vertex in some $U_i$, it has edges with each vertex in $W_i$ and $X_i$ as well). Next, we define orientations. For the vertex $y$, we orient all the edges as outgoing. For every vertex $u\in P \setminus \{y\}$, we mark all the edges as outgoing, except for the edge $uy$ (which is oriented $\overrightarrow{yu}$). For each edge $uv$ such that $u\in U$ and $w \in W$, orient the edge $\overrightarrow{uw}$. For every edge $wx$ such that $w\in W$ and $x\in X$, orient the edge $\overrightarrow{wx}$. For each edge $xu$ such that $x\in X$ and $u\in U$, orient the edge $\overrightarrow{xu}$. Finally, add an extra vertex $z$, and add arc $\overrightarrow{zy}$. Moreover, for each vertex $v \in U_p \cup W_p \cup X_p$, add an arc $\overrightarrow{vz}$. 

It is straightforward to see that $\overrightarrow{G''}$ is a strongly connected oriented graph. Moreover, if $G''$ has a dominating set of size $k$, then $k$ cops have a winning strategy by occupying these vertices in $\overrightarrow{G''}$. Observe that, at this point, $\R$ can enter only at vertex $z$ and cannot move as long as there is a cop, say, $\C_1$, at $y$ (which there is due to the construction of $G'$ and $G''$). Now, since $\overrightarrow{G''}$ is strongly connected, some other cop, say, $\C_2$, can move to capture $\R$ in a finite number of rounds. For the reverse direction, if $G'$ does not have an $RBDS$ of size $k$ (and hence $G''$ does not have a dominating set of size $k$), then following the arguments of Lemma~\ref{L:NoPoly}, $\R$ can enter at a safe vertex in $U$. Then, whenever $\R$ is under attack, it can move to a safe vertex in $W$. Similarly, it can move from $W$ to $X$ and from $X$ to $U$ when under attack.  Moreover, note that vertex $z$ does not attack any vertex in $U\cup W\cup X$. Hence, $\R$ has a clear evading strategy.

This completes our proof.
\end{proof}

Lemma~\ref{L:variantPolyCOmpression} and Lemma~\ref{L:directedPolyCompression} directly imply Theorem~\ref{th:variantPolyCOmpression}.

\section{Conclusion and Future Directions}\label{S:conclude}

In this paper, we conducted a comprehensive analysis of the parameterized complexity of \CR parameterized by $\mathsf{vcn}$. 
First, we showed that the cop number of a graph is upper bounded by $\frac{\mathsf{vcn}}{3}+1$. Second, we proved that \CR parameterized by $\mathsf{vcn}$ is \FPT by designing an exponential kernel. We complemented this result by proving that it is unlikely for \CR parameterized by $\mathsf{vcn}$ to admit a polynomial compression. We then extended these results to other variants as well as to other parameters.

To achieve our kernelization results, the rules we used concerned removing (false or true) twins from the graph. These rules are easy to implement and hence can be used to reduce the complexity of the input graph, even when the input graph is far from the considered parameters. For example, for cographs, none of the considered parameters is constant/bounded, but cographs can be reduced to a single vertex with the operation of removing twins, and hence, our reduction rules give an alternate proof that the cop number of cographs  is at most two~\cite{joret} for several variants. Moreover, \textsc{MTP} is well-studied with the motivation of designing computer games. Some examples of these variants include: multiple targets and multiple pursuer search~\cite{MTP1} with applications in controlling non-player characters in video games; \textsf{MTP} from the robber's perspective with faster cops~\cite{MTP3} where the strategies were tested on Baldur's Gate; \textsf{MTP} modeled with edge weights and different speeds of agents~\cite{app1} with the specific examples of Company of Heroes and Supreme Commander. Moreover, the PACMAN game's movement can be considered as an instance of \ACR on a partial grid. One of the key aspects of designing these games is to come up with scenarios that are solvable but look complex and challenging.  Our reduction rule can help in this regard. One can begin with an easy-to-resolve instance of \CR, and then keep adding  twins to this instance (recursively) to get an instance that looks sufficiently complex but has the same complexity.

Finally, we defined a new variant of \CR, named \textsc{Generalized CnR}, that generalizes many well-studied variants of \CR including \CFR, \ACR, \textsc{Cops and Robber From a Distance}~\cite{bonatoDistance}, and also generalizes the games of~\cite{MTP3,app1}. We showed that RR\ref{R:G} provides a kernel for \textsc{Generalized CnR} as well. This gives hope that RR\ref{R:G} can be used to get kernels for many practical variants not explicitly studied in this paper.

Still, many questions on the parameterized complexity of \CR remain open. We list some of these questions below.
\begin{question}

Does there exist an \FPT algorithm for \CR parameterized by $\mathsf{vcn}$ with running time $2^{\mathcal{O}(\mathsf{vcn})} \cdot n^{\mathcal{O}(1)}$?
\end{question}

\begin{question}
Does there exist a better bound for the cop number with respect to $\mathsf{vcn}$? In particular, is $\mathsf{c}(G) = o(\mathsf{vcn})$?

\end{question}

\begin{question}
Does \CR parameterized by $\mathsf{vcn}$ admit a polynomial $\alpha$-approximate kernel? 
\end{question}


\begin{question}
Study \CR with respect to the following parameters: (1) feedback vertex set (2) treewidth (3) treedepth. In particular, is \CR \FPT parameterized by treewidth?
\end{question}

\bibliographystyle{plainurl}

\bibliography{main}
\end{document}